\newtheorem{theorem}{Theorem}[section]
\newtheorem{lemma}[theorem]{Lemma}
\newtheorem{corollary}[theorem]{Corollary}
\newtheorem{conjecture}[theorem]{Conjecture}
\newtheorem{question}[theorem]{Question}
\newtheorem{claim}[theorem]{Claim}
\newtheorem{observation}[theorem]{Observation}
\newtheorem{proposition}[theorem]{Proposition}
\newtheorem*{theorem*}{Theorem}
\newenvironment{customthm}[1]
  {\innercustomthm}
  {\endinnercustomthm}
\theoremstyle{definition}
\newtheorem{example}[theorem]{Example}
\numberwithin{equation}{section}
\newtheorem*{remark*}{Remark}
\newcommand{\FF}{\mathcal{F}}
\newcommand{\GG}{\mathcal{G}}
\newcommand{\HH}{\mathcal{H}}
\newcommand{\NN}{\mathbb{N}}
\newcommand{\ZZ}{\mathbb{Z}}
\newcommand{\RR}{\mathbb{R}}
\newcommand{\univ}{\ensuremath{X}}
\newcommand{\grid}{\textnormal{Grd}}
\newcommand{\tbeta}{\tilde{\beta}}
\newcommand{\pth}[1]{\left(#1\right)}
\DeclareMathOperator{\stc}{sc}
\DeclareMathOperator{\supp}{supp}
\DeclareMathOperator{\bx}{box}
\DeclareMathOperator{\chain}{chn}
\newcommand{\shat}[2]{\phi^{(#1)}_{#2}}
\begin{document}

\author{Xavier Goaoc \and Andreas F. Holmsen \and Zuzana Pat\'akov\'a}

\thanks{X.G. and A.F.H. were supported by the INRIA associate teams FIP and DIPPS. A.F.H. was also supported by  the National Research Foundation of Korea (NRF) grants funded by the Ministry of Science and ICT (NRF-2020R1F1A1A01048490) and the Institute for Basic Science (IBS-R029-C1). Z.P. was supported by the Charles University project PRIMUS/21/SCI/014 and also by the GA\v CR grant no. 25-16847S}

\date{\today}

\address{Xavier Goaoc, Universit\'e de Lorraine, CNRS, INRIA, LORIA, Nancy, F-54000, France}
\email{xavier.goaoc@loria.fr}

\address{Andreas F. Holmsen, Department of Mathematical Sciences, KAIST, 
Daejeon, South Korea and Discrete Mathematics Group (DIMAG), Institute for Basic Science (IBS), Daejeon, South Korea. }
\email{andreash@kaist.edu}

\address{Zuzana Pat\'akov\'a, Department of Algebra, Faculty of Mathematics and Physics, Charles University, Sokolovsk\'a 49/83, Prague, 186 75, Czech Republic} \email{patakova@karlin.mff.cuni.cz}

\title[Intersection patterns in spaces with a forbidden homological minor]{Intersection patterns in spaces with a forbidden homological minor}

\begin{abstract}
In this paper we study generalizations of classical results on 
intersection patterns of set systems in $\mathbb{R}^d$, 
such as the fractional Helly theorem or the $(p,q)$-theorem, 
in the setting of arbitrary triangulable spaces with a forbidden homological minor. 

Given a simplicial complex $K$ and an integer $b$, 
we say that a family $\FF$ of subcomplexes of some simplicial complex 
$\univ$ is a \emph{$(K,b)$-free cover} if 
(i) $K$ is a forbidden homological minor of $\univ$, and 
(ii) the $j$th reduced Betti number $\tilde{\beta}_j(\bigcap_{S\in {\mathcal{G}}}S,\ZZ_2)$ is strictly less than $b$ for all $0\leq j < \dim K$ 
and all nonempty subfamilies $\GG\subseteq \FF$.

We show that for every $K$ and $b$, the fractional Helly number of a 
$(K,b)$-free cover is at most $\mu(K)+1$, 
where $\mu(K)$ is the maximum sum of the dimensions of two disjoint faces in~$K$. 
This implies that the  assertion of the $(p,q)$-theorem holds for every 
$p \ge q > \mu(K)$ and every $(K,b)$-free cover $\FF$. 
For $b=1$ and a suitable $K$ this recovers the original $(p,q)$-theorem and 
its generalization to good covers. 
Interestingly, our results show that  that the range of parameters $(p,q)$ 
for which the $(p,q)$-theorem holds is independent of $b$.

Our proofs use Ramsey-type arguments combined with the notion of stair 
convexity of Bukh et al. to construct (forbidden) homological minors in 
certain cubical complexes.
\end{abstract}

\keywords{Helly-type theorems, topological combinatorics, homological minors, stair convexity, cubical complexes, homological VC dimension, Ramsey-type theorems}

\maketitle

\section{Introduction}
\label{sec:intro}
Helly's theorem on the intersection of convex sets is one of the most 
well-known results of combinatorial convexity. 
Applications, generalizations, and variations of this theorem have been studied
extensively for nearly a century, and now comprise a significant area of discrete
geometry. We refer the reader 
to~\cite{DGKsurvey63, de2019discrete,Eckhoff:1993survey} 
for in-depth surveys and further references, and to the
textbooks~\cite{barany2021combinatorial,matousek2013lectures} 
for an introduction to the area. 

\subsection{Problem statement} 
\label{subsec:problem statement}
In this paper we are concerned with generalizations of Helly's theorem
that allow for more flexible intersection patterns and relax the convexity assumption.

A central result in this area is the celebrated 
\emph{$(p,q)$-theorem}~\cite{alon1992piercing}. 
It asserts that for a finite family of convex sets in $\RR^d$, 
if the family satisfies the $(p,q)$-condition -- meaning that 
\emph{among every $p$ members some $q$ intersect} -- then a 
\emph{constant} number of points suffice to intersect all the convex sets. 
The crucial feature is that the constant depends only on $p$, $q$ and $d$, 
but not on the size of the family. 
Note that Helly's theorem corresponds to the case $p=q=d+1$, 
which guarantees that a \emph{single} point intersects every set in the family. 

Two central problems in this line of research are to identify the 
weakest possible assumptions under which the classical theorems can be generalized, 
and to determine their key parameters, for instance the \emph{Helly number} 
($d+1$ for convex sets in $\RR^d$) or the range for which the $(p,q)$-theorem holds
(every $p \geq q \geq  d+1$ for convex sets in $\RR^d$).  

In this paper, we study the $(p,q)$ theorem (and other generalizations of Helly's theorem) in the broad topological setting of triangulable spaces with a  forbidden homological minor,  a notion introduced by Wagner~\cite{homMinors} as a higher-dimensional  analogue of the classical notion of graph minors~\cite{mohar}.

\subsection{Our setting}
\label{subsec:setting}
We now describe the general setting in which we establish our $(p,q)$-theorem.
Let $K$ be a finite simplicial complex. 
Throughout this paper, all complexes are assumed to be finite. 
Homology and chain complexes are taken with $\ZZ_2$-coefficients, 
that is, $C_{\bullet}(K)$ means $C_{\bullet}(K; \mathbb{Z}_2)$. 
We employ singular, simplicial, or cellular homology depending on the context.

\subsubsection*{Homological minors}\hfill\par
 
Our goal is to establish a $(p,q)$-theorem for some  general triangulable space $\univ$. 
It turns out that the key property of the space $\univ$ we utilize 
can be deduced from the simplicial complexes that, 
in a certain sense, do \emph{not} embed into $\univ$. 
For our purposes, this nonembeddability property is formalized via the concept of a forbidden homological minor, which we now define.

The \emph{support} of a chain $\sigma$, denoted $\supp(\sigma)$, 
in a simplicial complex is the set of simplices with nonzero coefficients in $\sigma$. 
We say that two chains $\sigma$ and $\tau$ have \emph{overlapping supports} 
if there exists a simplex in the support of $\sigma$ that intersects a 
simplex in the support of $\tau$; 
if no such pair of simplices exists we say that $\sigma$ and $\tau$ have 
\emph{nonoverlapping supports}. 

A chain map $f_\bullet \colon C_{\bullet}(K) \to C_{\bullet}(\univ)$ is 
\emph{nontrivial} if the image of every vertex of $K$ is a $0$-chain of $\univ$ 
supported on an odd number of vertices. 
The simplicial complex $K$ is a \emph{homological minor} of~$\univ$, 
written ${K \prec_H \univ}$, if there exists a nontrivial chain map 
$f_\bullet \colon C_{\bullet}(K) \to C_{\bullet}(\univ)$ 
such that disjoint simplices are mapped to chains with nonoverlapping supports. 
If no such chain map exists we say that $K$ is a \emph{forbidden homological minor} of $\univ$, 
and write $K \not\prec_H \univ$.  The notion of homological minor readily extends to any triangulable space: 
$K$ is a forbidden homological minor of a space $X$ if $K \not\prec_H T$ 
for every triangulation $T$ of $X$.

From now on, rather than focusing on some given space $\univ$, 
we fix a simplicial complex $K$ and 
consider the entire class of spaces that exclude $K$ as a homological minor.
This point of view requires a substitute for the 
dimension of the ambient space, which turns out to be the parameter
\[\mu(K) \coloneqq \max_{\stackrel{\sigma, \tau\in K}{\sigma\cap \tau = \emptyset}} 
\{\dim \sigma + \dim \tau\}.\] 
We note the obvious bounds $\dim K -1 \leq \mu(K) \leq 2\dim K$.

\begin{example}\label{ex:forbiddenRd}
A homological version of Radon's lemma~\cite[Lemma~15]{hb17}) asserts that $\partial \Delta_{d+1} \not\prec_H \RR^d$, that is, the boundary of the $(d+1)$-dimensional simplex is a forbidden homological minor of $\RR^d$. Note that $\mu\pth{\partial \Delta_{d+1}} = d = \dim\partial \Delta_{d+1}$. Similarly, a homological version of  the Van Kampen-Flores theorem~\cite[Corollary~14]{hb17}) asserts that $\Delta_{2k+2}^{(k)}\not\prec_H\RR^{2k}$, that is, the $k$-dimensional skeleton of the $(2k+2)$-dimensional simplex is a forbidden homological minor of $\RR^{2k}$. Note that $\mu\pth{\Delta_{2k+2}^{(k)}} = 2k = 2\dim \Delta_{2k+2}^{(k)}$.
\end{example}

\subsubsection*{$(K,b)$-free covers}\hfill\par

Since we have replaced $\mathbb{R}^d$ by a general triangulable space $\univ$ with a 
forbidden homological minor, the classical notion of a finite family 
of ``convex sets'' may no longer be applicable. Therefore, 
we need to replace and relax the notion of convexity as well.  

To this end, let $\tilde{\beta}_j(\cdot)$ denote the $j$th reduced Betti number, that is, 
the rank of the reduced homology group $\tilde{H}_j(\cdot)$. Given a finite simplicial complex $K$ 
of positive dimension and a positive integer $b$
we define a \emph{$(K,b)$-free cover} in a simplicial complex $\univ$ to be a 
finite family  $\FF$ of (not necessarily induced) subcomplexes of $\univ$ such that: 
\begin{enumerate}[(i)]
\item[(i)] $K$ is a forbidden homological minor of $\univ$ (i.e., $K\not\prec_H \univ$), and
\item[(ii)] the intersection of any nonempty subfamily has bounded complexity:
\[\tilde{\beta}_j \big( \bigcap_{S\in {\mathcal{G}}}S \big) <b\]
for all $0\leq j < \dim K$ and all $\emptyset \neq \mathcal{G}\subseteq \mathcal{F}$. 
\end{enumerate} 

\begin{example}\label{ex:coverRd}
Let $K_5$ denote the complete graph on 5 vertices, viewed as a 1-dimensional simplicial complex and let $X$ be a triangulation of a $2$-dimensional disk. As a special case of Example~\ref{ex:forbiddenRd}, we have that $K_5\not\prec_H X$.
\end{example}

\subsubsection*{Illustration: analyzing intersection patterns of planar compact convex sets}\hfill\par
Let us showcase how the intersection patterns of finite families of compact convex sets in $\mathbb{R}^2$ can be analyzed in our setting.

Consider some finite family $\FF$ of compact convex sets in $\mathbb{R}^2$. Let $\univ$ be a sufficiently fine triangulation of the one-point compactification of $\mathbb{R}^2$ with the property that any nonempty intersection of members of $\FF$ contains at least one vertex of $\univ$. For each $S\in \FF$, let $C_S$ denote the subcomplex of $\univ$ induced by the vertices contained in $S$, and define $\GG = \{C_S : S\in \FF\}$. Since the sets in $\FF$ are convex, by choosing $\univ$ sufficiently fine, we ensure that any nonempty intersection of subcomplexes in $\GG$ is \emph{contractible} (and thus connected).

\begin{figure}[h]
 \centering
\begin{tikzpicture}[scale=1]

\begin{scope}[scale = 0.8] 

\begin{scope}

\begin{scope}
\draw[teal]
(0,1) .. controls (0,1.5) and (1.5,2) ..
(2,2) .. controls (2.5,2) and (4,1.8) ..
(4,1) .. controls (4,.6) and (2,0) ..
(1,0) .. controls (0.5,0) and (0,0.5) ..
(0,1);
\fill[teal, opacity = .15]
(0,1) .. controls (0,1.5) and (1.5,2) ..
(2,2) .. controls (2.5,2) and (4,1.8) ..
(4,1) .. controls (4,.6) and (2,0) ..
(1,0) .. controls (0.5,0) and (0,0.5) ..
(0,1);
\end{scope}

\begin{scope}[rotate = -10, yshift = 0.7cm]
\draw[violet]
(3,1) .. controls (4,1) and (5,0.3) .. 
(5,-1) .. controls (5,-1.5) and (4,-2) ..  
(3,-2) .. controls (2.5,-2) and (2.2,1) ..
(3,1);
\fill[violet, opacity = .15]
(3,1) .. controls (4,1) and (5,0.3) .. 
(5,-1) .. controls (5,-1.5) and (4,-2) ..  
(3,-2) .. controls (2.5,-2) and (2.2,1) ..
(3,1);
\end{scope}

\begin{scope}[rotate = 10]
\draw[blue] 
(3,-1) .. controls (3.3,-1) and (4,-2) ..
(4,-3) .. controls (4,-3.8) and (-1,-3.6) ..
(-1,-3) .. controls (-1,-2.75) and (-0.5,-2.25) ..
(0,-2) .. controls (0.5,-1.75) and (2.5,-1) ..
(3,-1);
\fill[blue, opacity = .15] 
(3,-1) .. controls (3.3,-1) and (4,-2) ..
(4,-3) .. controls (4,-3.8) and (-1,-3.6) ..
(-1,-3) .. controls (-1,-2.75) and (-0.5,-2.25) ..
(0,-2) .. controls (0.5,-1.75) and (2.5,-1) ..
(3,-1);
\end{scope}

\begin{scope}[xshift = .6cm, yshift = -0.8cm]
\begin{scope}[scale = 1.2, rotate=55]
\draw[red]
(0,1) .. controls (0.2,1.2) and (0.7,1.3) ..
(1,1) .. controls (1.5,0.5) and (1.4,-1.6) ..
(1,-2) .. controls (0.5,-2.5) and (0.2,-2.2) ..
(0,-2) .. controls (-0.5,-1.5) and (-0.7,0.3) ..
(0,1);
 \fill[red, opacity = .15]
(0,1) .. controls (0.2,1.2) and (0.7,1.3) ..
(1,1) .. controls (1.5,0.5) and (1.4,-1.6) ..
(1,-2) .. controls (0.5,-2.5) and (0.2,-2.2) ..
(0,-2) .. controls (-0.5,-1.5) and (-0.7,0.3) ..
(0,1);
\end{scope}
\end{scope}

\begin{scope}[xshift = -0.94cm, yshift = -3.84cm]
\foreach \x in {1,...,20} {
	\foreach \y in {1,...,20} {
		\fill[black, opacity = .17] (\x *0.3, \y *0.3) circle (0.015cm) ; } }
\foreach \x in {1,...,19} {
	\foreach \y in {1,...,19} {
		\draw[black, opacity = .1, line width = .1] (\x *0.3, \y *0.3) --++ (0.3,0.3); } }		
\foreach \x in {1,...,19} {
	\foreach \y in {1,...,20} {
		\draw[black, opacity = .1, line width = .1] (\x *0.3, \y *0.3) --++ (0.3,0); } }
\foreach \x in {1,...,20} {
	\foreach \y in {1,...,19} {
		\draw[black, opacity = .1, line width = .1] (\x *0.3, \y *0.3) --++ (0,0.3); } }	
\end{scope}
\end{scope}

\begin{scope}[xshift = 8cm]
\begin{scope}[xshift = -0.94cm, yshift = -3.84cm]
\foreach \x in {1,...,20} {
	\foreach \y in {1,...,20} {
		\fill[black, opacity = .17] (\x *0.3, \y *0.3) circle (0.015cm) ; } }
\foreach \x in {1,...,19} {
	\foreach \y in {1,...,19} {
		\draw[black, opacity = .1, line width = .1] (\x *0.3, \y *0.3) --++ (0.3,0.3); } }		
\foreach \x in {1,...,19} {
	\foreach \y in {1,...,20} {
		\draw[black, opacity = .1, line width = .1] (\x *0.3, \y *0.3) --++ (0.3,0); } }
\foreach \x in {1,...,20} {
	\foreach \y in {1,...,19} {
		\draw[black, opacity = .1, line width = .1] (\x *0.3, \y *0.3) --++ (0,0.3); } }		

\foreach \x in {2,...,8} {
	\fill[blue, opacity = .2] (\x *0.3, 2 *0.3) --++(1*.3,1*.3)--++(0,-1 *.3)--cycle; }
\foreach \x in {2,...,13} {
	\fill[blue, opacity = .2] (\x *0.3, 3 *0.3) --++(1*.3,1*.3)--++(0,-1 *.3)--cycle; }
\foreach \x in {3,...,15} {
	\fill[blue, opacity = .2] (\x *0.3, 4 *0.3) --++(1*.3,1*.3)--++(0,-1 *.3)--cycle; }
\foreach \x in {3,...,16} {
	\fill[blue, opacity = .2] (\x *0.3, 5 *0.3) --++(1*.3,1*.3)--++(0,-1 *.3)--cycle; }
\foreach \x in {5,...,16} {
	\fill[blue, opacity = .2] (\x *0.3, 6 *0.3) --++(1*.3,1*.3)--++(0,-1 *.3)--cycle; }
\foreach \x in {7,...,16} {
	\fill[blue, opacity = .2] (\x *0.3, 7 *0.3) --++(1*.3,1*.3)--++(0,-1 *.3)--cycle; }
\foreach \x in {8,...,15} {
	\fill[blue, opacity = .2] (\x *0.3, 8 *0.3) --++(1*.3,1*.3)--++(0,-1 *.3)--cycle; }
\foreach \x in {10,...,14} {
	\fill[blue, opacity = .2] (\x *0.3, 9 *0.3) --++(1*.3,1*.3)--++(0,-1 *.3)--cycle; }
\foreach \x in {12,...,13} {
	\fill[blue, opacity = .2] (\x *0.3, 10 *0.3) --++(1*.3,1*.3)--++(0,-1 *.3)--cycle; }
\foreach \x in {2,...,9} {
	\fill[blue, opacity = .2] (\x *0.3, 2 *0.3) --++(1*.3,1*.3)--++(-1 *.3, 0)--cycle; }
\foreach \x in {3,...,14} {
	\fill[blue, opacity = .2] (\x *0.3, 3 *0.3) --++(1*.3,1*.3)--++(-1 *.3, 0)--cycle; }
\foreach \x in {3,...,16} {
	\fill[blue, opacity = .2] (\x *0.3, 4 *0.3) --++(1*.3,1*.3)--++(-1 *.3, 0)--cycle; }
\foreach \x in {4,...,16} {
	\fill[blue, opacity = .2] (\x *0.3, 5 *0.3) --++(1*.3,1*.3)--++(-1 *.3, 0)--cycle; }
\foreach \x in {6,...,16} {
	\fill[blue, opacity = .2] (\x *0.3, 6 *0.3) --++(1*.3,1*.3)--++(-1 *.3, 0)--cycle; }
\foreach \x in {8,...,16} {
	\fill[blue, opacity = .2] (\x *0.3, 7 *0.3) --++(1*.3,1*.3)--++(-1 *.3, 0)--cycle; }
\foreach \x in {9,...,15} {
	\fill[blue, opacity = .2] (\x *0.3, 8 *0.3) --++(1*.3,1*.3)--++(-1 *.3, 0)--cycle; }
\foreach \x in {11,...,14} {
	\fill[blue, opacity = .2] (\x *0.3, 9 *0.3) --++(1*.3,1*.3)--++(-1 *.3, 0)--cycle; }
\foreach \x in {13,...,13} {
	\fill[blue, opacity = .2] (\x *0.3, 10 *0.3) --++(1*.3,1*.3)--++(-1 *.3, 0)--cycle; }
\draw[blue, opacity =.5]
(2 *0.3, 2 *0.3)--++ (7*.3, 0 *.3)--++ (1*.3, 1 *.3) --++(4*.3, 0 *.3) --++(1*.3, 1 *.3) --++
(1*.3, 0 *.3)--++ (1*.3, 1 *.3) --++(0*.3, 3 *.3) --++(-1*.3, 0 *.3)--++ (0*.3, 1 *.3)--++
(-1*.3, 0 *.3)--++ (0*.3, 1 *.3) --++ (-1*.3, 0 *.3)--++ (0*.3, 1 *.3) --++ (-1*.3, 0 *.3) --++
(-1*.3, -1 *.3)--++ (-1*.3, 0 *.3) --++ (-1*.3, -1 *.3)--++ (-1*.3, 0 *.3) --++ (-2*.3, -2 *.3)--++
(-1*.3, 0 *.3) --++ (-1*.3, -1 *.3)--++ (-1*.3, 0 *.3) --++ (-1*.3, -1 *.3)--++ (0*.3, -1 *.3)--++
 (-1*.3, -1 *.3)-- cycle;

\foreach \y in {12,...,13} {
	\fill[red, opacity = .2] (2 *0.3, \y *0.3) --++(1*.3,1*.3)--++(0,-1 *.3)--cycle; }
\foreach \y in {10,...,14} {
	\fill[red, opacity = .2] (3 *0.3, \y *0.3) --++(1*.3,1*.3)--++(0,-1 *.3)--cycle; }
\foreach \y in {9,...,14} {
	\fill[red, opacity = .2] (4 *0.3, \y *0.3) --++(1*.3,1*.3)--++(0,-1 *.3)--cycle; }
\foreach \y in {9,...,14} {
	\fill[red, opacity = .2] (5 *0.3, \y *0.3) --++(1*.3,1*.3)--++(0,-1 *.3)--cycle; }
\foreach \y in {8,...,14} {
	\fill[red, opacity = .2] (6 *0.3, \y *0.3) --++(1*.3,1*.3)--++(0,-1 *.3)--cycle; }
\foreach \y in {7,...,13} {
	\fill[red, opacity = .2] (7 *0.3, \y *0.3) --++(1*.3,1*.3)--++(0,-1 *.3)--cycle; }
\foreach \y in {7,...,12} {
	\fill[red, opacity = .2] (8 *0.3, \y *0.3) --++(1*.3,1*.3)--++(0,-1 *.3)--cycle; }
\foreach \y in {6,...,12} {
	\fill[red, opacity = .2] (9 *0.3, \y *0.3) --++(1*.3,1*.3)--++(0,-1 *.3)--cycle; }
\foreach \y in {6,...,11} {
	\fill[red, opacity = .2] (10 *0.3, \y *0.3) --++(1*.3,1*.3)--++(0,-1 *.3)--cycle; }
\foreach \y in {6,...,10} {
	\fill[red, opacity = .2] (11 *0.3, \y *0.3) --++(1*.3,1*.3)--++(0,-1 *.3)--cycle; }
\foreach \y in {6,...,9} {
	\fill[red, opacity = .2] (12 *0.3, \y *0.3) --++(1*.3,1*.3)--++(0,-1 *.3)--cycle; }
\foreach \y in {12,...,12} {
	\fill[red, opacity = .2] (2 *0.3, \y *0.3) --++(1*.3,1*.3)--++(-1 *.3, 0)--cycle; }
\foreach \y in {10,...,14} {
	\fill[red, opacity = .2] (3 *0.3, \y *0.3) --++(1*.3,1*.3)--++(-1 *.3, 0)--cycle; }
\foreach \y in {9,...,14} {
	\fill[red, opacity = .2] (4 *0.3, \y *0.3) --++(1*.3,1*.3)--++(-1 *.3, 0)--cycle; }
\foreach \y in {9,...,14} {
	\fill[red, opacity = .2] (5 *0.3, \y *0.3) --++(1*.3,1*.3)--++(-1 *.3, 0)--cycle; }
\foreach \y in {8,...,14} {
	\fill[red, opacity = .2] (6 *0.3, \y *0.3) --++(1*.3,1*.3)--++(-1 *.3, 0)--cycle; }
\foreach \y in {7,...,13} {
	\fill[red, opacity = .2] (7 *0.3, \y *0.3) --++(1*.3,1*.3)--++(-1 *.3, 0)--cycle; }
\foreach \y in {7,...,12} {
	\fill[red, opacity = .2] (8 *0.3, \y *0.3) --++(1*.3,1*.3)--++(-1 *.3, 0)--cycle; }
\foreach \y in {6,...,12} {
	\fill[red, opacity = .2] (9 *0.3, \y *0.3) --++(1*.3,1*.3)--++(-1 *.3, 0)--cycle; }
\foreach \y in {6,...,11} {
	\fill[red, opacity = .2] (10 *0.3, \y *0.3) --++(1*.3,1*.3)--++(-1 *.3, 0)--cycle; }
\foreach \y in {6,...,10} {
	\fill[red, opacity = .2] (11 *0.3, \y *0.3) --++(1*.3,1*.3)--++(-1 *.3, 0)--cycle; }
\foreach \y in {6,...,9} {
	\fill[red, opacity = .2] (12 *0.3, \y *0.3) --++(1*.3,1*.3)--++(-1 *.3, 0)--cycle; }
\foreach \y in {7,...,7} {
	\fill[red, opacity = .2] (13 *0.3, \y *0.3) --++(1*.3,1*.3)--++(-1 *.3, 0)--cycle; }
\draw[red, opacity = .5] 
(13*.3,6*.3)--++(0*.3,1*.3)--++(1*.3,1*.3)--++(-1*.3,0*.3)--++(0*.3,2*.3)--++(-1*.3,0*.3)
--++(0*.3,1*.3)--++(-1*.3,0*.3)--++(0*.3,1*.3)--++(-1*.3,0*.3)--++(0*.3,1*.3)--++(-2*.3,0*.3)
--++(0*.3,1*.3)--++(-1*.3,0*.3)--++(0*.3,1*.3)--++(-4*.3,0*.3)--++(0*.3,-1*.3)--++(-1*.3,-1*.3)
--++(0*.3,-1*.3)--++(1*.3,0*.3)--++(0*.3,-2*.3)--++(1*.3,0*.3)--++(0*.3,-1*.3)--++(2*.3,0*.3)
--++(0*.3,-1*.3)--++(1*.3,0*.3)--++(0*.3,-1*.3)--++(2*.3,0*.3)--++(0*.3,-1*.3)--cycle;

\foreach \x in {6,...,7} {
	\fill[teal, opacity = .2] (\x *0.3, 13 *0.3) --++(1*.3,1*.3)--++(0,-1 *.3)--cycle; }
\foreach \x in {5,...,11} {
	\fill[teal, opacity = .2] (\x *0.3, 14 *0.3) --++(1*.3,1*.3)--++(0,-1 *.3)--cycle; }
\foreach \x in {4,...,14} {
	\fill[teal, opacity = .2] (\x *0.3, 15 *0.3) --++(1*.3,1*.3)--++(0,-1 *.3)--cycle; }
\foreach \x in {4,...,15} {
	\fill[teal, opacity = .2] (\x *0.3, 16 *0.3) --++(1*.3,1*.3)--++(0,-1 *.3)--cycle; }
\foreach \x in {4,...,14} {
	\fill[teal, opacity = .2] (\x *0.3, 17 *0.3) --++(1*.3,1*.3)--++(0,-1 *.3)--cycle; }
\foreach \x in {7,...,12} {
	\fill[teal, opacity = .2] (\x *0.3, 18 *0.3) --++(1*.3,1*.3)--++(0,-1 *.3)--cycle; }
\foreach \x in {6,...,8} {
	\fill[teal, opacity = .2] (\x *0.3, 13 *0.3) --++(1*.3,1*.3)--++(-1 *.3, 0)--cycle; }
\foreach \x in {5,...,12} {
	\fill[teal, opacity = .2] (\x *0.3, 14 *0.3) --++(1*.3,1*.3)--++(-1 *.3, 0)--cycle; }
\foreach \x in {4,...,15} {
	\fill[teal, opacity = .2] (\x *0.3, 15 *0.3) --++(1*.3,1*.3)--++(-1 *.3, 0)--cycle; }
\foreach \x in {4,...,15} {
	\fill[teal, opacity = .2] (\x *0.3, 16 *0.3) --++(1*.3,1*.3)--++(-1 *.3, 0)--cycle; }
\foreach \x in {5,...,14} {
	\fill[teal, opacity = .2] (\x *0.3, 17 *0.3) --++(1*.3,1*.3)--++(-1 *.3, 0)--cycle; }
\foreach \x in {8,...,12} {
	\fill[teal, opacity = .2] (\x *0.3, 18 *0.3) --++(1*.3,1*.3)--++(-1 *.3, 0)--cycle; }
\draw[teal, opacity = .5]
(6 *0.3, 13 *0.3) --++ (2 * 0.3, 0 * 0.3) --++ (1 * 0.3, 1 * 0.3) --++ (3 * 0.3, 0 * 0.3) --++
(1 * 0.3, 1 * 0.3) --++ (2 * 0.3, 0 * 0.3) --++ (1 * 0.3, 1 * 0.3) --++ (0 * 0.3, 1 * 0.3) --++
(-1 * 0.3, 0 * 0.3) --++ (0 * 0.3, 1 * 0.3) --++ (-2 * 0.3, 0 * 0.3) --++ (0 * 0.3, 1 * 0.3) --++
(-5 * 0.3, 0 * 0.3) --++ (-1 * 0.3, -1 * 0.3) --++ (-2 * 0.3, 0 * 0.3) --++ (-1 * 0.3, -1 * 0.3)--++
(0 * 0.3, -2* 0.3) --++ (1 * 0.3, 0 * 0.3)--++ (0 * 0.3, -1* 0.3) --++ (1 * 0.3, 0 * 0.3) -- cycle;

\foreach \x in {12,...,15} {
	\fill[violet, opacity = .2] (\x *0.3, 7 *0.3) --++(1*.3,1*.3)--++(0,-1 *.3)--cycle; }
\foreach \x in {12,...,17} {
	\fill[violet, opacity = .2] (\x *0.3, 8 *0.3) --++(1*.3,1*.3)--++(0,-1 *.3)--cycle; }
\foreach \x in {12,...,18} {
	\fill[violet, opacity = .2] (\x *0.3, 9 *0.3) --++(1*.3,1*.3)--++(0,-1 *.3)--cycle; }
\foreach \x in {12,...,18} {
	\fill[violet, opacity = .2] (\x *0.3, 10 *0.3) --++(1*.3,1*.3)--++(0,-1 *.3)--cycle; }
\foreach \x in {12,...,18} {
	\fill[violet, opacity = .2] (\x *0.3, 11 *0.3) --++(1*.3,1*.3)--++(0,-1 *.3)--cycle; }
\foreach \x in {12,...,17} {
	\fill[violet, opacity = .2] (\x *0.3, 12 *0.3) --++(1*.3,1*.3)--++(0,-1 *.3)--cycle; }
\foreach \x in {12,...,17} {
	\fill[violet, opacity = .2] (\x *0.3, 13 *0.3) --++(1*.3,1*.3)--++(0,-1 *.3)--cycle; }
\foreach \x in {12,...,16} {
	\fill[violet, opacity = .2] (\x *0.3, 14 *0.3) --++(1*.3,1*.3)--++(0,-1 *.3)--cycle; }
\foreach \x in {13,...,14} {
	\fill[violet, opacity = .2] (\x *0.3, 15 *0.3) --++(1*.3,1*.3)--++(0,-1 *.3)--cycle; }
\foreach \x in {12,...,16} {
	\fill[violet, opacity = .2] (\x *0.3, 7 *0.3) --++(1*.3,1*.3)--++(-1 *.3, 0)--cycle; }
\foreach \x in {12,...,18} {
	\fill[violet, opacity = .2] (\x *0.3, 8 *0.3) --++(1*.3,1*.3)--++(-1 *.3, 0)--cycle; }
\foreach \x in {12,...,18} {
	\fill[violet, opacity = .2] (\x *0.3, 9 *0.3) --++(1*.3,1*.3)--++(-1 *.3, 0)--cycle; }
\foreach \x in {12,...,18} {
	\fill[violet, opacity = .2] (\x *0.3, 10 *0.3) --++(1*.3,1*.3)--++(-1 *.3, 0)--cycle; }
\foreach \x in {12,...,18} {
	\fill[violet, opacity = .2] (\x *0.3, 11 *0.3) --++(1*.3,1*.3)--++(-1 *.3, 0)--cycle; }
\foreach \x in {12,...,17} {
	\fill[violet, opacity = .2] (\x *0.3, 12 *0.3) --++(1*.3,1*.3)--++(-1 *.3, 0)--cycle; }
\foreach \x in {12,...,17} {
	\fill[violet, opacity = .2] (\x *0.3, 13 *0.3) --++(1*.3,1*.3)--++(-1 *.3, 0)--cycle; }
\foreach \x in {13,...,16} {
	\fill[violet, opacity = .2] (\x *0.3, 14 *0.3) --++(1*.3,1*.3)--++(-1 *.3, 0)--cycle; }
\foreach \x in {13,...,14} {
	\fill[violet, opacity = .2] (\x *0.3, 15 *0.3) --++(1*.3,1*.3)--++(-1 *.3, 0)--cycle; }
\draw[violet, opacity =.5] 
(12 * 0.3, 7 *0.3) --++ (4 * 0.3,0 * 0.3) --++ (1 * 0.3,1 * 0.3) --++ (1 * 0.3,0 * 0.3) --++ 
(1 * 0.3,1 * 0.3) --++ (0 * 0.3,3 * 0.3) --++ (-1 * 0.3, 0 * 0.3) --++ (0 * 0.3,2 * 0.3) --++ 
(-1 * 0.3, 0 * 0.3) --++ (0 * 0.3, 1 * 0.3) --++ (-2 * 0.3,0 * 0.3) --++ (0 * 0.3, 1 * 0.3) --++
(-2 * 0.3,0 * 0.3) --++ (0 * 0.3, -1 * 0.3) --++ (-1 * 0.3,-1 * 0.3)--cycle;
\end{scope}
\end{scope}
\end{scope}
\end{tikzpicture}
\caption{On the left: A family of compact convex sets in $\mathbb{R}^2$. On the right: The corresponding $(K_5,1)$-free cover with equivalent intersection pattern}
\end{figure}

As discussed in Examples~\ref{ex:forbiddenRd} and~\ref{ex:coverRd}, we have $K_5\not\prec_H \univ$. Moreover, for every nonempty subfamily $\mathcal{G'}\subset \mathcal{G}$ and every $0\leq j < \dim K_5 = 1$, we have  \[\tilde{\beta}_j\big( \bigcap_{S \in \GG'} S \big) = 0  < 1.\]
Thus $\GG$ is a $(K_5, 1)$-free cover. Moreover, $\GG$ has the same intersection pattern as $\FF$ in the sense that for every integer $k \ge 1$ and every subfamily $\{S_1,S_2, \ldots, S_k\} \subseteq \FF$, we have
\[ \bigcap_{i=1}^k S_i  \neq \emptyset \qquad \Leftrightarrow \qquad \bigcap_{i=1}^k C_{S_i}  \neq \emptyset.\]

\begin{example} Every finite family of compact convex sets in $\RR^d$ is a $(\partial\Delta_{d+1},1)$-free cover. Indeed, by mimicking the previous construction, we obtain a triangulation of the one-point compactification of $\RR^d$ that preserves intersection patterns; the rest follows from Example~\ref{ex:forbiddenRd}. More generally,  every finite \emph{good cover} -- that is, a finite family of open sets in $\RR^d$ for which every nonempty intersection is contractible -- is also a $(\partial\Delta_{d+1},1)$-free cover. Similarly,  Example~\ref{ex:forbiddenRd} implies that every good cover in $\RR^{2k}$ is also a $(\Delta_{2k+2}^{(k)}, 1)$-free cover.\end{example}

\subsection{Main results}
\label{subsec:main results}
It is known that the Helly number of a $(K,b)$-free cover is bounded from above in 
terms of $K$ and~$b$~\cite{hb17}, as is the Radon 
number~\cite[Proposition 3.7]{rb20}.\footnote{The bound on the Helly number of a 
$(K,b)$-free cover directly follows from  a combination of Proposition 30 and 
Lemma 26 in~\cite{hb17}.} 
This implies that $(K,b)$-free covers also satisfy a fractional Helly 
theorem~\cite{boundedRadon_fractHelly} and a $(p,q)$-theorem~\cite{AKMM}. 

However, this approach yields a $(p,q)$-theorem, but only for $p\geq q \geq m$, where for $\dim K >1$, the threshold 
$m$ is an extremely large constant derived from successive iterations of Ramsey's 
theorem~\cite{rb20}.\footnote{For $\dim K = 1$, 
$m$ is in fact linear in $b$ and the bound is optimal for $b=1$~\cite{rb20}.} 
Our main result improves this significantly:

\begin{theorem}[General $(p,q)$-theorem]\label{t:pq for Kbfree}
For every finite simplicial complex $K$ and positive integer $b$, the $(p,q)$-theorem holds for the class of $(K,b)$-free covers for all $p\geq q > \mu(K)$.
\end{theorem}

\noindent 
Let us make four remarks:

\begin{itemize}
\item It should be clear from the discussion of Section~\ref{subsec:setting} that for $K$ the boundary of the $(d+1)$-simplex and~$b=1$, Theorem~\ref{t:pq for Kbfree} already contains the classical $(p,q)$ theorem for convex sets in $\RR^d$. In particular, in that special case the range of pairs $(p,q)$ is sharp.

\item Theorem~\ref{t:pq for Kbfree} implies that for every $p\geq q \geq 3$ and $b\geq 1$, the assertion of the $(p,q)$-theorem holds for any finite family of sets in the plane, such that the intersection of every subfamily has at most $b$ path-connected components (it is a $(K_5,b)$-free cover). This was known before only for open sets and $b=1$ while for $b \geq 2$ one had to assume that $q \geq 2b+2$ \cite{rb20}.

\item The bound on the \emph{piercing number} given by Theorem \ref{t:pq for Kbfree} depends on $p$, $q$, $K$ and $b$, but is independent of the size of the cover.  

\item While the Helly number of a $(K,b)$-free cover can grow with $b$ 
(it is at least $b(\mu(K)+2)$~\cite[Example 2]{hb17}), 
the range for which the $(p,q)$-theorem holds is independent of~$b$. 
This is similar to the gap observed for \emph{convex lattice sets} (sets of the form $C\cap \ZZ^d$ where $C$ is a convex set in $\RR^d$), which have Helly number $2^d$  \cite{doignon1973, scarf1977} and satisfies a $(p,q)$-theorem for every $p \geq q \geq  d+1$~\cite{baranymatousek}. 

\end{itemize}

We now describe our main technical result, which serves as the cornerstone of 
the proof of Theorem \ref{t:pq for Kbfree}. 
For a finite family $\FF$ of subsets of some 
(finite or infinite) ground set, let $\pi_m(\FF)$ denote the number of $m$-element
subfamilies of $\FF$ with nonempty intersection. 

A straightforward counting argument
shows that ``positive density propagates downwards'' in the sense that if $\pi_m(\FF)
\geq \alpha\binom{|\FF|}{m}$, then $\pi_{m-1}(\FF)\geq \alpha \binom{|\FF|}{m-1}$. 
(A sharper bound follows from the Kruskal--Katona theorem.) 
We show that, in essence, positive density propagates \emph{upwards} for 
$(K,b)$-free covers:

\begin{theorem}[Stepping-up theorem]\label{t:basic_propogation}
Fix a simplicial complex $K$, a real number $\delta \in (0,1]$, and integers 
$b\geq 1$ and $m > \mu(K)$. If $\mathcal{F}$ is a sufficiently large $(K,b)$-free 
cover such that $\pi_m(\mathcal{F})\geq \delta \binom{ | \mathcal{F} | }{m}$, 
then $\pi_{m+1}(\FF)\geq \gamma\binom{ | \mathcal{F} |}{m+1}$, where $\gamma > 0$ 
is a constant depending only on $\delta$, $b$, $m$ and $K$. 
\end{theorem}

One immediate application of Theorem~\ref{t:basic_propogation} is the reduction of 
\emph{fractional Helly numbers}. For instance, it easily improves a theorem of 
Pat\'akov\'a~\cite[Theorem~2.3]{rb20} into:\footnote{\cite[Theorem~2.3]{rb20} was not 
phrased in terms of $(K,b)$-free covers but readily generalizes to that setting.
}

\begin{theorem}[Fractional Helly theorem]\label{t:fh}
Let $K$ be a finite simplicial complex and $b \geq 1$ an integer. If a positive fraction
of the $(\mu(K)+1)$-tuples of a $(K,b)$-free cover $\FF$ have nonempty intersection, 
then there is a subfamily with nonempty intersection containing a positive fraction of the members of $\FF$.
\end{theorem}

\noindent
In the terminology of combinatorial convexity, Theorem~\ref{t:fh} states that the 
\emph{fractional Helly number} for $(K,b)$-free covers is at most $\mu(K)+1$. 
Specifically, for any $\alpha > 0$, if a $(K,b)$-free cover $\FF$ satisfies 
$\pi_{\mu(K)+1}(\FF) \geq \alpha \binom{ | \mathcal{F} |}{\mu(K)+1}$, 
then there exists a subfamily $\GG \subset \FF$ of size at 
least $\beta |\FF|$ with nonempty intersection, where $\beta>0$ depends only on 
$K$, $b$, and $\alpha$. 

Theorem~\ref{t:basic_propogation} generalizes a classical observation for convex sets. Theorems~\ref{t:pq for Kbfree} and~\ref{t:fh} generalize and unify a number of results in the area. See Section~\ref{s:outlook} for further discussion.

\section{Proof outline}
\label{sec:proof outline}

We now outline the proof of our main results and introduce some necessary terminology. We write $\NN = \{1,2,\ldots\}$ for the set of positive integers and 
$\NN_0 = \{0,1,\ldots\}$ for the set of non-negative integers.  We write 
$[n] = \{1,2,\ldots, n\}$ and $\binom{[n]}{k}$ for the set of $k$-element 
subsets of~$[n]$. For a simplical (or cell) complex $K$, we denote by $K^{(t)}$ 
the $t$-skeleton of $K$.

\subsection{Improving fractional Helly and $(p,q)$-theorems using stepping-up}

Let us first explain why the stepping-up Theorem~\ref{t:basic_propogation} can be easily combined with the existing literature to yield Theorems~\ref{t:pq for Kbfree} and~\ref{t:fh}.

\subsubsection*{Improved fractional Helly from stepping-up}\hfill\par

A fractional Helly theorem for $(K,b)$-free covers was previously established by the third author~\cite{rb20} with a worse fractional Helly number: the assumption is that a positive fraction
of the $m$-tuples have nonempty intersection, where for $\dim K>1$, $m$ is a hypergraph Ramsey number depending on~$b$ and~$K$. 

To prove Theorem~\ref{t:fh} it therefore suffices to show that if a positive fraction
of the $(\mu(K)+1)$-tuples intersect, then a positive fraction of the $m$-tuples 
intersect. This follows from successive applications of 
Theorem~\ref{t:basic_propogation}. 
(Note that the existence of a finite $m$ established in~\cite[Theorem 2.3]{rb20} 
is essential to ensure this process terminates;
additionally, the implicit bound on $\beta$ derived  
from~\cite[Theorem~2.3]{rb20} changes during this process.)

\subsubsection*{Improved $(p,q)$-theorem from improved fractional Helly}\hfill\par

Theorem~\ref{t:pq for Kbfree} is an immediate consequence of Theorem~\ref{t:fh} and 
the results of Alon et al.~\cite{AKMM}. In essence, Alon et al. show that if an
\emph{intersection closed} set system has fractional Helly number at most $m$, then the
set system also satisfies a $(p,q)$-theorem for the range $p \geq q \geq m$ 
(see ~Theorems~8(i) and~9 and the discussion in $\mathsection 2.1$ in~\cite{AKMM}).
Thus, Theorem~\ref{t:pq for Kbfree} follows by applying Theorem~\ref{t:fh} 
to the intersection closure $\FF^\cap \coloneqq \{ \cap_{S \in \GG} S \colon \GG 
\subset \FF\}$, observing that $\FF^\cap$ is a $(K,b)$-free cover if and only if 
$\FF$ is.

\subsection{Proof outline for the stepping up theorem: reduction to a weak colorful Helly}
\label{subsec:strategy}

Let us now turn our attention to our main technical contribution, the stepping-up Theorem~\ref{t:basic_propogation}. 
At a high-level, it is a consequence of a 
suitable \emph{colorful Helly theorem}, combined with \emph{ hypergraph supersaturation}. 
This approach has been successfully applied to various problems in combinatorial 
geometry, see e.g. \cite{abfk1992}, \cite[chapters 23, 29, 33]{barany2021combinatorial},
\cite{baranymatousek},  \cite[chapter 9]{matousek2013lectures}. We now present its main ingredients.

\subsubsection*{A hypergraph supersaturation argument}\hfill\par

For a more detailed explanation, it is convenient to reformulate the problem in 
terms of hypergraphs. Given a $(K,b)$-free cover $\FF$, we consider a 
hypergraph $\HH$ whose vertices correspond to the members of $\FF$ and whose edges 
correspond to the intersecting $m$-tuples in $\FF$. This makes $\HH$ a \emph{ dense} 
$m$-uniform hypergraph on $|\FF|$ vertices and  $\pi_m(\FF) > \delta\binom{|\FF|}{m}$ 
edges. The Erd\H{o}s--Simonovits (supersaturation) theorem \cite{Erdos-Simonovits} 
implies that there exist at least $\rho\binom{|\FF|}{mt}$ distinct vertex subsets of size
$mt$ that each span a \emph{ complete $m$-partite} subhypergraph with parts of size
$t$. Here $t$ is any fixed positive integer, and crucially, the fraction $\rho > 0$ 
depends only on $m$, $t$, and $\delta$. In other words, the positive edge density of 
$\HH$ ensures a positive density of complete $m$-partite subhypergraphs with parts of 
size $t$. 

Returning to the $(K,b)$-free cover $\FF$, each complete $m$-partite subhypergraph in
$\HH$ corresponds to a collection of disjoint subfamilies $\FF_1, \dots, \FF_m \subset \FF$, with $|\FF_i| = t$, such that $S_1\cap \cdots \cap S_m\neq \emptyset$ for every choice $S_i\in \FF_i$. This is a familiar intersection pattern originating in the \emph{colorful Helly theorem}~\cite{baranys-caratheodory}.

\subsubsection*{Colored $(K,b)$-free covers}\hfill\par

This leads us to introduce a colorful analogue of $(K,b)$-free covers. 
Given a collection of nonempty set systems $\FF_1, \FF_2, \ldots, \FF_m$, 
a subfamily $\GG \subset \bigcup_{i=1}^m \FF_i$ is called 
\emph{colorful} if $\GG$ contains at most one member from each $\FF_i$. 
We define an \emph{$m$-colored $(K,b)$-free cover} in a simplicial complex $\univ$ 
to be a finite family  $\FF = \FF_1 \sqcup \FF_2 \sqcup \cdots \sqcup \FF_m$ of 
subcomplexes of $\univ$ such that:
\begin{enumerate}[(i)]
\item[(i)] $K$ is a forbidden homological minor of $\univ$, and 
\item[(ii)]  $\tilde{\beta}_j(\bigcap_{S \in 
\GG}S) < b$  for all $0\leq j < \dim(K)$ and all nonempty colorful subfamilies $\GG 
\subseteq \FF$. 
 \end{enumerate}

Returning to the relation between the family $\FF$ and the $m$-regular hypergraph $\HH$, every subset of $mt$ vertices that spans a complete $m$-partite subhypergraph with parts of size $t$ corresponds to a subfamily of $\FF$ that can be colored to form an $m$-colored $(K,b)$-free cover where each color class has size~$t$ and every colorful subfamily has nonempty intersection. We prove that each such family has the following property:

\begin{theorem}[weak colorful Helly theorem]\label{t:weak colors}
For any  simplicial complex $K$ and integers $b\geq 1$ and $m> \mu(K)$, 
there exists an integer $t = t(b,K,m)$ with the following property: 
If $\FF = \FF_1 \sqcup \cdots \sqcup \FF_m$ is an $m$-colored $(K,b)$-free cover 
where each color class has size~$t$ 
and every colorful subfamily has nonempty intersection, then $\FF$ contains some
$2m-\mu(K)$ members with nonempty intersection. 
\end{theorem}

Note that the definition of colored $(K,b)$-free cover does not impose any condition on the non-colorful subfamilies, as this is not needed for the proof of Theorem~\ref{t:weak colors}. 

\subsubsection*{Stepping up from weak colorful Helly and supersaturation}\hfill\par

Our main result, Theorem~\ref{t:basic_propogation}, now follows from 
Theorem~\ref{t:weak colors} via double counting and supersaturation. 
Specifically, let $\FF$ be a  $(K,b)$-free cover with at least 
$\delta \binom{|\FF|}{m}$ intersecting $m$-tuples, where $m > \mu(K)$, 
and let $t = t(b,K,m)$ be the integer provided by Theorem \ref{t:weak colors}. 
By the supersaturation argument outlined above, there are at least  
$\rho \binom{|\FF|}{mt}$ $m$-colored subfamilies of $\FF$ that satisfy the 
hypothesis of Theorem \ref{t:weak colors}. 

For each such $m$-colored subfamily, we apply Theorem \ref{t:weak colors} 
to find an intersecting subfamily of size $2m-\mu(K)$. Since $2m-\mu(K) \geq m+1$, 
this implies the existence of an intersecting $(m+1)$-tuple within that subfamily. 
By summing over all such subfamilies and correcting for the number of times a given 
intersecting $(m+1)$-tuple is counted, we conclude that there are at least 
$\gamma \binom{|\FF|}{m+1}$ intersecting $(m+1)$-tuples, 
where $\gamma>0$ depends only on $\delta$, $b$, $m$, and $K$. 

The full details of this argument are provided in Section~\ref{s:wrapup}.

\subsection{Proof outline for the weak colorful Helly theorem}

The proof of Theorem~\ref{t:weak colors} is quite involved and builds on the method 
of \emph{constrained chain maps} developed in~\cite{hb17,rb20}.
A major part of this paper, specifically Sections~\ref{s:sc vkf}, \ref{s:cip homin} and~\ref{s:hominor in grid}, 
is devoted to adapt this machinery to handle the $m$-partite structure of 
colorful intersection patterns. Here we introduce the key ideas on a simple example, then give some indication on how these ideas can be extended to the general case.

\subsubsection*{The example}\hfill\par
Let us consider a simplified situation 
where we are given a 3-colored family $\FF = \FF_1 \sqcup \FF_2 \sqcup \FF_3$ 
of open connected sets in $\mathbb{R}^2$ such that the intersection 
of any colorful subfamily is nonempty and connected. 
We show that if $|\FF_i| =  5$ for every $1\leq i \leq 3$, 
then there exists four members in $\FF$ with nonempty intersection. 

\subsubsection*{Encoding the family in a graph}\hfill\par

The colorful subfamilies of $\FF$ are encoded by a graph $G = (V,E)$, 
defined on the vertex set $V = \{(a_1,a_2,a_3)\in \mathbb{N}^3 : 1\leq a_i \leq 5\}$. 
Vertices $u$ and $v$ are connected by an edge, $e = uv$, 
whenever their Euclidean distance satisfies $|u-v|=1$. 
It is convenient to view $G$ as a 1-dimensional complex realized 
in $\mathbb{R}^3$ with its vertices on the integer lattice, 
and edges parallel to the coordinate axes, 
forming a 3-dimensional grid (see figure \ref{fig:proofsketch}, left).  

We introduce a map $\psi$ that assigns a colorful subfamily of $\FF$ 
to every cell (vertex or edge) in $G$. 
First, we label the sets in $\FF$ such that 
$\FF_j = \{ S_{1,j}, \dots, S_{5,j}\}$ for $1\leq j \leq 3$.  
The mapping $\psi$ is then defined as follows:
\begin{itemize}
    \item[] {\bf For a vertex} $v = (a,b,c)$, $\psi$ selects one set from each family: $\psi(v) = \{S_{a,1}, S_{b,2}, S_{c,3}\}$.
    \item[] {\bf For an edge} $uv$, $\psi$ assigns the intersection of the 
    endpoint values: $\psi(uv) = \psi(u) \cap \psi(v)$.
\end{itemize}
Note that $\psi$ establishes a bijection between the vertices of $G$ 
and the maximal colorful subfamilies of $\FF$, and for any edge $e$, we have $|\psi(e)| = 2$. 
Furthermore, for any two edges $e$ and $e'$, 
the size $|\psi(e) \cup \psi(e')|$ is determined by 
the relative geometric alignment of $e$ and $e'$ in the grid: 
\begin{itemize}
\item[] If $e$ and $e'$ are contained in a \emph{common axis-parallel line}, 
then $\psi(e)  = \psi(e')$, and so $|\psi(e) \cup \psi(e')| = 2$. 

\smallskip
\item[] If $e$ and $e'$ are contained in a \emph{common axis-parallel plane}, 
but not in a common axis-parallel line, then $\psi(e)$ and $\psi(e')$ 
have exactly one member in common, and so $|\psi(e) \cup \psi(e')| = 3$. 
\end{itemize}
Most important for us is the following property:
\begin{itemize}
\item[] \emph{If $e$ and $e'$ are not contained in any common axis-parallel plane, 
then $\psi(e)$ and $\psi(e')$ are disjoint, and so $|\psi(e) \cup \psi(e')| = 4$.}
\end{itemize}

\subsubsection*{(Chain) maps}\hfill\par

To prove that some four members of $\FF$ intersect, we construct maps relating $K_5$, $G$, and $\RR^2$. In this simple example, these could be (PL) maps, but it turns out to be convenient in the general case to work with chain maps. We thus use chain maps also for this simple example.

\[
\begin{tikzcd}
  & C_\bullet(G) \arrow[dr, "f_\bullet"] \\
C_\bullet(K_5) \arrow[ur, "g_\bullet"] \arrow[rr, "h_\bullet"] & & C_\bullet(\mathbb{R}^2) 
\end{tikzcd}
\]

\subsubsection*{The first chain map}\hfill\par

The first is a map $f_\bullet \colon C_\bullet(G) \to C_\bullet(\mathbb{R}^2)$, 
defined as follows:
\begin{itemize}
    \item[] {\bf On vertices (0-chains):} For a vertex $v$ in $G$ we select a representative point $p_v \in \RR^2$ contained in the common intersection of the sets in $\psi(v)$. We set $f_0(v) = p_v$.

    \smallskip
    
    \item[] {\bf On edges (1-chains):} Consider an edge $uv$ in $G$. The points $p_u$ and $p_v$ are both contained in the intersection of the subfamily $\psi(uv)$. Since this intersection is assumed to be connected, there exists a path $\gamma_{uv}$ connecting $p_u$ to $p_v$ entirely within $\bigcap_{S\in \psi(uv)}S$. We set $f_1(uv) = \gamma_{uv}$.
\end{itemize}
The construction of $f_\bullet$ is completed by extending linearly to arbitrary chains.
A key feature of this construction is that for every cell $\sigma \in G$, 
the image $f_\bullet(\sigma)$ is contained within the intersection 
$\bigcap_{S\in \psi(\sigma)} S$. 
This containment property is what we refer to when we say that $f_\bullet$ 
is \emph{constrained} by the colorful family $\FF$ (see Figure \ref{fig:proofsketch}).

\subsubsection*{The second chain map}\hfill\par

We now define the second chain map $g_\bullet \colon K_5 \to G$, which sends the 
vertices $v_1, \dots, v_5$ of $K_5$ to the vertices $(1,1,1), \dots, (5,5,5)$ 
along the main diagonal in the geometric representation of $G$. 
Each edge $v_iv_j$ in $K_5$ is mapped to a specific path in $G$ connecting the 
corresponding diagonal vertices. These paths are constructed to satisfy the following 
crucial geometric property: 
\begin{itemize}
\item[] \emph{For any two disjoint edges $e$ and $e'$ in $K_5$,  
if an axis-parallel plane $P$ contains an edge $\tau$ from the path $g_1(e)$, 
then $P$ does not contain any edge $\tau'$ from the path $g_1(e')$.}      
\end{itemize}
We omit the explicit construction of these paths for the moment, 
referring instead to Figure \ref{fig:proofsketch}.

\subsubsection*{The new intersection appears}\hfill\par

To find four intersecting members in $\FF$, we compose $f_\bullet$ with $g_\bullet$ 
to obtain the chain map $h_\bullet \colon C_\bullet(K_5) \to C_\bullet(\mathbb{R}^2)$ as in the diagram above. Because $K_5$ is a forbidden homological minor of $\mathbb{R}^2$, $h_\bullet$ must result in two disjoint 
edges $e$ and $e'$ in $K_5$ whose images $h_1(e)$ and $h_1(e')$ overlap. 
Since $h_\bullet$ factors through $C_\bullet(G)$, there must exist specific edges 
$\tau$ from the path $g_1(e)$ and $\tau'$ from the path $g_1(e')$ whose images 
$f_1(\tau)$ and $f_1(\tau')$ intersect at some point $p\in \mathbb{R}^2$. 
Recalling that $f_\bullet$ is constrained by $\FF$, the point $p$ must lie in the 
intersection of all the sets in the subfamily  $\psi(\tau) \cup \psi(\tau')$. By the geometric property highlighted above, these grid edges $\tau$ and $\tau'$ do not lie in any common axis-parallel plane; thus $|\psi(\tau) \cup \psi(\tau')| = 4$. Therefore $p$ is contained in the intersection of four distinct members of $\FF$.

\begin{figure}
\centering
\begin{tikzpicture}

\begin{scope}[scale=.8] 

\begin{scope}
\fill[red, opacity =0.2] (8*0.3, 8*0.3) --++ (0*.3,-1*.3)--++ (1*.3,0*.3) --++ (0*.3,-1*.3)--++ (5*.3,0*.3) --++ (0*.3,-1*.3)--++ (1*.3,0*.3) --++ (0*.3,-1*.3)--++ (1*.3,0*.3) --++ (0*0.3, 2*0.3) --++ (1*0.3, 1*0.3) --++ (0*0.3, 2*0.3) --++ (1*0.3, 1*0.3) --++ (0*0.3, 6*0.3) --++ (-3*0.3, 0*0.3) --++ (0*0.3, 1*0.3) --++ (-1*0.3, 0*0.3) --++ (0*0.3, 1*0.3) --++ (-1*0.3, 0*0.3) --++ (-5*0.3, -5*0.3) --cycle;
\fill[white](11*0.3,11*0.3) --++ (3*.3,0*.3) --++ (0*0.3, -2*0.3) --++ (2*.3,0*.3) --++ (0*0.3, 6*0.3) --++ (-1*.3,0*.3) --cycle;
\end{scope}

\begin{scope} 
\foreach \x in {1,...,20} {
\foreach \y in {1,...,20} {
\fill[black, opacity = .17] (\x *0.3, \y *0.3) circle (0.015cm) ; } }
\foreach \x in {1,...,19} {
\foreach \y in {1,...,19} {
\draw[black, opacity = .1, line width = .1] (\x *0.3, \y *0.3) --++ (0.3,0.3); } }		
\foreach \x in {1,...,19} {
\foreach \y in {1,...,20} {
\draw[black, opacity = .1, line width = .1] (\x *0.3, \y *0.3) --++ (0.3,0); } }
\foreach \x in {1,...,20} {
\foreach \y in {1,...,19} {
\draw[black, opacity = .1, line width = .1] (\x *0.3, \y *0.3) --++ (0,0.3); } }	
\end{scope}

\begin{scope} 
\draw[teal, opacity =0.5] (4*0.3, 5*0.3) --++ (5*.3,0*.3) --++ (2*.3,2*.3)
--++ (0*.3,2*.3) --++ (-1*.3,0*.3) --++ (0*.3,3*.3) --++ (1*.3,1*.3) --++ (0*.3,1*.3)
--++ (-1*.3,0*.3)--++ (0*.3,1*.3)--++ (-1*.3,0*.3)--++ (0*.3,1*.3) 
--++ (-2*.3,0*.3) --++ (-2*.3,-2*.3) --++ (-1*.3,0*.3) -- cycle;
\fill[teal, opacity =0.2] (4*0.3, 5*0.3) --++ (5*.3,0*.3) --++ (2*.3,2*.3)
--++ (0*.3,2*.3) --++ (-1*.3,0*.3) --++ (0*.3,3*.3) --++ (1*.3,1*.3) --++ (0*.3,1*.3)
--++ (-1*.3,0*.3)--++ (0*.3,1*.3)--++ (-1*.3,0*.3)--++ (0*.3,1*.3) 
--++ (-2*.3,0*.3) --++ (-2*.3,-2*.3) --++ (-1*.3,0*.3) -- cycle;

\draw[red, opacity =0.5] (8*0.3, 8*0.3) --++ (0*.3,-1*.3)--++ (1*.3,0*.3) --++ (0*.3,-1*.3)--++ (5*.3,0*.3) --++ (0*.3,-1*.3)--++ (1*.3,0*.3) --++ (0*.3,-1*.3)--++ (1*.3,0*.3) --++ (0*0.3, 2*0.3) --++ (1*0.3, 1*0.3) --++ (0*0.3, 2*0.3) --++ 
(1*0.3, 1*0.3) --++ (0*0.3, 6*0.3) --++ (-3*0.3, 0*0.3) --++ (0*0.3, 1*0.3) --++ 
(-1*0.3, 0*0.3) --++ (0*0.3, 1*0.3) --++ (-1*0.3, 0*0.3) --++ (-5*0.3, -5*0.3) --cycle 
(11*0.3,11*0.3) --++ (3*.3,0*.3) --++ (0*0.3, -2*0.3)
--++ (2*.3,0*.3) --++ (0*0.3, 6*0.3) --++ (-1*.3,0*.3)
--cycle;

\draw[blue, opacity =0.5] (7*0.3, 3*0.3) --++ (0*0.3, 3*0.3) --++ (4*0.3, 4*0.3)  
--++ (4*0.3, 0*0.3) --++ (0*0.3, -3*0.3) --++ (-2*0.3, 0*0.3) --++ (-4*0.3, -4*0.3) --cycle;
\fill[blue, opacity =0.2] (7*0.3, 3*0.3) --++ (0*0.3, 3*0.3) --++ (4*0.3, 4*0.3)  
--++ (4*0.3, 0*0.3) --++ (0*0.3, -3*0.3) --++ (-2*0.3, 0*0.3) --++ (-4*0.3, -4*0.3) --cycle;

\draw[blue, opacity =0.5] (7*0.3,13*0.3) --++(2*0.3,0*0.3) --++(0*0.3,-1*0.3) --++(4*0.3,0*0.3) --++(0*0.3, 3*0.3) --++(3*0.3,3*0.3) --++(-1*0.3,0*0.3) --++(0*0.3,1*0.3) --++(-2*0.3,0*0.3) --++(-2*0.3,-2*0.3) --++(-2*0.3,0*0.3) --++(-2*0.3,-2*0.3) --cycle;
\fill[blue, opacity =0.2] (7*0.3,13*0.3) --++(2*0.3,0*0.3) --++(0*0.3,-1*0.3) --++(4*0.3,0*0.3) --++(0*0.3, 3*0.3) --++(3*0.3,3*0.3) --++(-1*0.3,0*0.3) --++(0*0.3,1*0.3) --++(-2*0.3,0*0.3) --++(-2*0.3,-2*0.3) --++(-2*0.3,0*0.3) --++(-2*0.3,-2*0.3) --cycle;


\draw[opacity = .7, dash pattern=on 0.7pt off 0.7pt, line cap=rect, line width=0.44pt] (10 *0.3, 7 *0.3) --++ (0 *0.3, 1 *0.3) --++ (-1 *0.3, 0 *0.3) --++ (0 *0.3, 4 *0.3) --++ (1 *0.3, 1 *0.3);
\fill[white, opacity =.7] 
(10 *0.3, 7 *0.3) circle (0.03cm)
(10 *0.3, 13 *0.3) circle (0.03cm);
\draw[black, opacity = .7] 
(10 *0.3, 7 *0.3) circle (0.03cm)
(10 *0.3, 13 *0.3) circle (0.03cm);
\end{scope}

\begin{scope}[xshift = -9.5cm, yshift = 1.3cm, scale=.9] 
\tikzmath{\x1 = 1;  \y1 = -2/9; \x2 =4/5; \y2 =2/7; \z1= 9/11; } 

\foreach \b in {0,...,4}{
    \foreach \a in {0,...,4} {
        \draw[line width = 0.15, opacity = .7] (\a*\x2 , \b*\z1+\a*\y2) --++ (4*\x1,4*\y1);}}

\foreach \b in {0,...,4}{
    \foreach \a in {0,...,4} {
        \draw[line width = 0.15, opacity = .7] (\a*\x1 , \b*\z1+\a*\y1) --++ (4*\x2,4*\y2);}}

\foreach \b in {0,...,4}{
    \foreach \a in {0,...,4} {
        \draw[line width = 0.15, opacity = .7] (\a*\x2+\b*\x1 , \a*\y2+\b*\y1) --++ (0,4*\z1);}}

\foreach \b in {1,...,4}{
    \foreach \a in {1,...,\b}{
        \draw[blue, line width = 0.89pt]
    (4*\x1 + 4*\x2 -\b*\x1 -\b*\x2, 4*\y1 + 4*\y2 + 4*\z1 - \b*\y1 -\b*\y2 -\b*\z1) 
    --++(0*\x1, \a*\z1) --++(\a*\x2, \a*\y2) --++(\a*\x1, \a*\y1);} }

\foreach \a in {0,...,4}{
    \foreach \b in {0,...,4}{
        \foreach \c in {0,...,4}{
        \fill[black] (\a*\x1 + \b*\x2, \a*\y1 + \b*\y2 + \c*\z1) circle (0.047cm);} } }
\end{scope}

\end{scope}

    \end{tikzpicture}
    \caption{On the left: A planar projection of the graph $G$ from the $[5]\times [5] \times [5]$ grid in $\mathbb{Z}^3$. The  edges of $K_5$ are mapped by $g_\bullet$ to the blue paths. On the right: Connecting points $p_u$ and $p_v$ by a constrained path $\gamma_{uv}$}
    \label{fig:proofsketch}
\end{figure}

\subsubsection*{Roadmap for the general case}\hfill\par
Stepping back, we see that the only property of $\mathbb{R}^2$ that we used above 
was that $K_5$ is a forbidden homological minor. In fact, the same argument demonstrates that 
for $K = K_5$, $b=1$, and $m=3$, the parameter $t(b,K,m)$ in Theorem 
\ref{t:weak colors} satisfies $t\leq 5$.
The proof for general $K$, $b$, and $m$ (given in Section \ref{s:wrapup}) follows the same basic structure. Let us highlight the main steps that need to be established along the way: 

\subsubsection*{The grid complex}\hfill\par
The colorful subfamilies of $\FF$ are encoded by a suitable \emph{grid complex} whose vertices lie on the integer grid $\mathbb{Z}^m$. We introduce the general grid complex $G[n]^m$ in Section \ref{grid section}, where we also establish terminology and basic properties that will be used throughout the paper. The crucial notion of a \emph{subgrid} is introduced in Section \ref{ss:defsubgrid}, while the encoding of colorful subfamilies via grid complexes is given in Section \ref{ss:grid encoding}. 

\subsubsection*{The first chain map: Constrained chain maps}\hfill\par
Our goal is to build chain maps as in the diagram from the proof sketch above, though certain technical obstacles arise that were absent in the simplified setting.

For instance, when $b>1$, the intersection of a colorful subfamily may have several connected components, making it impossible to find a constrained path $\gamma_{uv}$ as we did previously. However, since we have an absolute bound on the number of connected components (bounded by $b$), we can ensure -- provided the color classes $\FF_i$ are sufficiently large -- that many of the vertex images lie in the same connected component. This allows us to construct \emph{some} constrained paths. 

Our strategy is to construct enough paths (as well as higher-dimensional chains) to form a sufficiently large subgrid $G'$ from which we can construct a chain map $f_\bullet \colon C_\bullet(G') \to C_\bullet(\univ)$ that is \emph{constrained} by $\FF$. This notion is formally defined in Section \ref{s:ccd}.

The existence of such a constrained chain map (Lemma \ref{l:colored constrained}) is achieved using Ramsey-type arguments -- specifically, the Gallai--Witt Theorem~\cite{witt} on monochromatic subsets of $\mathbb{Z}^m$ -- which we use to establish a vanishing lemma for subgrids in Section \ref{s:subgrid}.  We then prove the existence of constrained chain maps in Section \ref{ss:exist_ccm}.

\subsubsection*{The second chain map: Generic chain maps into grid complexes}\hfill\par
To complete the diagram, we also require a chain map $g_\bullet \colon C_\bullet(K) 
\to C_\bullet( G')$. This map should be \emph{generic} in the sense that disjoint 
faces of $K$ are mapped to chains with no common alignments along axis-parallel 
hyperplanes. As in the sketch above, this allows us to control the size of the union of the colorful subfamilies represented by different cells $\sigma$ and $\tau$ in $G'$. See Section \ref{subsec:generic chain maps} for the precise statements (Lemma \ref{l:K_generic_grid_minor} and Corollary \ref{t:cubical_and_excluded}). 

We give a canonical method for constructing $g_\bullet$ based on the \emph{stair 
convexity} of Bukh et al.~\cite{stairConv}. Stair convexity was originally introduced 
to study extremal problems on point configurations in $\mathbb{R}^d$; 
to our knowledge, it has not 
previously been used in the context of grid complexes and chain maps. This adaptation is defined in Section \ref{ss:stair convex chains}.

Our application of stair convexity requires 
(a somewhat tedious) verification that it is compatible with the boundary 
operator in grid complexes. This verification and the construction of $g_\bullet$ 
are carried out in Sections \ref{subsec:nonrecursive} and \ref{s:boundary}.

\section{Grid complexes and encoding of colorful subfamilies}\label{s:sc vkf}
In this section, we introduce the general grid complex $G[n]^m$. 
Combinatorially, we use it to encode the colorful subfamilies of an 
$m$-colored family $\FF = \FF_1 \sqcup \cdots \sqcup \FF_m$,
where $|\FF_i| = n$ for every $1\leq i \leq m$. 
Topologically, it induces a chain complex that plays a crucial role in
constructing the chain maps needed for the proof of Theorem 
\ref{t:weak colors}.

\subsection{Grid complexes} \label{grid section}
Let $G[n]$ denote the $1$-dimensional cell complex whose vertices
($0$-cells) are the singletons $\{1\}, \{2\}, \dots , \{n\}$ and whose
closed $1$-cells are the closed intervals $[1,2], [2,3], \dots, [n-1, n]$.
For $m\geq 1$, define the \emph{grid complex} $G[n]^m$ as the $m$-fold product
\[G[n]^m \coloneqq \underbrace{G[n] \times \cdots \times G[n]}_{\text{$m$-fold}},\]
equipped with the product topology. We can think of $G[n]^m$ abstractly, or geometrically realized in $\RR^m$ as a collection of unit cubes with vertices on the integer lattice $\mathbb{Z}^m$. Their union forms an $m$-dimensional axis-parallel cube with side length $n-1$.
\begin{remark*}
We observe that our grid complex $G[n]^m$
is a particular instance of a cubical complex
(see e.g. Davis~\cite{davis} or Kaczynski~et~al.~\cite{kaczynski2006computational}), 
and therefore naturally gives rise to a chain complex.
While this material is standard, we review it here to
establish the notation and terminology used throughout 
the paper. 
\end{remark*}

\subsubsection*{Cells and chains}\hfill\par

For every integer $a \in [n]$ we use interchangeably the notations $[a,a] = \{a\}$ to denote the corresponding $0$-cell in $G[n]$. For
all integers $a,b \in [n]$ with $a <b$, we let $[a,b] = [b,a]$ denote
the 1-chain with $\ZZ_2$ coefficients
\[\left.\begin{array}{r} {[a,b]}\\{[b,a]}\end{array}\right\}
\coloneqq [a,a+1] + [a+1,a+2] + \cdots + [b-1,b].\]
Notice that for any \emph{distinct} integers $a,b,c \in [n]$ we have
$[a,c] = [a,b] + [b,c]$, as we work with $\ZZ_2$-coefficients. 
Every $k$-cell $\sigma$ in $G[n]^m$ can be written as the
product of exactly $(m-k)$ $0$-cells and $k$ $1$-cells
\[\sigma = [a_1,b_1] \times [a_2,b_2] \times \cdots \times [a_m,b_m],\]
where $1\leq a_i \leq b_i \leq a_i+1 \leq n$.
When the $i$-th factor of a cell $\sigma$ is a $0$-cell, 
we say that the $i$-th coordinate of $\sigma$ is \emph{constant}.
Note that $G[n]^m$ is a regular cell complex of dimension~$m$.

The terminology introduced for simplicial complexes carries 
directly over to grid complexes.
A \emph{$k$-chain} is a sum of $k$-cells in $G[n]^m$ with coefficients in $\ZZ_2$. 
The \emph{support} of a chain $\sigma$, denoted $\supp(\sigma)$, 
is the set of cells with nonzero coefficients in $\sigma$, and two chains, 
$\sigma$ and $\tau$, have \emph{overlapping supports} if there is a 
cell in $\supp(\sigma)$ which intersects a cell in $\supp(\tau)$.
We formulate the following simple observation for future reference.

\begin{observation}\label{o:hyperplane}
For any cells $\sigma, \tau$ in $G[n]^m$ such that 
$\dim \sigma + \dim \tau < m$, there is at least one coordinate that is constant 
for both. Moreover, if  $\sigma$ and $\tau$ intersect, 
then they must be contained in a common axis-parallel hyperplane 
$\{(x_1, \dots, x_m)\in \mathbb{R}^m : x_i = c\}$.
\end{observation}
\begin{proof}
This follows from the Pigeonhole Principle. A $k$-dimensional cell of
$G[n]^m$ has exactly $m-k$ constant coordinates. Since $\dim \sigma + \dim \tau < m$,
the sum of the number of constant coordinates for $\sigma$ and $\tau$ exceeds $m$.
Thus, they must share a constant coordinate index. If the cells intersect, the value
at this coordinate must be identical, placing them in the same hyperplane.
\end{proof}

\subsubsection*{Products and boundaries}\hfill\par

The usual (Cartesian) \emph{product} $\times$ of a $k_1$-cell of $G[n]^{m_1}$ and a
$k_2$-cell of $G[n]^{m_2}$ is a $(k_1+k_2)$-cell of
$G[n]^{m_1+m_2}$. We extend this operation to chains by putting
\[ (\sigma_1 + \cdots + \sigma_{\ell_1}) \times (\tau_1 + \cdots + \tau_{\ell_2}) \; \coloneqq \; \sum_{i=1}^{\ell_1}\sum_{j=1}^{\ell_2} \sigma_i \times \tau_j.\]
We denote the null chain (with empty support) by $0$ and clarify that
for any chain $\sigma$ we have $\sigma \times 0 = 0 \times \sigma =
0$. We can now define the \emph{boundary} of a cell of $G[n]^m$
recursively, as follows (working over $\mathbb{Z}_2$):

\begin{center}
\begin{tabular}{rrcl}
  {\bf ($0$-cells)}  & $\partial \{a\}$ & $\coloneqq$ & $0$ \qquad\qquad\qquad\qquad (the null chain)\\
  {\bf ($1$-cells)} & $\partial [a,a+1]$ & $\coloneqq$ & $\{a\}+\{a+1\}$\\
  {\bf ($\ge 2$-cells)} & $\partial(\sigma \times \tau)$ & $\coloneqq$ & $\partial \sigma \times \tau + \sigma \times \partial \tau$\\
\end{tabular}
\end{center}

\noindent
The definition of $\partial$ extends from $k$-cells to $k$-chains by linearity. 
A simple induction on the dimension yields that $\partial \circ \partial = 0$.
In fact, $\partial$ coincides with the standard boundary operator on $G[n]^m$ 
when viewed as a regular cell complex. 
For  a subcomplex $G$ of a  grid complex, we write $C_\bullet\pth{G}$ 
for the chain complex defined by the chains of $G$
equipped with $\partial$.

\subsection{Subgrids} \label{ss:defsubgrid}
Here we formalize the notion of a subgrid. 
Given integers $1\leq \ell \leq n$ and $m\geq 1$, 
we define a \emph{subgrid in $G[n]^m$ of size $\ell$} to be a  vertex map
\[\Gamma \colon V(G[\ell]^m)\to V(G[n]^m)\]
of the form $(x_1, \dots, x_m) \longmapsto (\gamma_1(x_1), \dots, \gamma_m(x_m))$, 
where $\gamma_1, \dots, \gamma_m$ are \emph{strictly increasing} 
functions from $[\ell]$ to $[n]$. 

A subgrid $\Gamma$ in $G[n]^m$ of size $\ell$ induces a natural chain map 
\[\Gamma_\bullet: C_\bullet(G[\ell]^m) \to C_\bullet(G[n]^m),\] 
defined as follows.
Let $\gamma_1, \dots, \gamma_m$ be the increasing functions that define $\Gamma$. 
For any $a,b \in [\ell]$ we set
$\gamma_i(\{a\}) \coloneqq \{\gamma_i(a)\}$, 
$\gamma_i(\{a,b\}) \coloneqq \{\gamma_i(a),\gamma_i(b)\}$, 
and $\gamma_i([a,b]) \coloneqq [\gamma_i(a),\gamma_i(b)]$. 

For a cell $\sigma = \sigma_1 \times \cdots \times \sigma_m$ in $G[\ell]^m$, 
we define
\[ \Gamma_\bullet(\sigma) = \gamma_1(\sigma_1) \times \cdots \times \gamma_m(\sigma_m) \]
and extend this linearly. 
This gives a chain map because for any cell 
$\sigma = \sigma_1 \times \dots \times \sigma_m$ of $G[\ell]^m$, 
letting $S \coloneqq \{i : \dim \sigma_i = 1\}$, we have (working with $\mathbb{Z}_2$ coefficients)
\[\begin{array}{rclcl}
  \Gamma_\bullet( \partial \sigma )  &   =  &  \Gamma_\bullet \pth{\displaystyle\sum_{i=1}^m \sigma_1 \times \cdots \times \partial \sigma_i \times \cdots \times \sigma_m } & & \\[2ex]
  &   =  &  \Gamma_\bullet\pth{\displaystyle\sum_{i \in S} \sigma_1 \times \cdots \times \partial \sigma_i \times \cdots \times \sigma_m } & & \\[2ex]
    & = & \displaystyle \sum_{i\in S} \gamma_1(\sigma_1) \times \cdots \times \gamma_i(\partial \sigma_i) \times \cdots \times \gamma_m(\sigma_m) & & \\[2ex]
    & = & \displaystyle \sum_{i\in S} \gamma_1(\sigma_1) \times \cdots \times \partial\gamma_i(\sigma_i) \times \cdots \times \gamma_m(\sigma_m) & = & \partial\Gamma_\bullet(\sigma).
\end{array}\]

\subsection{Grid encoding of colorful subfamilies} \label{ss:grid encoding}
Fix positive integers $m$, $t$, and a simplicial complex $\univ$. 
Let $\FF = \FF_1 \sqcup \cdots \sqcup \FF_m$ 
be an $m$-colored family of subcomplexes of $\univ$ 
where each color class $\FF_i$ has size $t$. 
(Note that, for the moment, we do not make any assumptions on 
colorful intersections or 
on forbidden homological minors in $\univ$.)

The colorful subfamilies in $\FF$ will be encoded by a grid complex 
which we denote as $\grid_\FF \coloneqq G[t]^m$. 
The encoding goes as follows. 
Label the members of each $\FF_j$ arbitrarily as 
$\FF_j = \{S_{1,j}, \dots, S_{t,j}\}$. 
We associate to each set $S_{i,j}$ the hyperplane 
$\Pi_j(i) = \{(x_1, \dots, x_m)\in \mathbb{R}^m : x_j=i\}$, 
and for a subset $Y \subset \grid_\FF$ we put 
\[\psi(Y) \coloneqq \{S_{i,j} \colon Y \subset \Pi_j(i) \}.\] 
This defines a bijection $A \mapsto \psi(A)$ between the axis-parallel $k$-flats 
intersecting $V(\grid_\FF)$ and the colorful subfamilies of $\FF$ of size $m-k$. 
(See Figure~\ref{fig:g(v)}.)

\begin{figure}[h]
\begin{center}
\begin{tikzpicture}
\tikzmath{\x1 = 7/8; \x2 =-1/4; \y1 = 3/4; \y2 =3/8; \z1= 0; \z2=15/16; } 

\begin{scope}[scale = .8]


\draw[-latex] (0,0) -- (2*\x1, 2*\x2);
\draw[-latex] (0,0) -- (2*\y1, 2*\y2);
\draw[-latex] (0,0) -- (2*\z1, 2*\z2);


\fill[opacity = .2, blue] (1*\x1+2*\y1+1*\z1, 1*\x2+2*\y2+1*\z2) 
--++ (3*\x1+0*\y1+0*\z1, 3*\x2+0*\y2+0*\z2) 
--++ (0*\x1+0*\y1+3*\z1, 0*\x2+0*\y2+3*\z2)
--++ (-3*\x1+0*\y1+0*\z1, -3*\x2+0*\y2+0*\z2) ; 


\draw[opacity=.35, red, line width = 3.1] (3*\x1+1*\y1+2*\z1, 3*\x2+1*\y2+2*\z2) --++ (0*\x1+3*\y1+0*\z1, 0*\x2+3*\y2+0*\z2);


\draw[green, opacity =.8, line width = 1.1] (2*\x1+3*\y1+4*\z1, 2*\x2+3*\y2+4*\z2) circle (3pt);
\filldraw[green, opacity = 0.8] (2*\x1+3*\y1+4*\z1, 2*\x2+3*\y2+4*\z2) circle (2pt);


\draw[line width = 0.6, opacity=0.8] (1*\x1+1*\y1+1*\z1, 1*\x2+1*\y2+1*\z2) --++ (3*\x1+0*\y1+0*\z1, 3*\x2+0*\y2+0*\z2);
\draw[line width = 0.6, opacity=0.8] (1*\x1+1*\y1+2*\z1, 1*\x2+1*\y2+2*\z2) --++ (3*\x1+0*\y1+0*\z1, 3*\x2+0*\y2+0*\z2);
\draw[line width = 0.6, opacity=0.8] (1*\x1+1*\y1+3*\z1, 1*\x2+1*\y2+3*\z2) --++ (3*\x1+0*\y1+0*\z1, 3*\x2+0*\y2+0*\z2);
\draw[line width = 0.6, opacity=0.8] (1*\x1+1*\y1+4*\z1, 1*\x2+1*\y2+4*\z2) --++ (3*\x1+0*\y1+0*\z1, 3*\x2+0*\y2+0*\z2);

\draw[line width = 0.2, opacity=0.2] (1*\x1+2*\y1+1*\z1, 1*\x2+2*\y2+1*\z2) --++ (3*\x1+0*\y1+0*\z1, 3*\x2+0*\y2+0*\z2);
\draw[line width = 0.2, opacity=0.3] (1*\x1+2*\y1+2*\z1, 1*\x2+2*\y2+2*\z2) --++ (3*\x1+0*\y1+0*\z1, 3*\x2+0*\y2+0*\z2);
\draw[line width = 0.4, opacity=0.6] (1*\x1+2*\y1+3*\z1, 1*\x2+2*\y2+3*\z2) --++ (3*\x1+0*\y1+0*\z1, 3*\x2+0*\y2+0*\z2);
\draw[line width = 0.6, opacity=0.8] (1*\x1+2*\y1+4*\z1, 1*\x2+2*\y2+4*\z2) --++ (3*\x1+0*\y1+0*\z1, 3*\x2+0*\y2+0*\z2);

\draw[line width = 0.2, opacity=0.2] (1*\x1+3*\y1+1*\z1, 1*\x2+3*\y2+1*\z2) --++ (3*\x1+0*\y1+0*\z1, 3*\x2+0*\y2+0*\z2);
\draw[line width = 0.2, opacity=0.3] (1*\x1+3*\y1+2*\z1, 1*\x2+3*\y2+2*\z2) --++ (3*\x1+0*\y1+0*\z1, 3*\x2+0*\y2+0*\z2);
\draw[line width = 0.4, opacity=0.6] (1*\x1+3*\y1+3*\z1, 1*\x2+3*\y2+3*\z2) --++ (3*\x1+0*\y1+0*\z1, 3*\x2+0*\y2+0*\z2);
\draw[line width = 0.6, opacity=0.8] (1*\x1+3*\y1+4*\z1, 1*\x2+3*\y2+4*\z2) --++ (3*\x1+0*\y1+0*\z1, 3*\x2+0*\y2+0*\z2);

\draw[line width = 0.2, opacity=0.2] (1*\x1+4*\y1+1*\z1, 1*\x2+4*\y2+1*\z2) --++ (3*\x1+0*\y1+0*\z1, 3*\x2+0*\y2+0*\z2);
\draw[line width = 0.2, opacity=0.3] (1*\x1+4*\y1+2*\z1, 1*\x2+4*\y2+2*\z2) --++ (3*\x1+0*\y1+0*\z1, 3*\x2+0*\y2+0*\z2);
\draw[line width = 0.4, opacity=0.6] (1*\x1+4*\y1+3*\z1, 1*\x2+4*\y2+3*\z2) --++ (3*\x1+0*\y1+0*\z1, 3*\x2+0*\y2+0*\z2);
\draw[line width = 0.6, opacity=0.8] (1*\x1+4*\y1+4*\z1, 1*\x2+4*\y2+4*\z2) --++ (3*\x1+0*\y1+0*\z1, 3*\x2+0*\y2+0*\z2);
\draw[line width = 0.2, opacity=0.3] (1*\x1+1*\y1+1*\z1, 1*\x2+1*\y2+1*\z2) --++ (0*\x1+3*\y1+0*\z1, 0*\x2+3*\y2+0*\z2);
\draw[line width = 0.2, opacity=0.3] (2*\x1+1*\y1+1*\z1, 2*\x2+1*\y2+1*\z2) --++ (0*\x1+3*\y1+0*\z1, 0*\x2+3*\y2+0*\z2);
\draw[line width = 0.2, opacity=0.3] (3*\x1+1*\y1+1*\z1, 3*\x2+1*\y2+1*\z2) --++ (0*\x1+3*\y1+0*\z1, 0*\x2+3*\y2+0*\z2);
\draw[line width = 0.6, opacity=0.8] (4*\x1+1*\y1+1*\z1, 4*\x2+1*\y2+1*\z2) --++ (0*\x1+3*\y1+0*\z1, 0*\x2+3*\y2+0*\z2);

\draw[line width = 0.3, opacity=0.4] (1*\x1+1*\y1+2*\z1, 1*\x2+1*\y2+2*\z2) --++ (0*\x1+3*\y1+0*\z1, 0*\x2+3*\y2+0*\z2);
\draw[line width = 0.3, opacity=0.4] (2*\x1+1*\y1+2*\z1, 2*\x2+1*\y2+2*\z2) --++ (0*\x1+3*\y1+0*\z1, 0*\x2+3*\y2+0*\z2);
\draw[line width = 0.3, opacity=0.4] (3*\x1+1*\y1+2*\z1, 3*\x2+1*\y2+2*\z2) --++ (0*\x1+3*\y1+0*\z1, 0*\x2+3*\y2+0*\z2);
\draw[line width = 0.6, opacity=0.8] (4*\x1+1*\y1+2*\z1, 4*\x2+1*\y2+2*\z2) --++ (0*\x1+3*\y1+0*\z1, 0*\x2+3*\y2+0*\z2);

\draw[line width = 0.4, opacity=0.6] (1*\x1+1*\y1+3*\z1, 1*\x2+1*\y2+3*\z2) --++ (0*\x1+3*\y1+0*\z1, 0*\x2+3*\y2+0*\z2);
\draw[line width = 0.4, opacity=0.6] (2*\x1+1*\y1+3*\z1, 2*\x2+1*\y2+3*\z2) --++ (0*\x1+3*\y1+0*\z1, 0*\x2+3*\y2+0*\z2);
\draw[line width = 0.4, opacity=0.6] (3*\x1+1*\y1+3*\z1, 3*\x2+1*\y2+3*\z2) --++ (0*\x1+3*\y1+0*\z1, 0*\x2+3*\y2+0*\z2);
\draw[line width = 0.6, opacity=0.8] (4*\x1+1*\y1+3*\z1, 4*\x2+1*\y2+3*\z2) --++ (0*\x1+3*\y1+0*\z1, 0*\x2+3*\y2+0*\z2);

\draw[line width = 0.6, opacity=0.8] (1*\x1+1*\y1+4*\z1, 1*\x2+1*\y2+4*\z2) --++ (0*\x1+3*\y1+0*\z1, 0*\x2+3*\y2+0*\z2);
\draw[line width = 0.6, opacity=0.8] (2*\x1+1*\y1+4*\z1, 2*\x2+1*\y2+4*\z2) --++ (0*\x1+3*\y1+0*\z1, 0*\x2+3*\y2+0*\z2);
\draw[line width = 0.6, opacity=0.8] (3*\x1+1*\y1+4*\z1, 3*\x2+1*\y2+4*\z2) --++ (0*\x1+3*\y1+0*\z1, 0*\x2+3*\y2+0*\z2);
\draw[line width = 0.6, opacity=0.8] (4*\x1+1*\y1+4*\z1, 4*\x2+1*\y2+4*\z2) --++ (0*\x1+3*\y1+0*\z1, 0*\x2+3*\y2+0*\z2);
\draw[line width = 0.6, opacity=0.8] (1*\x1+1*\y1+1*\z1, 1*\x2+1*\y2+1*\z2) --++ (0*\x1+0*\y1+3*\z1, 0*\x2+0*\y2+3*\z2);
\draw[line width = 0.6, opacity=0.8] (2*\x1+1*\y1+1*\z1, 2*\x2+1*\y2+1*\z2) --++ (0*\x1+0*\y1+3*\z1, 0*\x2+0*\y2+3*\z2);
\draw[line width = 0.6, opacity=0.8] (3*\x1+1*\y1+1*\z1, 3*\x2+1*\y2+1*\z2) --++ (0*\x1+0*\y1+3*\z1, 0*\x2+0*\y2+3*\z2);
\draw[line width = 0.6, opacity=0.8] (4*\x1+1*\y1+1*\z1, 4*\x2+1*\y2+1*\z2) --++ (0*\x1+0*\y1+3*\z1, 0*\x2+0*\y2+3*\z2);

\draw[line width = 0.4, opacity=0.6] (1*\x1+2*\y1+1*\z1, 1*\x2+2*\y2+1*\z2) --++ (0*\x1+0*\y1+3*\z1, 0*\x2+0*\y2+3*\z2);
\draw[line width = 0.4, opacity=0.6] (2*\x1+2*\y1+1*\z1, 2*\x2+2*\y2+1*\z2) --++ (0*\x1+0*\y1+3*\z1, 0*\x2+0*\y2+3*\z2);
\draw[line width = 0.4, opacity=0.6] (3*\x1+2*\y1+1*\z1, 3*\x2+2*\y2+1*\z2) --++ (0*\x1+0*\y1+3*\z1, 0*\x2+0*\y2+3*\z2);
\draw[line width = 0.6, opacity=0.8] (4*\x1+2*\y1+1*\z1, 4*\x2+2*\y2+1*\z2) --++ (0*\x1+0*\y1+3*\z1, 0*\x2+0*\y2+3*\z2);

\draw[line width = 0.2, opacity=0.4] (1*\x1+3*\y1+1*\z1, 1*\x2+3*\y2+1*\z2) --++ (0*\x1+0*\y1+3*\z1, 0*\x2+0*\y2+3*\z2);
\draw[line width = 0.2, opacity=0.4] (2*\x1+3*\y1+1*\z1, 2*\x2+3*\y2+1*\z2) --++ (0*\x1+0*\y1+3*\z1, 0*\x2+0*\y2+3*\z2);
\draw[line width = 0.4, opacity=0.6] (3*\x1+3*\y1+1*\z1, 3*\x2+3*\y2+1*\z2) --++ (0*\x1+0*\y1+3*\z1, 0*\x2+0*\y2+3*\z2);
\draw[line width = 0.6, opacity=0.8] (4*\x1+3*\y1+1*\z1, 4*\x2+3*\y2+1*\z2) --++ (0*\x1+0*\y1+3*\z1, 0*\x2+0*\y2+3*\z2);

\draw[line width = 0.1, opacity=0.1] (1*\x1+4*\y1+1*\z1, 1*\x2+4*\y2+1*\z2) --++ (0*\x1+0*\y1+3*\z1, 0*\x2+0*\y2+3*\z2);
\draw[line width = 0.2, opacity=0.4] (2*\x1+4*\y1+1*\z1, 2*\x2+4*\y2+1*\z2) --++ (0*\x1+0*\y1+3*\z1, 0*\x2+0*\y2+3*\z2);
\draw[line width = 0.4, opacity=0.6] (3*\x1+4*\y1+1*\z1, 3*\x2+4*\y2+1*\z2) --++ (0*\x1+0*\y1+3*\z1, 0*\x2+0*\y2+3*\z2);
\draw[line width = 0.6, opacity=0.8] (4*\x1+4*\y1+1*\z1, 4*\x2+4*\y2+1*\z2) --++ (0*\x1+0*\y1+3*\z1, 0*\x2+0*\y2+3*\z2);


\filldraw[black, opacity = 0.99] (1*\x1+1*\y1+1*\z1, 1*\x2+1*\y2+1*\z2) circle (1.6pt);
\filldraw[black, opacity = 0.89] (1*\x1+2*\y1+1*\z1, 1*\x2+2*\y2+1*\z2) circle (1.4pt);
\filldraw[black, opacity = 0.79] (1*\x1+3*\y1+1*\z1, 1*\x2+3*\y2+1*\z2) circle (1.2pt);
\filldraw[black, opacity = 0.69] (1*\x1+4*\y1+1*\z1, 1*\x2+4*\y2+1*\z2) circle (1.0pt);

\filldraw[black, opacity = 0.99] (1*\x1+1*\y1+2*\z1, 1*\x2+1*\y2+2*\z2) circle (1.6pt);
\filldraw[black, opacity = 0.89] (1*\x1+2*\y1+2*\z1, 1*\x2+2*\y2+2*\z2) circle (1.4pt);
\filldraw[black, opacity = 0.79] (1*\x1+3*\y1+2*\z1, 1*\x2+3*\y2+2*\z2) circle (1.2pt);
\filldraw[black, opacity = 0.69] (1*\x1+4*\y1+2*\z1, 1*\x2+4*\y2+2*\z2) circle (1.0pt);

\filldraw[black, opacity = 0.99] (1*\x1+1*\y1+3*\z1, 1*\x2+1*\y2+3*\z2) circle (1.6pt);
\filldraw[black, opacity = 0.89] (1*\x1+2*\y1+3*\z1, 1*\x2+2*\y2+3*\z2) circle (1.4pt);
\filldraw[black, opacity = 0.79] (1*\x1+3*\y1+3*\z1, 1*\x2+3*\y2+3*\z2) circle (1.2pt);
\filldraw[black, opacity = 0.69] (1*\x1+4*\y1+3*\z1, 1*\x2+4*\y2+3*\z2) circle (1.0pt);

\filldraw[black, opacity = 0.99] (1*\x1+1*\y1+4*\z1, 1*\x2+1*\y2+4*\z2) circle (1.6pt);
\filldraw[black, opacity = 0.89] (1*\x1+2*\y1+4*\z1, 1*\x2+2*\y2+4*\z2) circle (1.4pt);
\filldraw[black, opacity = 0.79] (1*\x1+3*\y1+4*\z1, 1*\x2+3*\y2+4*\z2) circle (1.2pt);
\filldraw[black, opacity = 0.69] (1*\x1+4*\y1+4*\z1, 1*\x2+4*\y2+4*\z2) circle (1.0pt);
\filldraw[black, opacity = 0.99] (2*\x1+1*\y1+1*\z1, 2*\x2+1*\y2+1*\z2) circle (1.6pt);
\filldraw[black, opacity = 0.89] (2*\x1+2*\y1+1*\z1, 2*\x2+2*\y2+1*\z2) circle (1.4pt);
\filldraw[black, opacity = 0.79] (2*\x1+3*\y1+1*\z1, 2*\x2+3*\y2+1*\z2) circle (1.2pt);
\filldraw[black, opacity = 0.69] (2*\x1+4*\y1+1*\z1, 2*\x2+4*\y2+1*\z2) circle (1.0pt);

\filldraw[black, opacity = 0.99] (2*\x1+1*\y1+2*\z1, 2*\x2+1*\y2+2*\z2) circle (1.6pt);
\filldraw[black, opacity = 0.89] (2*\x1+2*\y1+2*\z1, 2*\x2+2*\y2+2*\z2) circle (1.4pt);
\filldraw[black, opacity = 0.79] (2*\x1+3*\y1+2*\z1, 2*\x2+3*\y2+2*\z2) circle (1.2pt);
\filldraw[black, opacity = 0.69] (2*\x1+4*\y1+2*\z1, 2*\x2+4*\y2+2*\z2) circle (1.0pt);

\filldraw[black, opacity = 0.99] (2*\x1+1*\y1+3*\z1, 2*\x2+1*\y2+3*\z2) circle (1.6pt);
\filldraw[black, opacity = 0.89] (2*\x1+2*\y1+3*\z1, 2*\x2+2*\y2+3*\z2) circle (1.4pt);
\filldraw[black, opacity = 0.79] (2*\x1+3*\y1+3*\z1, 2*\x2+3*\y2+3*\z2) circle (1.2pt);
\filldraw[black, opacity = 0.69] (2*\x1+4*\y1+3*\z1, 2*\x2+4*\y2+3*\z2) circle (1.0pt);

\filldraw[black, opacity = 0.99] (2*\x1+1*\y1+4*\z1, 2*\x2+1*\y2+4*\z2) circle (1.6pt);
\filldraw[black, opacity = 0.89] (2*\x1+2*\y1+4*\z1, 2*\x2+2*\y2+4*\z2) circle (1.4pt);
\filldraw[black, opacity = 0.79] (2*\x1+3*\y1+4*\z1, 2*\x2+3*\y2+4*\z2) circle (1.2pt);
\filldraw[black, opacity = 0.69] (2*\x1+4*\y1+4*\z1, 2*\x2+4*\y2+4*\z2) circle (1.0pt);
\filldraw[black, opacity = 0.99] (3*\x1+1*\y1+1*\z1, 3*\x2+1*\y2+1*\z2) circle (1.6pt);
\filldraw[black, opacity = 0.89] (3*\x1+2*\y1+1*\z1, 3*\x2+2*\y2+1*\z2) circle (1.4pt);
\filldraw[black, opacity = 0.79] (3*\x1+3*\y1+1*\z1, 3*\x2+3*\y2+1*\z2) circle (1.2pt);
\filldraw[black, opacity = 0.69] (3*\x1+4*\y1+1*\z1, 3*\x2+4*\y2+1*\z2) circle (1.0pt);

\filldraw[black, opacity = 0.99] (3*\x1+1*\y1+2*\z1, 3*\x2+1*\y2+2*\z2) circle (1.6pt);
\filldraw[black, opacity = 0.89] (3*\x1+2*\y1+2*\z1, 3*\x2+2*\y2+2*\z2) circle (1.4pt);
\filldraw[black, opacity = 0.79] (3*\x1+3*\y1+2*\z1, 3*\x2+3*\y2+2*\z2) circle (1.2pt);
\filldraw[black, opacity = 0.69] (3*\x1+4*\y1+2*\z1, 3*\x2+4*\y2+2*\z2) circle (1.0pt);

\filldraw[black, opacity = 0.99] (3*\x1+1*\y1+3*\z1, 3*\x2+1*\y2+3*\z2) circle (1.6pt);
\filldraw[black, opacity = 0.89] (3*\x1+2*\y1+3*\z1, 3*\x2+2*\y2+3*\z2) circle (1.4pt);
\filldraw[black, opacity = 0.79] (3*\x1+3*\y1+3*\z1, 3*\x2+3*\y2+3*\z2) circle (1.2pt);
\filldraw[black, opacity = 0.69] (3*\x1+4*\y1+3*\z1, 3*\x2+4*\y2+3*\z2) circle (1.0pt);

\filldraw[black, opacity = 0.99] (3*\x1+1*\y1+4*\z1, 3*\x2+1*\y2+4*\z2) circle (1.6pt);
\filldraw[black, opacity = 0.89] (3*\x1+2*\y1+4*\z1, 3*\x2+2*\y2+4*\z2) circle (1.4pt);
\filldraw[black, opacity = 0.79] (3*\x1+3*\y1+4*\z1, 3*\x2+3*\y2+4*\z2) circle (1.2pt);
\filldraw[black, opacity = 0.69] (3*\x1+4*\y1+4*\z1, 3*\x2+4*\y2+4*\z2) circle (1.0pt);
\filldraw[black, opacity = 0.99] (4*\x1+1*\y1+1*\z1, 4*\x2+1*\y2+1*\z2) circle (1.6pt);
\filldraw[black, opacity = 0.89] (4*\x1+2*\y1+1*\z1, 4*\x2+2*\y2+1*\z2) circle (1.4pt);
\filldraw[black, opacity = 0.79] (4*\x1+3*\y1+1*\z1, 4*\x2+3*\y2+1*\z2) circle (1.2pt);
\filldraw[black, opacity = 0.69] (4*\x1+4*\y1+1*\z1, 4*\x2+4*\y2+1*\z2) circle (1.0pt);

\filldraw[black, opacity = 0.99] (4*\x1+1*\y1+2*\z1, 4*\x2+1*\y2+2*\z2) circle (1.6pt);
\filldraw[black, opacity = 0.89] (4*\x1+2*\y1+2*\z1, 4*\x2+2*\y2+2*\z2) circle (1.4pt);
\filldraw[black, opacity = 0.79] (4*\x1+3*\y1+2*\z1, 4*\x2+3*\y2+2*\z2) circle (1.2pt);
\filldraw[black, opacity = 0.69] (4*\x1+4*\y1+2*\z1, 4*\x2+4*\y2+2*\z2) circle (1.0pt);

\filldraw[black, opacity = 0.99] (4*\x1+1*\y1+3*\z1, 4*\x2+1*\y2+3*\z2) circle (1.6pt);
\filldraw[black, opacity = 0.89] (4*\x1+2*\y1+3*\z1, 4*\x2+2*\y2+3*\z2) circle (1.4pt);
\filldraw[black, opacity = 0.79] (4*\x1+3*\y1+3*\z1, 4*\x2+3*\y2+3*\z2) circle (1.2pt);
\filldraw[black, opacity = 0.69] (4*\x1+4*\y1+3*\z1, 4*\x2+4*\y2+3*\z2) circle (1.0pt);

\filldraw[black, opacity = 0.99] (4*\x1+1*\y1+4*\z1, 4*\x2+1*\y2+4*\z2) circle (1.6pt);
\filldraw[black, opacity = 0.89] (4*\x1+2*\y1+4*\z1, 4*\x2+2*\y2+4*\z2) circle (1.4pt);
\filldraw[black, opacity = 0.79] (4*\x1+3*\y1+4*\z1, 4*\x2+3*\y2+4*\z2) circle (1.2pt);
\filldraw[black, opacity = 0.69] (4*\x1+4*\y1+4*\z1, 4*\x2+4*\y2+4*\z2) circle (1.0pt);
\end{scope}

\begin{scope}[xshift = 7cm, scale = 0.8]
\fill[green, opacity=.5, rounded corners=2] (2.25,2.75) rectangle (2.25+0.75,2.75+0.55);
\fill[green, opacity=.5, rounded corners=2] (3.16,1.75) rectangle (3.16+0.75,1.75+0.55);
\fill[green, opacity=.5, rounded corners=2] (4.06,0.75) rectangle (4.06+0.75,0.75+0.55);

\fill[red, opacity=.5, rounded corners=2] (3.16,2.75) rectangle (3.16+0.75,2.75+0.55);
\fill[red, opacity=.5, rounded corners=2] (2.25,0.75) rectangle (2.25+0.75,0.75+0.55);

\fill[blue, opacity=.3, rounded corners=2] (2.25,1.75) rectangle (2.25+0.75,1.75+0.55);
    
\node[right] at (0,3) {$\FF_1 = \{$ };
\node[right] at (1.2,3){$S_{1,1},$};
\node[right] at (2.1,3){$S_{2,1},$};
\node[right] at (3.0,3){$S_{3,1},$};
\node[right] at (3.9,3){$S_{4,1}$};
\node[right] at (4.6,3){$\}$};

\node[right] at (0,2) {$\FF_2 = \{$ };
\node[right] at (1.2,2){$S_{1,2},$};
\node[right] at (2.1,2){$S_{2,2},$};
\node[right] at (3.0,2){$S_{3,2},$};
\node[right] at (3.9,2){$S_{4,2}$};
\node[right] at (4.6,2){$\}$};

\node[right] at (0,1) {$\FF_3 = \{$ };
\node[right] at (1.2,1){$S_{1,3},$};
\node[right] at (2.1,1){$S_{2,3},$};
\node[right] at (3.0,1){$S_{3,3},$};
\node[right] at (3.9,1){$S_{4,3}$};
\node[right] at (4.6,1){$\}$};
\end{scope}
\end{tikzpicture}
\caption{A $3$-colored family $\FF = \FF_1 \sqcup \FF_2 \sqcup \FF_3$ with $|\FF_i| =4$ and the corresponding grid $G[4]^3$. The point $(2,3,4)$ in green, the line $(3,t,2)$ in red, and the plane $(s,2,t)$ in blue, and the corresponding colorful subfamilies $\{S_{2,1}, S_{3,2}, S_{4,3}\}$, $\{S_{3,1}, S_{2,3}\}$, and $\{S_{2,2}\}$, respectively}
\label{fig:g(v)}
\end{center}
\end{figure}

Next, for any chain $\alpha \in C_\bullet(\grid_\FF)$, 
we define $\psi(\alpha)$ to be $\psi(\supp(\alpha))$.  
Note that if $\alpha$ is a $k$-cell of $\grid_\FF$, then $|\psi(\alpha)|=m-k$. 

For any nonempty subfamily $\GG \subseteq \FF$ 
we write $\bigcap \GG$ to mean $\cap_{S \in \GG}S$.
The definition of $\psi(\cdot)$ implies the following three straightforward properties:

\begin{claim}\label{c:no trick}
Let $\ell \le t$ and let $\Gamma$ be a subgrid of size $\ell$ in $\grid_\FF$.
For every cell $\sigma$ in $G[\ell]^m$ we have:
\begin{enumerate}
\item[(i)] $\psi(\Gamma_\bullet(\sigma)) = \psi(\tau)$ for every cell $\tau$ in the support of $\Gamma_\bullet(\sigma)$, 
\item[(ii)] $\psi(\Gamma_\bullet(\partial\sigma)) =
\psi(\Gamma_\bullet(\sigma))$, and
\item[(iii)] \label{c:notrick3} if $\sigma'$ is another cell in $G[\ell]^m$ such that $\sigma, \sigma'$ do not lie in a common axis-parallel hyperplane, then the families $\psi(\Gamma_\bullet(\sigma))$ and $\psi(\Gamma_\bullet(\sigma'))$ are disjoint.
\end{enumerate}
\end{claim}

\section{Colorful intersection patterns via homological minors}
\label{s:cip homin}
The machinery of~\cite{hb17,rb20} for analyzing intersection patterns via homological minors was designed to analyze \emph{complete} intersection patterns. We now adapt this framework to \emph{complete multipartite} intersection patterns.

The main goal for this section is the following.
Consider an $m$-colored family $\FF$ of subcomplexes of $\univ$, where each color class $\FF_i$ has size $t$. 
We show that if $t$ is sufficiently large and the topological complexity of $\FF$ is bounded, then there exists a chain map from a subgrid of the encoding grid $\grid_\FF$ to $C_\bullet(\univ)$ which is \emph{constrained} by $\FF$. The precise definition of a constrained chain map and the formal statement of this result (Lemma \ref{l:colored constrained}) is given in Section \ref{s:ccd}.

A key step in proving the existence of constrained chain maps is to establish a purely 
Ramsey-theoretic statement, which can be viewed as a vanishing lemma for subgrids (Lemma \ref{l:subgrid}). This will be established in Section \ref{s:subgrid}, while the proof of Lemma \ref{l:colored constrained} is given in Section \ref{ss:exist_ccm}.

\subsection{Colorful constrained chain maps}\label{s:ccd}
Let $\FF$ be an $m$-colored family of subcomplexes of $\univ$ and suppose each color class $\FF_i$ has size $t$. Recall the notation $\grid_\FF = G[t]^m$ and the 
bijection $A \mapsto \psi(A)$ between the axis-parallel $k$-flats 
intersecting $V(\grid_\FF)$ and the colorful subfamilies of $\FF$ of size $m-k$, established in Section~\ref{ss:grid encoding}.

Fix positive integers $d$ and $n$, with  $d\leq m$ and $n\leq t$, 
and 
let $Y$ denote the $d$-skeleton of $G[n]^m$, that is, 
$Y = (G[n]^m)^{(d)}$. 

We say that a nontrivial chain map $f_{\bullet} : C_{\bullet}(Y) \to C_{\bullet}(\univ)$ 
is \emph{constrained by $\FF$} if there exists a subgrid $\Gamma : V(G[n]^m) \to V(\grid_\FF)$ such that
\begin{equation} \label{eq:constrained}
\supp f_{\bullet}(\sigma) \subset \bigcap \psi \big(\Gamma_{\bullet}(\sigma)\big) \quad
\text{for every } \sigma\in Y \subset G[n]^m.
\end{equation}
It is convenient to keep in mind the following diagram, where the vertical dashed arrow represents the support constraint \eqref{eq:constrained}:
\[
\begin{tikzcd}
    & C_\bullet(\grid_\FF) \arrow[d,dashed,"\bigcap \psi(\cdot)"] \\
    C_\bullet(Y) \arrow[r, "f_\bullet"'] \arrow[ru, "\Gamma_\bullet"] & C_\bullet(\univ)
\end{tikzcd}
\]

Our goal for the remainder of this section is to establish the following:

\begin{lemma}\label{l:colored constrained}
Let $b,d,m,n \in \mathbb{N}$ be given with $b\geq 1$ and $m\geq d$. 
Let $Y$ denote the $d$-skeleton of $G[n]^m$. 
There exists an integer $t = t(b,d,m,n)$ such that the following holds: 
Let $\FF = \FF_1 \sqcup \cdots \sqcup \FF_m$ be an $m$-colored family of 
subcomplexes of a simplicial complex $\univ$. Suppose $\FF$ satisfies:
\begin{enumerate}
\item[(i)] $|\FF_i| = t$ for every $1\leq i \leq m$,
\item[(ii)] $\bigcap \GG \neq \emptyset$ for every colorful subfamily 
$\GG \subset \FF$, and
\item[(iii)] $\tilde{\beta}_j(\bigcap \GG) < b$ for all $0\leq j < d$ 
and all nonempty colorful subfamilies $\GG \subset \FF$.
\end{enumerate}
Then there exists a nontrivial chain map $f_{\bullet} : 
C_{\bullet}(Y) \to C_{\bullet}(\univ)$ that is constrained by~$\FF$.
\end{lemma}

\subsection{The subgrid lemma}\label{s:subgrid}

Here we establish the main Ramsey-type statement needed to prove Lemma \ref{l:colored constrained}. 
Throughout this section, we view $C_k(G[n]^m)$ as a vector space over $\mathbb{Z}_2$. 
Given a linear map $h:C_k(G[n]^m) \to \pth{\ZZ_2}^b$, 
we say that $h$ \emph{vanishes} on a subgrid $\Gamma$ of size $\ell$ 
(where $\Gamma \colon V(G[\ell]^m) \to V(G[n]^m)$) if the composition $h\circ \Gamma_\bullet$
is the 0-map; that is, 
\[h(\Gamma_{\bullet}(c)) = 0 \; \text{ for every } c\in C_k(G[\ell]^m).\]

\begin{lemma}[Subgrid lemma]\label{l:subgrid}
  Let $b,k,m,\ell \in \NN$ be given, with $\ell \geq 2$. There exists a constant
  $N = N(b,k,m,\ell)$ such that for all $n\geq N$, the following holds: Every linear map $h \colon C_k(G[n]^m) \to \pth{\ZZ_2}^b$ vanishes on some subgrid $\Gamma$ in $G[n]^m$ of size $\ell$. 
\end{lemma}

\begin{figure}
 \centering
  \begin{tikzpicture}

   \begin{scope}[scale = .75]
    \draw[line width = 1pt, blue, opacity = .8] (1,0) -- (6,0);
    \draw[line width = 1pt, blue, opacity = .8] (1,3) -- (6,3);
    \draw[line width = 1pt, blue, opacity = .8] (1,4) -- (6,4);

    \draw[line width = 1pt, blue, opacity = .8] (1,0) -- (1,4);
    \draw[line width = 1pt, blue, opacity = .8] (3,0) -- (3,4);
    \draw[line width = 1pt, blue, opacity = .8] (6,0) -- (6,4);
   \end{scope}

   \begin{scope}[scale = .75]
    \draw[red!50!black] (0,1) -- (7,1);
    \draw[red!50!black] (0,2) -- (7,2);
    \draw[red!50!black] (0,5) -- (7,5);
    \draw[red!50!black] (0,6) -- (7,6);
    \draw[red!50!black] (0,7) -- (7,7);
    \draw[red!50!black] (0,0) -- (1,0) (6,0) -- (7,0);
    \draw[red!50!black] (0,3) -- (1,3) (6,3) -- (7,3);
    \draw[red!50!black] (0,4) -- (1,4) (6,4) -- (7,4);

    \draw[red!50!black] (0,0) -- (0,7);
    \draw[red!50!black] (2,0) -- (2,7);
    \draw[red!50!black] (4,0) -- (4,7);
    \draw[red!50!black] (5,0) -- (5,7);
    \draw[red!50!black] (7,0) -- (7,7);
    \draw[red!50!black] (1,4) -- (1,7);
    \draw[red!50!black] (3,4) -- (3,7);
    \draw[red!50!black] (6,4) -- (6,7);

    \foreach \x in {0,...,7} 
    \foreach \y in {0,...,7}{ 
    \filldraw[black] (\x,\y) circle (1pt); 
    \draw (\x,\y) circle (1pt);}
    \end{scope}
    
    \begin{scope}[scale = .75]
    \filldraw[white] (1,0) circle (2pt);
    \filldraw[white] (3,0) circle (2pt);
    \filldraw[white] (6,0) circle (2pt);
    \draw[thick, blue] (1,0) circle (3pt);
    \draw[thick, blue] (3,0) circle (3pt);
    \draw[thick, blue] (6,0) circle (3pt);

    \filldraw[white] (1,3) circle (2pt);
    \filldraw[white] (3,3) circle (2pt);
    \filldraw[white] (6,3) circle (2pt);
    \draw[thick, blue] (1,3) circle (3pt);
    \draw[thick, blue] (3,3) circle (3pt);
    \draw[thick, blue] (6,3) circle (3pt);

    \filldraw[white] (1,4) circle (2pt);
    \filldraw[white] (3,4) circle (2pt);
    \filldraw[white] (6,4) circle (2pt);
    \draw[thick, blue] (1,4) circle (3pt);
    \draw[thick, blue] (3,4) circle (3pt);
    \draw[thick, blue] (6,4) circle (3pt);
    \end{scope}
    
    \begin{scope}[scale = .75]
    \foreach \x in {0,...,6}
    \foreach \y in {0,...,7}{ 
    \filldraw[white, opacity = .9] (.5+ \x,\y) circle (2.9pt);
    \filldraw[white, opacity = .9] (\y, .5+\x) circle (2.9pt);}

    \node at (.5+0,0) {\tiny $1$}; 
    \node at (.5+1,0) {\tiny $1$};
    \node at (.5+2,0) {\tiny $1$}; 
    \node at (.5+3,0) {\tiny $1$};
    \node at (.5+4,0) {\tiny $0$}; 
    \node at (.5+5,0) {\tiny $1$};
    \node at (.5+6,0) {\tiny $0$}; 

    \node at (.5+0,1) {\tiny $1$}; 
    \node at (.5+1,1) {\tiny $1$};
    \node at (.5+2,1) {\tiny $0$}; 
    \node at (.5+3,1) {\tiny $1$};
    \node at (.5+4,1) {\tiny $1$}; 
    \node at (.5+5,1) {\tiny $0$};
    \node at (.5+6,1) {\tiny $1$}; 

    \node at (.5+0,2) {\tiny $0$}; 
    \node at (.5+1,2) {\tiny $0$};
    \node at (.5+2,2) {\tiny $1$}; 
    \node at (.5+3,2) {\tiny $1$};
    \node at (.5+4,2) {\tiny $0$}; 
    \node at (.5+5,2) {\tiny $1$};
    \node at (.5+6,2) {\tiny $0$}; 

    \node at (.5+0,3) {\tiny $1$}; 
    \node at (.5+1,3) {\tiny $0$};
    \node at (.5+2,3) {\tiny $0$}; 
    \node at (.5+3,3) {\tiny $1$};
    \node at (.5+4,3) {\tiny $0$}; 
    \node at (.5+5,3) {\tiny $1$};
    \node at (.5+6,3) {\tiny $0$}; 

    \node at (.5+0,4) {\tiny $0$}; 
    \node at (.5+1,4) {\tiny $1$};
    \node at (.5+2,4) {\tiny $1$}; 
    \node at (.5+3,4) {\tiny $1$};
    \node at (.5+4,4) {\tiny $1$}; 
    \node at (.5+5,4) {\tiny $0$};
    \node at (.5+6,4) {\tiny $1$}; 

    \node at (.5+0,5) {\tiny $0$}; 
    \node at (.5+1,5) {\tiny $1$};
    \node at (.5+2,5) {\tiny $0$}; 
    \node at (.5+3,5) {\tiny $0$};
    \node at (.5+4,5) {\tiny $0$}; 
    \node at (.5+5,5) {\tiny $0$};
    \node at (.5+6,5) {\tiny $1$}; 

    \node at (.5+0,6) {\tiny $1$}; 
    \node at (.5+1,6) {\tiny $1$};
    \node at (.5+2,6) {\tiny $0$}; 
    \node at (.5+3,6) {\tiny $0$};
    \node at (.5+4,6) {\tiny $0$}; 
    \node at (.5+5,6) {\tiny $0$};
    \node at (.5+6,6) {\tiny $0$}; 

    \node at (.5+0,7) {\tiny $0$}; 
    \node at (.5+1,7) {\tiny $1$};
    \node at (.5+2,7) {\tiny $1$}; 
    \node at (.5+3,7) {\tiny $1$};
    \node at (.5+4,7) {\tiny $1$}; 
    \node at (.5+5,7) {\tiny $0$};
    \node at (.5+6,7) {\tiny $1$}; 
    \node at (0,.5+0) {\tiny $1$}; 
    \node at (0,.5+1) {\tiny $1$}; 
    \node at (0,.5+2) {\tiny $0$}; 
    \node at (0,.5+3) {\tiny $1$}; 
    \node at (0,.5+4) {\tiny $0$}; 
    \node at (0,.5+5) {\tiny $1$}; 
    \node at (0,.5+6) {\tiny $1$}; 

    \node at (1,.5+0) {\tiny $0$}; 
    \node at (1,.5+1) {\tiny $1$}; 
    \node at (1,.5+2) {\tiny $1$}; 
    \node at (1,.5+3) {\tiny $0$}; 
    \node at (1,.5+4) {\tiny $0$}; 
    \node at (1,.5+5) {\tiny $0$}; 
    \node at (1,.5+6) {\tiny $0$}; 

    \node at (2,.5+0) {\tiny $0$}; 
    \node at (2,.5+1) {\tiny $0$}; 
    \node at (2,.5+2) {\tiny $0$}; 
    \node at (2,.5+3) {\tiny $0$}; 
    \node at (2,.5+4) {\tiny $0$}; 
    \node at (2,.5+5) {\tiny $1$}; 
    \node at (2,.5+6) {\tiny $1$}; 

    \node at (3,.5+0) {\tiny $1$}; 
    \node at (3,.5+1) {\tiny $1$}; 
    \node at (3,.5+2) {\tiny $0$}; 
    \node at (3,.5+3) {\tiny $0$}; 
    \node at (3,.5+4) {\tiny $0$}; 
    \node at (3,.5+5) {\tiny $0$}; 
    \node at (3,.5+6) {\tiny $1$}; 

    \node at (4,.5+0) {\tiny $0$}; 
    \node at (4,.5+1) {\tiny $0$}; 
    \node at (4,.5+2) {\tiny $1$}; 
    \node at (4,.5+3) {\tiny $1$}; 
    \node at (4,.5+4) {\tiny $1$}; 
    \node at (4,.5+5) {\tiny $0$}; 
    \node at (4,.5+6) {\tiny $1$}; 

    \node at (5,.5+0) {\tiny $0$}; 
    \node at (5,.5+1) {\tiny $1$}; 
    \node at (5,.5+2) {\tiny $1$}; 
    \node at (5,.5+3) {\tiny $0$}; 
    \node at (5,.5+4) {\tiny $0$}; 
    \node at (5,.5+5) {\tiny $1$}; 
    \node at (5,.5+6) {\tiny $0$}; 

    \node at (6,.5+0) {\tiny $0$}; 
    \node at (6,.5+1) {\tiny $0$}; 
    \node at (6,.5+2) {\tiny $0$}; 
    \node at (6,.5+3) {\tiny $0$}; 
    \node at (6,.5+4) {\tiny $1$}; 
    \node at (6,.5+5) {\tiny $1$}; 
    \node at (6,.5+6) {\tiny $1$}; 

    \node at (7,.5+0) {\tiny $0$}; 
    \node at (7,.5+1) {\tiny $1$}; 
    \node at (7,.5+2) {\tiny $1$}; 
    \node at (7,.5+3) {\tiny $0$}; 
    \node at (7,.5+4) {\tiny $0$}; 
    \node at (7,.5+5) {\tiny $1$}; 
    \node at (7,.5+6) {\tiny $0$}; 
    \end{scope}

\begin{scope}[yshift=1.4cm, xshift = 8cm]
\tikzmath{\x1 = 7/8; \x2 =-1/4; \y1 = 3/4; \y2 =3/8; \z1= 0; \z2=15/16; } 

\begin{scope}[scale = .8]

\draw[-latex] (0*\y1,0*\y2) -- (4*\x1, 4*\x2);
\draw[line width = 0.2, opacity = 0.2] (1*\y1,1*\y2) --++ (3*\x1, 3*\x2);
\draw[line width = 0.2, opacity = 0.2] (2*\y1,2*\y2) --++ (3*\x1, 3*\x2);
\draw[line width = 0.2, opacity = 0.2] (3*\y1,3*\y2) --++ (3*\x1, 3*\x2);

\draw[-latex] (0,0) -- (4*\y1, 4*\y2);
\draw[line width = 0.2, opacity = 0.2] (1*\x1,1*\x2) --++ (3*\y1, 3*\y2);
\draw[line width = 0.2, opacity = 0.2] (2*\x1,2*\x2) --++ (3*\y1, 3*\y2);
\draw[line width = 0.2, opacity = 0.2] (3*\x1,3*\x2) --++ (3*\y1, 3*\y2);

\draw[-latex] (0,0) -- (4*\z1, 4*\z2);
\draw[line width = 0.2, opacity = 0.2] (1*\y1,1*\y2) --++ (3*\z1, 3*\z2);
\draw[line width = 0.2, opacity = 0.2] (2*\y1,2*\y2) --++ (3*\z1, 3*\z2);
\draw[line width = 0.2, opacity = 0.2] (3*\y1,3*\y2) --++ (3*\z1, 3*\z2);

\draw[line width = 0.2, opacity = 0.2] (1*\z1,1*\z2) --++ (3*\y1, 3*\y2);
\draw[line width = 0.2, opacity = 0.2] (2*\z1,2*\z2) --++ (3*\y1, 3*\y2);
\draw[line width = 0.2, opacity = 0.2] (3*\z1,3*\z2) --++ (3*\y1, 3*\y2);

\draw[thin, dotted] (0*\x1+1*\y1+2*\z1, 0*\x2+1*\y2+2*\z2)
--++ (1*\x1+0*\y1+0*\z1, 1*\x2+0*\y2+0*\z2);
\draw[thin, dotted] (0*\x1+1*\y1+3*\z1, 0*\x2+1*\y2+3*\z2)
--++ (1*\x1+0*\y1+0*\z1, 1*\x2+0*\y2+0*\z2);

\draw[thin, dotted] (0*\x1+2*\y1+2*\z1, 0*\x2+2*\y2+2*\z2)
--++ (1*\x1+0*\y1+0*\z1, 1*\x2+0*\y2+0*\z2);
\draw[thin, dotted] (0*\x1+2*\y1+3*\z1, 0*\x2+2*\y2+3*\z2)
--++ (1*\x1+0*\y1+0*\z1, 1*\x2+0*\y2+0*\z2);

\draw[thin, dotted] (0*\x1+3*\y1+2*\z1, 0*\x2+3*\y2+2*\z2)
--++ (1*\x1+0*\y1+0*\z1, 1*\x2+0*\y2+0*\z2);
\draw[thin, dotted] (0*\x1+3*\y1+3*\z1, 0*\x2+3*\y2+3*\z2)
--++ (1*\x1+0*\y1+0*\z1, 1*\x2+0*\y2+0*\z2);

\draw[thin, dotted] (1*\x1+1*\y1+0*\z1, 1*\x2+1*\y2+0*\z2)
--++ (0*\x1+0*\y1+2*\z1, 0*\x2+0*\y2+2*\z2);
\draw[thin, dotted] (2*\x1+1*\y1+0*\z1, 2*\x2+1*\y2+0*\z2)
--++ (0*\x1+0*\y1+2*\z1, 0*\x2+0*\y2+2*\z2);
\draw[thin, dotted] (3*\x1+1*\y1+0*\z1, 3*\x2+1*\y2+0*\z2)
--++ (0*\x1+0*\y1+2*\z1, 0*\x2+0*\y2+2*\z2);

\draw[thin, dotted] (1*\x1+2*\y1+0*\z1, 1*\x2+2*\y2+0*\z2)
--++ (0*\x1+0*\y1+2*\z1, 0*\x2+0*\y2+2*\z2);
\draw[thin, dotted] (2*\x1+2*\y1+0*\z1, 2*\x2+2*\y2+0*\z2)
--++ (0*\x1+0*\y1+2*\z1, 0*\x2+0*\y2+2*\z2);
\draw[thin, dotted] (3*\x1+2*\y1+0*\z1, 3*\x2+2*\y2+0*\z2)
--++ (0*\x1+0*\y1+2*\z1, 0*\x2+0*\y2+2*\z2);

\draw[thin, dotted] (1*\x1+3*\y1+0*\z1, 1*\x2+3*\y2+0*\z2)
--++ (0*\x1+0*\y1+2*\z1, 0*\x2+0*\y2+2*\z2);
\draw[thin, dotted] (2*\x1+3*\y1+0*\z1, 2*\x2+3*\y2+0*\z2)
--++ (0*\x1+0*\y1+2*\z1, 0*\x2+0*\y2+2*\z2);
\draw[thin, dotted] (3*\x1+3*\y1+0*\z1, 3*\x2+3*\y2+0*\z2)
--++ (0*\x1+0*\y1+2*\z1, 0*\x2+0*\y2+2*\z2);

\draw[thin] (1*\x1+1*\y1+2*\z1, 1*\x2+1*\y2+2*\z2)
--++ (2*\x1+0*\y1+0*\z1, 2*\x2+0*\y2+0*\z2);
\draw[thin] (1*\x1+2*\y1+2*\z1, 1*\x2+2*\y2+2*\z2)
--++ (2*\x1+0*\y1+0*\z1, 2*\x2+0*\y2+0*\z2);
\draw[thin] (1*\x1+3*\y1+2*\z1, 1*\x2+3*\y2+2*\z2)
--++ (2*\x1+0*\y1+0*\z1, 2*\x2+0*\y2+0*\z2);

\draw[thin] (1*\x1+1*\y1+3*\z1, 1*\x2+1*\y2+3*\z2)
--++ (2*\x1+0*\y1+0*\z1, 2*\x2+0*\y2+0*\z2);
\draw[thin] (1*\x1+2*\y1+3*\z1, 1*\x2+2*\y2+3*\z2)
--++ (2*\x1+0*\y1+0*\z1, 2*\x2+0*\y2+0*\z2);
\draw[thin] (1*\x1+3*\y1+3*\z1, 1*\x2+3*\y2+3*\z2)
--++ (2*\x1+0*\y1+0*\z1, 2*\x2+0*\y2+0*\z2);

\draw[thin] (1*\x1+1*\y1+2*\z1, 1*\x2+1*\y2+2*\z2)
--++ (0*\x1+0*\y1+1*\z1, 0*\x2+0*\y2+1*\z2);
\draw[thin] (2*\x1+1*\y1+2*\z1, 2*\x2+1*\y2+2*\z2)
--++ (0*\x1+0*\y1+1*\z1, 0*\x2+0*\y2+1*\z2);
\draw[thin] (3*\x1+1*\y1+2*\z1, 3*\x2+1*\y2+2*\z2)
--++ (0*\x1+0*\y1+1*\z1, 0*\x2+0*\y2+1*\z2);

\draw[thin] (1*\x1+2*\y1+2*\z1, 1*\x2+2*\y2+2*\z2)
--++ (0*\x1+0*\y1+1*\z1, 0*\x2+0*\y2+1*\z2);
\draw[thin] (2*\x1+2*\y1+2*\z1, 2*\x2+2*\y2+2*\z2)
--++ (0*\x1+0*\y1+1*\z1, 0*\x2+0*\y2+1*\z2);
\draw[thin] (3*\x1+2*\y1+2*\z1, 3*\x2+2*\y2+2*\z2)
--++ (0*\x1+0*\y1+1*\z1, 0*\x2+0*\y2+1*\z2);

\draw[thin] (1*\x1+3*\y1+2*\z1, 1*\x2+3*\y2+2*\z2)
--++ (0*\x1+0*\y1+1*\z1, 0*\x2+0*\y2+1*\z2);
\draw[thin] (2*\x1+3*\y1+2*\z1, 2*\x2+3*\y2+2*\z2)
--++ (0*\x1+0*\y1+1*\z1, 0*\x2+0*\y2+1*\z2);
\draw[thin] (3*\x1+3*\y1+2*\z1, 3*\x2+3*\y2+2*\z2)
--++ (0*\x1+0*\y1+1*\z1, 0*\x2+0*\y2+1*\z2);

\draw[thin] (1*\x1+1*\y1+2*\z1, 1*\x2+1*\y2+2*\z2)
--++ (0*\x1+2*\y1+0*\z1, 0*\x2+2*\y2+0*\z2);
\draw[thin] (2*\x1+1*\y1+2*\z1, 2*\x2+1*\y2+2*\z2)
--++ (0*\x1+2*\y1+0*\z1, 0*\x2+2*\y2+0*\z2);
\draw[thin] (3*\x1+1*\y1+2*\z1, 3*\x2+1*\y2+2*\z2)
--++ (0*\x1+2*\y1+0*\z1, 0*\x2+2*\y2+0*\z2);

\draw[thin] (1*\x1+1*\y1+3*\z1, 1*\x2+1*\y2+3*\z2)
--++ (0*\x1+2*\y1+0*\z1, 0*\x2+2*\y2+0*\z2);
\draw[thin] (2*\x1+1*\y1+3*\z1, 2*\x2+1*\y2+3*\z2)
--++ (0*\x1+2*\y1+0*\z1, 0*\x2+2*\y2+0*\z2);
\draw[thin] (3*\x1+1*\y1+3*\z1, 3*\x2+1*\y2+3*\z2)
--++ (0*\x1+2*\y1+0*\z1, 0*\x2+2*\y2+0*\z2);

\fill[opacity = .2, blue] (1*\x1+1*\y1+2*\z1, 1*\x2+1*\y2+2*\z2)
--++ (2*\x1+0*\y1+0*\z1, 2*\x2+0*\y2+0*\z2)
--++ (0*\x1+0*\y1+1*\z1, 0*\x2+0*\y2+1*\z2) 
--++ (-2*\x1+0*\y1+0*\z1, -2*\x2+0*\y2+0*\z2);

\fill[opacity = .4, blue] (1*\x1+1*\y1+3*\z1, 1*\x2+1*\y2+3*\z2)
--++ (2*\x1+0*\y1+0*\z1, 2*\x2+0*\y2+0*\z2)
--++ (0*\x1+2*\y1+0*\z1, 0*\x2+2*\y2+0*\z2) 
--++ (-2*\x1+0*\y1+0*\z1, -2*\x2+0*\y2+0*\z2);

\fill[opacity = .3, blue] (3*\x1+1*\y1+2*\z1, 3*\x2+1*\y2+2*\z2)
--++ (0*\x1+2*\y1+0*\z1, 0*\x2+2*\y2+0*\z2)
--++ (0*\x1+0*\y1+1*\z1, 0*\x2+0*\y2+1*\z2) 
--++ (0*\x1-2*\y1+0*\z1, 0*\x2-2*\y2+0*\z2);
\end{scope}
\end{scope}

  \end{tikzpicture}
   \caption{On the left: A linear map $h: C_1(G[8]^2) \to \mathbb{Z}_2$ which vanishes on the blue subgrid of size $3$.  On the right: The 3-chain $\bx_3\big( (1,1,2), (3,3,3) \big)$}
   \label{fig:subgrid_lemma}
\end{figure}

The Subgrid lemma is illustrated on Figure \ref{fig:subgrid_lemma} (left); it will be derived from the  following consequence of the Gallai-Witt theorem~\cite[p. 40]{graham1990ramsey}:

\begin{proposition}\label{l:weak GW}
  Let $m, \ell, q \in \mathbb{N}$ be given, with $\ell \geq 2$. There exists an integer $N = N(m,\ell,q)$ such that the following holds:
  For every $q$-coloring of $V(G[N]^m)$, there is a monochromatic subgrid $\Gamma$ in $G[N]^m$ of size $\ell$.
\end{proposition}

\begin{remark*}
We do not use the full strength of the Gallai-Witt theorem, 
as we do not require the subgrid to be a ``homothet'' of $G[\ell]^m$. 
Furthermore, we make no attempt to optimize the constants in our Ramsey-type result. 
\end{remark*}

\begin{proof}[Proof of the Subgrid lemma]
The proof first handles the case $k=m$ and then uses it to deduce the case $k < m$. 
Note that for $k > m$ the lemma is trivial, as the chain group $C_k(G[n]^m)$ is zero. 

\subsubsection*{The case $k=m$}\hfill\par
Fix $m\geq 1$, $b\geq 1$, and $\ell \ge 2$. 
Let $n \geq \ell$ be an integer and let $\mathbf{1}$ 
denote $(1,1,\ldots, 1) \in \mathbb{Z}^m$. 
Given $x, y \in V(G[n]^m)$ we write $x \preceq y$ 
if $x_i \le y_i$ for $i=1,\ldots, m$. 
For any two vertices $x \preceq y$ we define the $m$-chain 
$\bx_m(x,y) \in C_m(G[n]^m)$ by
\[
 \bx_m(x,y) =
 \begin{cases}
   0 & \text{ if } x_i=y_i \text{ for some } i,\\
  [x_1,y_1] \times [x_2,y_2] \times \cdots \times [x_m,y_m] & \text{ otherwise.}
 \end{cases}
\]
For illustration see Figure \ref{fig:subgrid_lemma} (right). 
Note that $\bx_m(x,y)$ is nonzero if and only if $x$ and $y$ do not share any coordinate (i.e., they do not lie in a common axis-parallel hyperplane).

For the linear map $h:C_m(G[n]^m) \to \pth{\ZZ_2}^b$ 
to vanish on a subgrid $\Gamma$ of size $\ell$, it suffices to have
\begin{equation}\label{eq:kernel}
 h(\bx_m(\Gamma(z), \Gamma(z + \mathbf{1}))) = 0 
 \text{ for all } z \in V(G[\ell -1]^m).
\end{equation}
This is because $C_m(G[\ell]^m)$ is spanned by the set of elementary cells 
$\{\bx_m(z, z +  \mathbf{1}) \colon z \in V(G[\ell-1]^m)\}$, 
and $\bx_m(\Gamma(z), \Gamma(z + \mathbf{1})) = 
\Gamma_\bullet(\bx_m(z, z + \mathbf{1}))$.

To apply Proposition \ref{l:weak GW}, we consider the vertex coloring $\chi_h \colon V(G[n]^m) \to \pth{\ZZ_2}^b$ defined by 
\[\chi_h(y) \coloneqq h(\bx_m(\mathbf{1},y)).\]

\begin{claim}\label{cl:kernel}
 Let $\Gamma$ be a subgrid of size $\ell$ in $G[n]^m$. If $\chi_h\big(\Gamma(v)\big)$ attains the same value for all $v \in V(G[\ell]^m)$, then $h$ vanishes on $\Gamma$.
\end{claim}

\begin{proof}[Proof of claim]
By \eqref{eq:kernel}, it suffices to show that $h(\bx_m(u,v)) = 0$ 
for all $u = \Gamma(z)$ and $v=\Gamma(z+\mathbf{1})$ with $z \in V(G[\ell-1]^m)$. 
Let $W$ denote the set of corners of $\bx_m(u,v)$, that is, 
\[ W= \{(w_1,\ldots, w_m)\colon w_i \in \{u_i, v_i\} \text{ for every } i \in [m]\}. \]
By the inclusion-exclusion principle
(and since we work with coefficients in $\mathbb Z_2$), we have 
\[ \bx_m(u,v) = \sum_{w \in W} \bx_m(\mathbf{1}, w).\]
In other words, 
$\bx_m(u,v)$ is the sum of the boxes anchored at 
$\mathbf{1}$ with opposite corners in $W$. 
Applying the linear map $h$ to both sides, 
and using the hypothesis that $\chi_h$ is constant on the image of $\Gamma$, 
we get
\[ h(\bx_m(u,v)) = \sum_{w \in W} h(\bx_m(\mathbf{1}, w)) = 
\sum_{w \in W} \chi_h(w) = \chi_h(u)|W| = 0,\]
where the last equality holds because $|W| = 2^m$ is even (for $m\geq 1$). 
\end{proof}

To complete the proof of Lemma~\ref{l:subgrid} for the case $k=m$, 
let $N = N(b,m,m,\ell)$ be the constant $N(m,\ell, 2^{b})$ 
from Proposition~\ref{l:weak GW}. 
For any $n \ge N$ and any linear map $h:C_m(G[n]^m) \to (\ZZ_2)^b$, 
applying  Proposition~\ref{l:weak GW} to the vertex coloring $\chi_h$ 
ensures the existence of a monochromatic subgrid $\Gamma$ of size $\ell$ in $G[n]^m$. 
By Claim~\ref{cl:kernel}, $h$ vanishes on $\Gamma$.

\subsubsection*{The case $1 \leq k < m$}\hfill\par
The proof of the general case essentially reduces to multiple instances of the 
full-dimensional case by restricting attention to every possible 
axis-parallel $k$-flat that passes through vertices of $G[n]^m$. 

It will be convenient to introduce some further notation.
Fix integers $1\leq k<m$, $b\geq 1$, and $\ell \ge 2$.
For every vertex $y = (y_1, \dots, y_m) \in G[n]^m$ and an index set $I \in \binom{[m]}k$, 
we define the $k$-chain $\chain_I(y)$ as follows. Let $x = (x_1, \dots, x_m)$ be the vertex defined by 
\[x_j = \begin{cases} 1 & \text{if } j \in I, \\ y_j & 
\text{if } j \notin I. \end{cases} \] 
We then set 
\[ \chain_I(y) = \begin{cases} 0 & \text{if } y_j = 1 \text{ for some } j \in I,\\ [x_1,y_1] \times \cdots \times [x_m,y_m] & 
\text{otherwise.} \end{cases} \] 
For illustration see Figure \ref{fig:chn}.
Note that if $j \notin I$, then $[x_j, y_j] = [y_j, y_j]$ is simply the vertex $y_j$ (a 0-chain). Thus, $\chain_I(y)$ is a product of $k$ 1-chains (the intervals $[1, y_j]$ for $j \in I$) and $(m-k)$ 0-chains (the vertices $y_j$ for $j \notin I$), making it a $k$-chain. It is nonzero if and only if $y_j \neq 1$ for all $j \in I$. (Note that if we allowed $k=m$, then $\chain_{[m]}(y)$ would coincide with $\bx_m(\mathbf{1},y)$.)

\begin{figure}
\begin{center}
\begin{tikzpicture}
\tikzmath{\x1 = 7/8; \x2 =-1/4; \y1 = 3/4; \y2 =3/8; \z1= 0; \z2=15/16; } 

\begin{scope}[scale = .8]

\draw[-latex] (0*\y1,0*\y2) -- (4*\x1, 4*\x2);

\draw[line width = .2, opacity = .2]
(1*\y1,1*\y2) --++ (3*\x1, 3*\x2)
(2*\y1,2*\y2) --++ (3*\x1, 3*\x2)
(3*\y1,3*\y2) --++ (3*\x1, 3*\x2)
(4*\y1,4*\y2) --++ (3*\x1, 3*\x2);

\draw[-latex] (0,0) -- (5*\y1, 5*\y2);
\draw[line width = .2, opacity = .2]
(1*\x1,1*\x2) --++ (4*\y1, 4*\y2)
(2*\x1,2*\x2) --++ (4*\y1, 4*\y2)
(3*\x1,3*\x2) --++ (4*\y1, 4*\y2);

\draw[-latex] (0,0) -- (5*\z1, 5*\z2);
\draw[line width = .2, opacity = .2]
(1*\y1,1*\y2) --++ (4*\z1, 4*\z2)
(2*\y1,2*\y2) --++ (4*\z1, 4*\z2)
(3*\y1,3*\y2) --++ (4*\z1, 4*\z2)
(4*\y1,4*\y2) --++ (4*\z1, 4*\z2);

\draw[line width = .2, opacity = .2]
(1*\z1,1*\z2) --++ (4*\y1, 4*\y2)
(2*\z1,2*\z2) --++ (4*\y1, 4*\y2)
(3*\z1,3*\z2) --++ (4*\y1, 4*\y2)
(4*\z1,4*\z2) --++ (4*\y1, 4*\y2);

\draw[thin, dotted] (1*\y1+4*\z1,1*\y2+4*\z2) --++ (1*\x1, 1*\x2);
\draw[thin, dotted] (2*\y1+4*\z1,2*\y2+4*\z2) --++ (1*\x1, 1*\x2);
\draw[thin, dotted] (3*\y1+4*\z1,3*\y2+4*\z2) --++ (1*\x1, 1*\x2);
\draw[thin, dotted] (4*\y1+4*\z1,4*\y2+4*\z2) --++ (1*\x1, 1*\x2);

\draw[thin] (1*\x1+1*\y1+4*\z1,1*\x2+1*\y2+4*\z2) --++ (2*\x1, 2*\x2);
\draw[thin] (1*\x1+2*\y1+4*\z1,1*\x2+2*\y2+4*\z2) --++ (2*\x1, 2*\x2);
\draw[thin] (1*\x1+3*\y1+4*\z1,1*\x2+3*\y2+4*\z2) --++ (2*\x1, 2*\x2);
\draw[thin] (1*\x1+4*\y1+4*\z1,1*\x2+4*\y2+4*\z2) --++ (2*\x1, 2*\x2);

\draw[thin] (1*\x1+1*\y1+4*\z1,1*\x2+1*\y2+4*\z2) --++ (3*\y1, 3*\y2);
\draw[thin] (2*\x1+1*\y1+4*\z1,2*\x2+1*\y2+4*\z2) --++ (3*\y1, 3*\y2);
\draw[thin] (3*\x1+1*\y1+4*\z1,3*\x2+1*\y2+4*\z2) --++ (3*\y1, 3*\y2);

\draw[thin, dotted] (1*\x1+4*\y1+0*\z1,1*\x2+4*\y2+0*\z2) --++ (1*\z1, 1*\z2);
\draw[thin, dotted] (2*\x1+4*\y1+0*\z1,2*\x2+4*\y2+0*\z2) --++ (1*\z1, 1*\z2);
\draw[thin, dotted] (3*\x1+4*\y1+0*\z1,3*\x2+4*\y2+0*\z2) --++ (1*\z1, 1*\z2);

\draw[ultra thin] (1*\x1+4*\y1+1*\z1,1*\x2+4*\y2+1*\z2) --++ (3*\z1, 3*\z2);
\draw[thin] (1*\x1+4*\y1+1*\z1,1*\x2+4*\y2+1*\z2) --++ (1.53*\z1, 1.53*\z2);

\draw[thin] (2*\x1+4*\y1+1*\z1,2*\x2+4*\y2+1*\z2) --++ (2.26*\z1, 2.26*\z2);
\draw[ultra thin] (2*\x1+4*\y1+1*\z1,2*\x2+4*\y2+1*\z2) --++ (3*\z1, 3*\z2);
\draw[thin] (3*\x1+4*\y1+1*\z1,3*\x2+4*\y2+1*\z2) --++ (3*\z1, 3*\z2);

\draw[thin, dotted] (0*\x1+4*\y1+1*\z1,0*\x2+4*\y2+1*\z2) --++ (1*\x1, 1*\x2);
\draw[thin, dotted] (0*\x1+4*\y1+2*\z1,0*\x2+4*\y2+2*\z2) --++ (1*\x1, 1*\x2);
\draw[thin, dotted] (0*\x1+4*\y1+3*\z1,0*\x2+4*\y2+3*\z2) --++ (1*\x1, 1*\x2);
\draw[thin, dotted] (0*\x1+4*\y1+4*\z1,0*\x2+4*\y2+4*\z2) --++ (1*\x1, 1*\x2);

\draw[thin] (1*\x1+4*\y1+1*\z1,1*\x2+4*\y2+1*\z2) --++ (2*\x1, 2*\x2);
\draw[thin] (1*\x1+4*\y1+2*\z1,1*\x2+4*\y2+2*\z2) --++ (2*\x1, 2*\x2);
\draw[ultra thin] (1*\x1+4*\y1+3*\z1,1*\x2+4*\y2+3*\z2) --++ (2*\x1, 2*\x2);
\draw[thin] (3*\x1+4*\y1+3*\z1,3*\x2+4*\y2+3*\z2) --++ (-1.35*\x1, -1.35*\x2);
\draw[ultra thin] (1*\x1+4*\y1+4*\z1,1*\x2+4*\y2+4*\z2) --++ (2*\x1, 2*\x2);

\fill[opacity = .2, blue] (1*\x1+1*\y1+4*\z1, 1*\x2+1*\y2+4*\z2) 
--++ (2*\x1+0*\y1+0*\z1, 2*\x2+0*\y2+0*\z2) 
--++ (0*\x1+3*\y1+0*\z1, 0*\x2+3*\y2+0*\z2)
--++ (-2*\x1+0*\y1+0*\z1, -2*\x2+0*\y2+0*\z2) ; 

\fill[opacity = .2, red] (1*\x1+4*\y1+1*\z1, 1*\x2+4*\y2+1*\z2) 
--++ (2*\x1+0*\y1+0*\z1, 2*\x2+0*\y2+0*\z2) 
--++ (0*\x1+0*\y1+3*\z1, 0*\x2+0*\y2+3*\z2)
--++ (-2*\x1+0*\y1+0*\z1, -2*\x2+0*\y2+0*\z2) ; 
\end{scope}

\end{tikzpicture}
\caption{The 2-chain $\chain_{\{1,2\}}\big( (3,4,4) \big)$  in blue, and the 2-chain $\chain_{\{1,3\}}\big( (3,4,4) \big)$ in red}
\label{fig:chn}
\end{center}
\end{figure}

To apply Proposition \ref{l:weak GW}, 
we consider the vertex coloring $\rho_h \colon V(G[n]^m) \to (\ZZ_2)^{b\binom{m}{k}}$ defined by 
\[\rho_h(y) \coloneqq \big(h(\chain_I(y))\big)_{I \in \binom{[m]}k}. \]
We think of $\rho_h(y)$ as a vector of $\binom{m}{k}$ elements from $\pth{\ZZ_2}^b$, indexed by the subsets $I\in \binom{[m]}{k}$. 
We denote the component of $\rho_h(\cdot)$ corresponding to the index set $I$
by $[\rho_h(\cdot)]_I$.

\begin{claim}\label{cl:kernel2}
 Let $\Gamma$ be a subgrid of size $\ell$ in $G[n]^m$. If $\rho_h\big(\Gamma(v)\big)$ attains the same value for all $v \in V(G[\ell]^m)$, then $h$ vanishes on $\Gamma$.
\end{claim}

\begin{proof}[Proof of Claim]
For a point $a = (a_1, \dots, a_m) \in \mathbb{Z}^m$ and subset 
$I\in \binom{[m]}{k}$ define the axis-parallel $k$-flat $F_I(a)$ by
\[F_I(a) = \{(x_1, \dots, x_m) \in \mathbb{R}^m : x_i = a_i\; \forall i\notin I\}.\]
This is the $k$-flat passing through $a$ 
where the coordinates indexed by $I$ vary, while the others remain fixed. 
Note that for any vertex $z \in V(G[\ell]^m)$, 
the intersection $G[\ell]^m \cap F_I(z)$ is isomorphic to $G[\ell]^k$,
and similarly $G[n]^m\cap F_I(\Gamma(z))$ is isomorphic to $G[n]^k$. 

Furthermore, for any vertex  $w\in V(G[\ell]^m) \cap F_I(z)$, 
the support of the chain $\chain_I(\Gamma(w))$ is contained in 
$V(G[n])^m \cap F_I(\Gamma(z))$. 
Thus restricting $\Gamma$ and $h$ to the flats $F_I(z)$ and $F_I(\Gamma(z))$ 
reduces the problem to the full-dimensional case (within that $k$-flat). 
Since $[\rho_h(\Gamma(w))]_I = h\big(\chain_I(\Gamma(w))\big)$ is constant 
for all $w\in V(G[\ell]^m) \cap F_I(z)$, 
Claim \ref{cl:kernel} implies that $h$ vanishes on 
the restriction of $\Gamma$ to $F_I(z)$.  
Since the elementary $k$-cells in these flats span all of $C_k(G[\ell]^m)$ 
as we vary $z$ and $I$, we conclude that $h$ vanishes on all of $\Gamma$.
\end{proof}

Returning to the proof of Lemma \ref{l:subgrid}, let $N = N(m,\ell, q)$ 
be the constant from Proposition~\ref{l:weak GW} with $q = 2^{b\binom{m}{k}}$.
For any $n \ge N$ and for any linear map
$h : C_k(G[n]^m) \to \pth{\ZZ_2}^b$, 
applying Proposition~\ref{l:weak GW} to the vertex coloring $\rho_h$ 
ensures the existence monochromatic subgrid $\Gamma$ in $G[n]^m$ of size $\ell$. 
By Claim~\ref{cl:kernel2}, $h$ vanishes on $\Gamma$.  
This completes the proof of the Subgrid lemma.
\end{proof}

\subsection{The existence of colorful constrained chain maps} \label{ss:exist_ccm}

Here we prove Lemma \ref{l:colored constrained}. The chain map $f_\bullet$ will be constructed recursively on the $\ell$-skeleton for $\ell = 0,1,\dots, d$.

We start by defining constants $t_0 > t_1 > \cdots > t_{d}$ 
by setting $t_{d} \coloneqq n$ and, for $\ell = d-1, \dots, 0$, recursively setting
$t_{\ell} \coloneqq N(b, \ell+1, m, t_{\ell+1})$ 
where $N(\cdot,\cdot,\cdot,\cdot)$ is the function from the 
Subgrid Lemma (Lemma \ref{l:subgrid}). 
Note that the definition of the $t_{\ell}$'s ensures:
\begin{claim}\label{c:kernel induction}
For any $0 \le \ell \leq d-1$ and any linear map 
$h: C_{\ell+1}(G[t_\ell]^m) \to (\ZZ_2)^b$,
$h$ vanishes on some subgrid $\Gamma$ in $G[t_\ell]^m$ of size $t_{\ell+1}$. 
\end{claim}

We will show that Lemma \ref{l:colored constrained} holds with $t = t_0$. 
We therefore assume that each color class $\FF_i$ has size $t_0$, 
and we set $\grid_\FF = G[t_0]^m$.
For every $\ell = 0,\dots, d$ define $Y_\ell$ to be the $\ell$-skeleton of 
$G[t_\ell]^m$, that is, $Y_\ell = (G[t_\ell]^m)^{(\ell)}$. 
We will recursively use Claim \ref{c:kernel induction} to propagate the following 
property from $\ell=0$ to $\ell=d$:

\begin{quote}
\emph{$(P_\ell):$ there exists a subgrid 
$\Gamma^{(\ell)} : V(G[t_\ell]^m) \to V(\grid_\FF)$  
and a nontrivial chain map 
\[f^{(\ell)}_\bullet: C_\bullet(Y_\ell) \to C_\bullet(\univ)\] 
such that $\supp f^{(\ell)}_\bullet(\sigma)\subset 
\bigcap \psi\big(\Gamma^{(\ell)}_{\bullet}(\sigma)\big)$ 
for every cell $\sigma\in Y_\ell$.}
\end{quote}

Property $(P_\ell)$ essentially asserts the existence of a 
chain map $f^{(\ell)}_\bullet$ that is constrained by 
$\FF$ on the $\ell$-skeleton of the grid complex $G[t_\ell]^m$.

Observe that $Y_d$ is the $d$-skeleton of $G[n]^m$, which is precisely the 
$Y$ from the statement of Lemma~\ref{l:colored constrained}. 
Consequently, the subgrid $\Gamma = \Gamma^{(d)}$ and the chain map 
$f_\bullet = f^{(d)}_\bullet$ given by property $(P_d)$ 
will prove Lemma~\ref{l:colored constrained} with parameters $t(b,d,m,n) = t_0$.

\subsubsection*{Setting up the induction}\hfill\par
It remains to prove that property $(P_\ell)$ holds under the hypotheses 
of Lemma~\ref{l:colored constrained} for $\ell=0,1,\ldots, d$. 
The proof is by induction. 

First, for each colorful subfamily $\GG \subset \FF$,
we fix (arbitrarily) a basis 
for $\tilde{H}_j(\bigcap \GG)$, $0\leq j < d$. 
(Here, we view the reduced homology groups as $\mathbb{Z}_2$-vector spaces.)
Since $\tilde{\beta}_j(\bigcap \GG) < b$ by assumption, we may view each 
$\tilde{H}_j(\bigcap \GG)$ as a subspace of~$(\ZZ_2)^b$.
These bases (and subspaces) remain fixed for the remainder of the proof.

To verify the base case $(P_0)$, 
we let $\Gamma^{(0)} : V(G[t_0]^m)\to V(\grid_\FF)$ be the identity map. 
(Recall that $\grid_\FF = G[t_0]^m$.) 
For each vertex $v\in G[t_0]^m$ we fix a vertex $p_v$ in the intersection 
$\bigcap \psi(v)$ of the maximal colorful family associated with $v$ 
(which is nonempty by hypothesis). 
We define the chain map 
$f^{(0)}_\bullet : C_\bullet(Y_0) \to C_\bullet(\univ)$ 
by setting $f^{(0)}_{\bullet}(v) = p_v$ 
for every vertex $v$ of $G[t_0]^m$. 
This map is clearly nontrivial and constrained by $\FF$.

\subsubsection*{The induction step}\hfill\par
Let $0\leq \ell < d$ and suppose property $(P_\ell)$ is satisfied. 
That is, we are given a subgrid
$\Gamma^{(\ell)} : V(G[t_\ell]^m) \to V(\grid_\FF)$ 
and a nontrivial chain map 
\[f^{(\ell)}_\bullet: 
C_\bullet(Y_\ell) \to C_\bullet(\univ)\] 
such that $\supp f^{(\ell)}_\bullet(\sigma) \subset 
\bigcap \psi\big( \Gamma^{(\ell)}_{\bullet}(\sigma) \big)$ 
for every cell $\sigma\in Y_\ell$.

For any $(\ell+1)$-cell  $\sigma$ in~$G[t_\ell]^m$, 
the chain $\Gamma^{(\ell)}_{\bullet}(\sigma)$ is well-defined.
By Claim \ref{c:no trick}, the support of this chain has 
the same affine span in $\grid_\FF$ 
as the support of its boundary $\Gamma^{(\ell)}_{\bullet}(\partial \sigma)$.
Therefore we have
\begin{equation}
\supp f^{(\ell)}_{\bullet}(\partial\sigma) \subset 
\bigcap \psi\big(\Gamma^{(\ell)}_{\bullet}(\partial \sigma) \big) = 
\bigcap \psi\big( \Gamma^{(\ell)}_{\bullet}(\sigma) \big).
\end{equation}
Now we define a linear map $h: C_{\ell+1}(G[t_\ell]^m) \to (\ZZ_2)^b$ 
by setting, for any $(\ell+1)$-dimensional cell~$\sigma$,
\begin{equation}\label{eq:def h}
h(\sigma) \coloneqq \left[ f^{(\ell)}_\bullet(\partial\sigma) \right] \in 
\tilde{H}_{\ell}\pth{\bigcap \psi\big( \Gamma^{(\ell)}_\bullet(\sigma) \big) }
\subset (\ZZ_2)^b,
\end{equation}
and extending $h$ linearly. 
In other words, $h(\sigma)$ equals the homology class of the image 
$f^{(\ell)}_\bullet(\partial \sigma)$ in the $\ell$-dimensional 
(reduced) homology group of 
$\bigcap \psi\big( \Gamma^{(\ell)}_\bullet(\sigma) \big)$, 
which we can view as an element in~$(\ZZ_2)^b$. 

By Claim~\ref{c:kernel  induction}, $h$ vanishes on some subgrid 
$\Phi : V(G[t_{\ell+1}]^m) \to V(G[t_\ell]^m)$ of size $t_{\ell+1}$.
We set 
\[\Gamma^{(\ell+1)} \coloneqq \Gamma^{(\ell)}\circ \Phi,\] 
and note that $\Gamma^{(\ell+1)}$ is indeed a vertex map from 
$V(G[t_{\ell+1}]^m)$ to $V(\grid_\FF)$, 
making $\Gamma^{(\ell+1)}$ a subgrid of $\grid_\FF$ of size $t_{\ell+1}$. 
Note also that the induced chain map $\Gamma^{(\ell+1)}_\bullet$  
satisfies $\Gamma^{(\ell+1)}_\bullet = \Gamma_\bullet^{(\ell)} \circ \Phi_\bullet$.

Let $\tau\in G[t_{\ell+1}]^m$ be a cell of dimension $(\ell+1)$.
Since $h$ vanishes on $\Phi$, we have $h\big(\Phi_\bullet(\tau)\big) = 0$. 
By definition of $h$, this means the 
homology class of $f_\bullet^{(\ell)}\big(\partial \Phi_\bullet(\tau)\big)$ is zero, that is, 
\begin{equation}
\left[ f^{(\ell)}_\bullet\big(\partial\Phi_\bullet(\tau)\big) \right]  = 0 \in
\tilde{H}_{\ell}\pth{\bigcap \psi \big(\Gamma^{(\ell)}_\bullet(\Phi_\bullet(\tau))\big)}.
\end{equation}
Using that $\Phi_\bullet$ is a chain map and the fact that 
$\Gamma^{(\ell+1)}_\bullet = \Gamma_\bullet^{(\ell)} \circ \Phi_\bullet$, this
rewrites as
\begin{equation}\label{eq:boundary}
  \left[ f^{(\ell)}_\bullet \big( \Phi_\bullet(\partial\tau) \big) \right]=0\in 
  \tilde{H}_{\ell}\pth{\bigcap \psi \big( \Gamma^{(\ell+1)}_\bullet(\tau)\big)}.
\end{equation}
For a cell $\sigma \in G[t_{\ell+1}]^m$ of dimension at most $\ell$, we set
\begin{equation}\label{eq:equation13}
f_\bullet^{(\ell+1)}(\sigma) \coloneqq 
f_{\bullet}^{(\ell)}\big( \Phi_\bullet(\sigma) \big).
\end{equation}
For a cell $\tau\in G[t_{\ell+1}]^m$ of dimension $(\ell +1)$, 
equation~\eqref{eq:boundary} reveals that $f_\bullet^{(\ell+1)}(\partial\tau)$ is a 
boundary in $C_{\ell}\pth{ \bigcap \psi \big( \Gamma^{(\ell+1)}_\bullet(\tau) \big)  }$. 
We choose some arbitrary $\alpha \in C_{\ell+1} \pth{
\bigcap \psi\big( \Gamma^{(\ell+1)}_\bullet(\tau) \big) }$ 
such that $\partial \alpha = f_\bullet^{(\ell+1)}(\partial\tau)$, and set
\begin{equation}
f^{(\ell+1)}_\bullet(\tau) \coloneqq \alpha.
\end{equation}

Therefore $f^{(\ell+1)}_\bullet$ is indeed a chain map 
from $C_\bullet(Y_{\ell+1})$ to $C_\bullet\pth{\univ}$. 
Note that by \eqref{eq:equation13}, it follows that $f_\bullet^{(\ell+1)}$ 
is nontrivial because $f_\bullet^{(\ell)}$ is nontrivial and 
$\Phi$, being a subgrid, maps each vertex to a single vertex.
Moreover, by construction $f^{(\ell+1)}_\bullet$ and $\Gamma^{(\ell+1)}$ satisfy $(P_{\ell+1})$. This concludes the proof of the induction step and of Lemma~\ref{l:colored constrained}. \qed

\section{Homological minors in grid complexes}\label{s:hominor in grid}
Let $K$ be a simplicial complex on $n$ vertices.
The main goal in this section is to give a canonical construction of a \emph{generic} chain map 
\[g_\bullet \colon C_\bullet(K) \to C_\bullet(G[n]^m),\]
for any $m>\mu(K)$.
Here, generic means that disjoint faces of $K$ map to chains that avoid common alignments along axis-parallel hyperplanes. The precise statement (Lemma \ref{l:K_generic_grid_minor}) is given in Section \ref{subsec:generic chain maps}. 

The construction of $g_\bullet$ is based on an adaptation of stair convexity to 
grid complexes. In Sections \ref{ss:stair convex chains}--\ref{s:boundary}, 
we formally define stair convex chains and verify their compatibility 
with the boundary operator. 

\subsection{Generic chain maps into grid complexes}\label{subsec:generic chain maps}

Consider chains $\alpha\in C_k(G[n]^m)$ and $\beta \in C_\ell(G[n]^m)$. 
We say that the pair $\{\alpha, \beta\}$ is \emph{degenerate} 
if there exist a $k$-cell $\alpha_k \in \supp(\alpha)$ and an 
$\ell$-cell $\beta_\ell \in \supp(\beta)$ such that $\alpha_k$ and $\beta_\ell$ 
are contained in a common axis-parallel hyperplane 
$\Pi_j(a) = \{(x_1, \dots, x_m)\in \mathbb{R}^m : x_j = a\}$. 
If the supports of $\alpha$ and $\beta$ contain no such 
pair of cells, then we say that the pair 
$\{\alpha, \beta\}$  is \emph{generic}.

We extend the notion of genericity to chain maps as follows. 
Let $K$ be a simplicial complex and consider a nontrivial chain map
\[g_\bullet \colon C_\bullet(K) \to C_\bullet(G[n]^m).\]
We say that $g_\bullet$ is \emph{generic} if every pair of 
disjoint simplices of $K$ maps to a generic pair of chains of $C_\bullet(G[n]^m)$. 
In other words, $g_\bullet$ is generic if
\[ \sigma, \tau\in K, \sigma \cap \tau = 
\emptyset \implies \{g_\bullet(\sigma), g_\bullet(\tau)\} \text{ is generic.}  \]
Roughly speaking, the following \emph{``Picasso Lemma"}  asserts that any simplicial complex can be realized within a (sufficiently large) grid complex via a generic chain map. (See Figure \ref{fig:picasso}.)

\begin{lemma} \label{l:K_generic_grid_minor}
Let $K$ be a simplicial complex on $n$ vertices. For any $m > \mu(K)$ there exists a generic, nontrivial chain map $g_\bullet \colon C_\bullet(K) \to C_\bullet(G[n]^m)$.
\end{lemma}

Our proof of this lemma is based on stair convexity, which we introduce in 
Section~\ref{ss:stair convex chains}. Sections \ref{subsec:nonrecursive} and \ref{s:boundary} will establish auxiliary results and properties related to stair convexity, concluding with the proof of Lemma~\ref{l:K_generic_grid_minor}. First, we establish the following immediate consequence:

\begin{figure}
\begin{center}
\begin{tikzpicture}
\tikzmath{\x1 = 7/8; \x2 =-1/4; \y1 = 3/4; \y2 =3/8; \z1= 0; \z2=15/16; } 

\begin{scope}[scale = .8]
\draw[-latex] (0,0) -- (6*\x1, 6*\x2);
\draw[-latex] (0,0) -- (6*\y1, 6*\y2);
\draw[-latex] (0,0) -- (6*\z1, 6*\z2);

\draw[line width = 0.2, opacity = 0.2] 
(0*\x1+1*\y1+0*\z1,0*\x2+1*\y2+0*\z2) --++ (5*\x1+0*\y1+0*\z1,5*\x2+0*\y2+0*\z2)
(0*\x1+2*\y1+0*\z1,0*\x2+2*\y2+0*\z2) --++ (5*\x1+0*\y1+0*\z1,5*\x2+0*\y2+0*\z2)
(0*\x1+3*\y1+0*\z1,0*\x2+3*\y2+0*\z2) --++ (5*\x1+0*\y1+0*\z1,5*\x2+0*\y2+0*\z2)
(0*\x1+4*\y1+0*\z1,0*\x2+4*\y2+0*\z2) --++ (5*\x1+0*\y1+0*\z1,5*\x2+0*\y2+0*\z2)
(0*\x1+5*\y1+0*\z1,0*\x2+5*\y2+0*\z2) --++ (5*\x1+0*\y1+0*\z1,5*\x2+0*\y2+0*\z2);

\draw[line width = 0.2, opacity = 0.2] 
(1*\x1+0*\y1+0*\z1,1*\x2+0*\y2+0*\z2) --++ (0*\x1+5*\y1+0*\z1,0*\x2+5*\y2+0*\z2)
(2*\x1+0*\y1+0*\z1,2*\x2+0*\y2+0*\z2) --++ (0*\x1+5*\y1+0*\z1,0*\x2+5*\y2+0*\z2)
(3*\x1+0*\y1+0*\z1,3*\x2+0*\y2+0*\z2) --++ (0*\x1+5*\y1+0*\z1,0*\x2+5*\y2+0*\z2)
(4*\x1+0*\y1+0*\z1,4*\x2+0*\y2+0*\z2) --++ (0*\x1+5*\y1+0*\z1,0*\x2+5*\y2+0*\z2)
(5*\x1+0*\y1+0*\z1,5*\x2+0*\y2+0*\z2) --++ (0*\x1+5*\y1+0*\z1,0*\x2+5*\y2+0*\z2);

\draw[thin, dotted] (1*\x1+1*\y1,1*\x2+1*\y2) --++ (\z1,\z2);
\draw[thin, dotted] (2*\x1+2*\y1,2*\x2+2*\y2) --++ (2*\z1,2*\z2);
\draw[thin, dotted] (3*\x1+3*\y1,3*\x2+3*\y2) --++ (3*\z1,3*\z2);
\draw[thin, dotted] (4*\x1+4*\y1,4*\x2+4*\y2) --++ (4*\z1,4*\z2);
\draw[thin, dotted] (5*\x1+5*\y1,5*\x2+5*\y2) --++ (5*\z1,5*\z2);

\draw[thick, blue] (\x1+\y1+\z1, \x2+\y2+\z2) --++ (1*\z1, 1*\z2) --++ (1*\y1, 1*\y2) --++ (1*\x1, 1*\x2);
\draw[thick, blue] (\x1+\y1+\z1, \x2+\y2+\z2) --++ (2*\z1, 2*\z2) --++ (2*\y1, 2*\y2) --++ (2*\x1, 2*\x2);
\draw[thick, blue] (\x1+\y1+\z1, \x2+\y2+\z2) --++ (3*\z1, 3*\z2) --++ (3*\y1, 3*\y2) --++ (3*\x1, 3*\x2);
\draw[thick, blue] (\x1+\y1+\z1, \x2+\y2+\z2) --++ (4*\z1, 4*\z2) --++ (4*\y1, 4*\y2) --++ (4*\x1, 4*\x2);

\fill[white, opacity=.92] (2*\x1+2*\y1+3.63*\z1, 2*\x2+2*\y2+3.63*\z2) circle (1.3pt); 

\fill[white, opacity=.92] (1*\x1+3.17*\y1+4*\z1, 1*\x2+3.17*\y2+4*\z2) circle (1.3pt); 

\draw[thick, blue] (2*\x1+2*\y1+2*\z1, 2*\x2+2*\y2+2*\z2) --++ (1*\z1, 1*\z2) --++ (1*\y1, 1*\y2) --++ (1*\x1, 1*\x2);
\draw[thick, blue] (2*\x1+2*\y1+2*\z1, 2*\x2+2*\y2+2*\z2) --++ (2*\z1, 2*\z2) --++ (2*\y1, 2*\y2) --++ (2*\x1, 2*\x2);
\draw[thick, blue] (2*\x1+2*\y1+2*\z1, 2*\x2+2*\y2+2*\z2) --++ (3*\z1, 3*\z2) --++ (3*\y1, 3*\y2) --++ (3*\x1, 3*\x2);

\fill[white, opacity=.92] (3*\x1+3*\y1+4.63*\z1, 3*\x2+3*\y2+4.63*\z2) circle (1.3pt);

\draw[thick, blue] (3*\x1+3*\y1+3*\z1, 3*\x2+3*\y2+3*\z2) --++ (1*\z1, 1*\z2) --++ (1*\y1, 1*\y2) --++ (1*\x1, 1*\x2);
\draw[thick, blue] (3*\x1+3*\y1+3*\z1, 3*\x2+3*\y2+3*\z2) --++ (2*\z1, 2*\z2) --++ (2*\y1, 2*\y2) --++ (2*\x1, 2*\x2);

\draw[thick, blue] (4*\x1+4*\y1+4*\z1, 4*\x2+4*\y2+4*\z2) --++ (1*\z1, 1*\z2) --++ (1*\y1, 1*\y2) --++ (1*\x1, 1*\x2);


\filldraw[white] (1*\x1+1*\y1+1*\z1, 1*\x2+1*\y2+1*\z2) circle (1.4pt);
\draw[thick, blue] 	   (1*\x1+1*\y1+1*\z1, 1*\x2+1*\y2+1*\z2) circle (2pt); 

\filldraw[black] (1*\x1+1*\y1+2*\z1, 1*\x2+1*\y2+2*\z2) circle (1.4pt);
\draw 		   (1*\x1+1*\y1+2*\z1, 1*\x2+1*\y2+2*\z2) circle (1.4pt);

\filldraw[black] (1*\x1+1*\y1+3*\z1, 1*\x2+1*\y2+3*\z2) circle (1.4pt);
\draw 		   (1*\x1+1*\y1+3*\z1, 1*\x2+1*\y2+3*\z2) circle (1.4pt);

\filldraw[black] (1*\x1+1*\y1+4*\z1, 1*\x2+1*\y2+4*\z2) circle (1.4pt);
\draw 		   (1*\x1+1*\y1+4*\z1, 1*\x2+1*\y2+4*\z2) circle (1.4pt);

\filldraw[black] (1*\x1+1*\y1+5*\z1, 1*\x2+1*\y2+5*\z2) circle (1.4pt);
\draw 		   (1*\x1+1*\y1+5*\z1, 1*\x2+1*\y2+5*\z2) circle (1.4pt);


\filldraw[black] (1*\x1+2*\y1+2*\z1, 1*\x2+2*\y2+2*\z2) circle (1.3pt);
\draw 		   (1*\x1+2*\y1+2*\z1, 1*\x2+2*\y2+2*\z2) circle (1.3pt);

\filldraw[black] (1*\x1+2*\y1+3*\z1, 1*\x2+2*\y2+3*\z2) circle (1.3pt);
\draw 		   (1*\x1+2*\y1+3*\z1, 1*\x2+2*\y2+3*\z2) circle (1.3pt);

\filldraw[black] (1*\x1+2*\y1+4*\z1, 1*\x2+2*\y2+4*\z2) circle (1.3pt);
\draw 		   (1*\x1+2*\y1+4*\z1, 1*\x2+2*\y2+4*\z2) circle (1.3pt);

\filldraw[black] (1*\x1+2*\y1+5*\z1, 1*\x2+2*\y2+5*\z2) circle (1.3pt);
\draw 		   (1*\x1+2*\y1+5*\z1, 1*\x2+2*\y2+5*\z2) circle (1.3pt);


\filldraw[black] (1*\x1+3*\y1+3*\z1, 1*\x2+3*\y2+3*\z2) circle (1.2pt);
\draw 		   (1*\x1+3*\y1+3*\z1, 1*\x2+3*\y2+3*\z2) circle (1.2pt);

\filldraw[black] (1*\x1+3*\y1+4*\z1, 1*\x2+3*\y2+4*\z2) circle (1.2pt);
\draw 		   (1*\x1+3*\y1+4*\z1, 1*\x2+3*\y2+4*\z2) circle (1.2pt);

\filldraw[black] (1*\x1+3*\y1+5*\z1, 1*\x2+3*\y2+5*\z2) circle (1.2pt);
\draw 		   (1*\x1+3*\y1+5*\z1, 1*\x2+3*\y2+5*\z2) circle (1.2pt);


\filldraw[black] (1*\x1+4*\y1+4*\z1, 1*\x2+4*\y2+4*\z2) circle (1.1pt);
\draw 		   (1*\x1+4*\y1+4*\z1, 1*\x2+4*\y2+4*\z2) circle (1.1pt);

\filldraw[black] (1*\x1+4*\y1+5*\z1, 1*\x2+4*\y2+5*\z2) circle (1.1pt);
\draw 		   (1*\x1+4*\y1+5*\z1, 1*\x2+4*\y2+5*\z2) circle (1.1pt);


\filldraw[black] (1*\x1+5*\y1+5*\z1, 1*\x2+5*\y2+5*\z2) circle (1.0pt);
\draw 		   (1*\x1+5*\y1+5*\z1, 1*\x2+5*\y2+5*\z2) circle (1.0pt);


\filldraw[white] (2*\x1+2*\y1+2*\z1, 2*\x2+2*\y2+2*\z2) circle (1.3pt);
\draw[thick, blue]   (2*\x1+2*\y1+2*\z1, 2*\x2+2*\y2+2*\z2) circle (1.9pt); 

\filldraw[black] (2*\x1+2*\y1+3*\z1, 2*\x2+2*\y2+3*\z2) circle (1.3pt);
\draw 		   (2*\x1+2*\y1+3*\z1, 2*\x2+2*\y2+3*\z2) circle (1.3pt);

\filldraw[black] (2*\x1+2*\y1+4*\z1, 2*\x2+2*\y2+4*\z2) circle (1.3pt);
\draw 		   (2*\x1+2*\y1+4*\z1, 2*\x2+2*\y2+4*\z2) circle (1.3pt);

\filldraw[black] (2*\x1+2*\y1+5*\z1, 2*\x2+2*\y2+5*\z2) circle (1.3pt);
\draw 		   (2*\x1+2*\y1+5*\z1, 2*\x2+2*\y2+5*\z2) circle (1.3pt);


\filldraw[black] (2*\x1+3*\y1+3*\z1, 2*\x2+3*\y2+3*\z2) circle (1.2pt);
\draw 		   (2*\x1+3*\y1+3*\z1, 2*\x2+3*\y2+3*\z2) circle (1.2pt);
\filldraw[black] (2*\x1+3*\y1+4*\z1, 2*\x2+3*\y2+4*\z2) circle (1.2pt);
\draw 		   (2*\x1+3*\y1+4*\z1, 2*\x2+3*\y2+4*\z2) circle (1.2pt);

\filldraw[black] (2*\x1+3*\y1+5*\z1, 2*\x2+3*\y2+5*\z2) circle (1.2pt);
\draw 		   (2*\x1+3*\y1+5*\z1, 2*\x2+3*\y2+5*\z2) circle (1.2pt);


\filldraw[black] (2*\x1+4*\y1+4*\z1, 2*\x2+4*\y2+4*\z2) circle (1.1pt);
\draw 		   (2*\x1+4*\y1+4*\z1, 2*\x2+4*\y2+4*\z2) circle (1.1pt);

\filldraw[black] (2*\x1+4*\y1+5*\z1, 2*\x2+4*\y2+5*\z2) circle (1.1pt);
\draw 		   (2*\x1+4*\y1+5*\z1, 2*\x2+4*\y2+5*\z2) circle (1.1pt);


\filldraw[black] (2*\x1+5*\y1+5*\z1, 2*\x2+5*\y2+5*\z2) circle (1.0pt);
\draw 		   (2*\x1+5*\y1+5*\z1, 2*\x2+5*\y2+5*\z2) circle (1.0pt);


\filldraw[white] (3*\x1+3*\y1+3*\z1, 3*\x2+3*\y2+3*\z2) circle (1.2pt);
\draw[thick, blue] (3*\x1+3*\y1+3*\z1, 3*\x2+3*\y2+3*\z2) circle (1.8pt); 

\filldraw[black] (3*\x1+3*\y1+4*\z1, 3*\x2+3*\y2+4*\z2) circle (1.2pt);
\draw 		   (3*\x1+3*\y1+4*\z1, 3*\x2+3*\y2+4*\z2) circle (1.2pt);

\filldraw[black] (3*\x1+3*\y1+5*\z1, 3*\x2+3*\y2+5*\z2) circle (1.2pt);
\draw 		   (3*\x1+3*\y1+5*\z1, 3*\x2+3*\y2+5*\z2) circle (1.2pt);


\filldraw[black] (3*\x1+4*\y1+4*\z1, 3*\x2+4*\y2+4*\z2) circle (1.1pt);
\draw 		   (3*\x1+4*\y1+4*\z1, 3*\x2+4*\y2+4*\z2) circle (1.1pt);

\filldraw[black] (3*\x1+4*\y1+5*\z1, 3*\x2+4*\y2+5*\z2) circle (1.1pt);
\draw 		   (3*\x1+4*\y1+5*\z1, 3*\x2+4*\y2+5*\z2) circle (1.1pt);


\filldraw[black] (3*\x1+5*\y1+5*\z1, 3*\x2+5*\y2+5*\z2) circle (1.0pt);
\draw 		   (3*\x1+5*\y1+5*\z1, 3*\x2+5*\y2+5*\z2) circle (1.0pt);


\filldraw[white] (4*\x1+4*\y1+4*\z1, 4*\x2+4*\y2+4*\z2) circle (1.1pt);
\draw[thick, blue]   (4*\x1+4*\y1+4*\z1, 4*\x2+4*\y2+4*\z2) circle (1.7pt); 

\filldraw[black] (4*\x1+4*\y1+5*\z1, 4*\x2+4*\y2+5*\z2) circle (1.1pt);
\draw 		   (4*\x1+4*\y1+5*\z1, 4*\x2+4*\y2+5*\z2) circle (1.1pt);


\filldraw[black] (4*\x1+5*\y1+5*\z1, 4*\x2+5*\y2+5*\z2) circle (1.0pt);
\draw 		   (4*\x1+5*\y1+5*\z1, 4*\x2+5*\y2+5*\z2) circle (1.0pt);


\filldraw[white] (5*\x1+5*\y1+5*\z1, 5*\x2+5*\y2+5*\z2) circle (1.0pt);
\draw[thick, blue]  (5*\x1+5*\y1+5*\z1, 5*\x2+5*\y2+5*\z2) circle (1.6pt);
\end{scope}
\end{tikzpicture}
\caption{The graph $K_5$ (considered as a 1-dimensional simplicial complex) realized as a subcomplex of the grid complex $G[5]^3$ via the generic chain map given in Lemma \ref{l:K_generic_grid_minor}}
\label{fig:picasso}
\end{center}
\end{figure}

\begin{corollary} \label{t:cubical_and_excluded}
Fix a $d$-dimensional simplicial complex $K$ on $n$ vertices and a 
simplicial complex $\univ$ such that $K\not\prec_H \univ$. 
Let $Y$ denote the $d$-skeleton of $G[n]^m$, where $m > \mu(K)$. 
For every nontrivial chain map $f_\bullet \colon C_\bullet(Y) \to C_\bullet(\univ)$ 
there exist disjoint cells $\sigma$ and $\tau$ in $Y$ satisfying:
\begin{enumerate}
\item[(i)] \label{i:generic} $\sigma$ and $\tau$ are not contained in any 
common axis-parallel hyperplane.
\item[(ii)] \label{i:intersect} The supports of $f_\bullet(\sigma)$ and 
$f_\bullet(\tau)$ are overlapping. 
\end{enumerate}
\end{corollary}

\begin{proof}
Fix a nontrivial chain map $f_\bullet \colon C_\bullet(Y) \to C_\bullet(\univ)$. 
By Lemma \ref{l:K_generic_grid_minor}, 
there exists a generic nontrivial chain map 
\[g_\bullet \colon C_\bullet(K) \to C_\bullet(G[n]^m).\] 
Since $K$ is $d$-dimensional, 
the image of $g_\bullet$ is contained in the $d$-skeleton of 
$G[n]^m$ (which is $Y$). Thus, the composition $h_\bullet = f_\bullet \circ g_\bullet$ is a well-defined, nontrivial chain map:
\[\begin{tikzcd}
  & C_\bullet(Y) \arrow[dr, "f_\bullet"] \\
C_\bullet(K) \arrow[ur, "g_\bullet"] \arrow[rr, "h_\bullet"] & & C_\bullet(\univ) 
\end{tikzcd}\]
By the hypothesis $K\not\prec_H \univ$, there exist disjoint simplices $\sigma'$ and $\tau'$ in $K$ such that the supports of $h_\bullet(\sigma')$ and $h_\bullet(\tau')$ overlap. 
Because $h_\bullet$ factors through $C_\bullet(Y)$, 
there must exist cells $\sigma \in \supp(g_\bullet(\sigma'))$ and $\tau \in \supp(g_\bullet(\tau'))$ such that the supports of $f_\bullet(\sigma)$ and $f_\bullet(\tau)$ overlap. 

Since $g_\bullet$ is generic and $\sigma'$ and $\tau'$ are disjoint, the pair of chains 
$\{g_\bullet(\sigma'), g_\bullet(\tau')\}$ is generic. By definition, this means that the cells
$\sigma$ and $\tau$ are not contained in a common axis-parallel hyperplane.

Finally, since $g_\bullet$ preserves dimension, we have:
\[\dim \sigma + \dim \tau = \dim \sigma' + \dim \tau' \leq \mu(K) < m.\]
By Observation~\ref{o:hyperplane}, if $\sigma$ and $\tau$ were to intersect, 
they would have to be contained in a common axis-parallel hyperplane. 
Since we have just established they are not, $\sigma$ and $\tau$ must be disjoint.
\end{proof}

\subsection{Stair convex chains}
\label{ss:stair convex chains}
The stair convex hull was introduced by Bukh et al.~\cite{stairConv} as a tool for analyzing point configurations related to extremal problems in discrete geometry such as lower bounds on the size of weak $\varepsilon$-nets. 
Here we define a particular class of chains in the grid complex $G[n]^m$ that resembles their
recursive definition.

We fix some integer $n \ge 2$ and work implicitly in the grid
{complex}~$G[n]^m$. For any integers  $m \ge k \ge 0$ (with $m\geq 1$)
and any sequence $1\leq
a_1 < \cdots < a_{k+1} \leq n$ we define the \emph{stair convex}
$k$-chain $\stc^{m}_k(a_1,\ldots, a_{k+1}) \in C_k(G[n]^m)$.
The definition is recursive:

\renewcommand{\arraystretch}{1.4}
\begin{center}
\begin{tabular}{lrcl}
  $\mathbf{(m>k=0)}$ & $\stc_0^m(a) $ &  $\coloneqq$ & $\overbrace{(a,\dots, a)}^{m\text{-fold}}$\\
  $\mathbf{(m=k>0)}$ & $\stc_k^m(a_1, \dots, a_{m+1})$ & $\coloneqq$ & $[a_1,a_2] \times [a_2,a_3] \times \ldots \times [a_m,a_{m+1}]$\\
  $\mathbf{(m>k>0)}$ & $\stc_k^m(a_1, \ldots, a_{k+1})$ & $\coloneqq$ & $\stc^{m-1}_{k-1}(a_1, \ldots, a_k) \times [a_k,a_{k+1}]$\\  &&& \quad $+\  \stc^{m-1}_k(a_1, \ldots, a_{k+1}) \times \{a_{k+1}\}$\\
\end{tabular}
\end{center}

\subsubsection*{Examples}\hfill\par
At one end, $\stc_0^m(a)$ is a vertex on the main diagonal of
$G[n]^m$. At the other end, $\stc_m^m(a_1, \dots, a_{m+1})$
is an $m$-dimensional box. 
Let us examine some simple examples for intermediate values of $k$. 
For $m=2$ and $k=1$ we have
\[ \stc_1^{2}(a,b)  = \  \stc^1_0(a) \times [a,b]  + \stc^1_1(a,b)  \times \{b\} = \{a\} \times [a,b] + [a,b] \times \{b\}\]
which is a rectilinear path from $(a,a)$ to $(b,b)$ with a bend at
$(a,b)$. Similarly, $\stc_1^3(a, b)$ is a rectilinear path from
$(a,a,a)$ to $(b,b,b)$ with bends at $(a,a,b)$ and $(a,b,b)$:

\noindent
\begin{minipage}{7.6cm}
  \[\begin{aligned}
  & \stc_1^3(a, b)\\
  & = \stc^{2}_0(a) \times [a,b] + \stc^{2}_1(a,b) \times \{b\}\\
  & = \{a\} \times \{a\} \times [a,b]  \\
  & \quad \quad + \pth{\{a\} \times [a,b] + [a,b] \times \{b\}} \times \{b\}\\
  & = \{a\}\times \{a\} \times [a,b]  \\
  & \quad \quad + \{a\} \times [a,b] \times \{b\} \\
  & \quad \quad \quad \quad+ [a,b] \times \{b\} \times \{b\}
  \end{aligned}\]
\end{minipage}
\hfill
\begin{minipage}{5.6cm}
\begin{tikzpicture}
\tikzmath{\x1 = 7/8; \x2 =-1/4; \y1 = 3/4; \y2 =3/8; \z1= 0; \z2=15/16; } 

\begin{scope}[scale = .8]
\draw[-latex] (0,0) -- (4*\x1, 4*\x2);
\draw[-latex] (0,0) -- (4*\y1, 4*\y2);
\draw[-latex] (0,0) -- (4*\z1, 4*\z2);

\draw[line width = .2, opacity =.2]
(1*\x1+0*\y1,1*\x2+0*\y2) --++ (0*\x1+3*\y1,0*\x2+3*\y2)
(2*\x1+0*\y1,2*\x2+0*\y2) --++ (0*\x1+3*\y1,0*\x2+3*\y2)
(3*\x1+0*\y1,3*\x2+0*\y2) --++ (0*\x1+3*\y1,0*\x2+3*\y2);

\draw[line width = .2, opacity =.2]
(0*\x1+1*\y1,0*\x2+1*\y2) --++ (3*\x1+0*\y1,3*\x2+0*\y2)
(0*\x1+2*\y1,0*\x2+2*\y2) --++ (3*\x1+0*\y1,3*\x2+0*\y2)
(0*\x1+3*\y1,0*\x2+3*\y2) --++ (3*\x1+0*\y1,3*\x2+0*\y2);

\draw[thin, dotted] (\x1+\y1, \x2+\y2) --++ (\z1, \z2);

\draw[thin, dotted] (3*\x1+3*\y1, 3*\x2+ 3*\y2) 
--++ (3*\z1,3*\z2);
\draw[thick, blue] (\x1+\y1+\z1, \x2+\y2+\z2) --++ (2*\z1, 2*\z2) --++ (2*\y1, 2*\y2) --++ (2*\x1, 2*\x2);

\filldraw[white] (\x1+\y1+\z1, \x2+\y2+\z2) circle (1.4pt);
\draw (\x1+\y1+\z1, \x2+\y2+\z2) circle (1.4pt);

\filldraw[white] (\x1+\y1+3*\z1, \x2+\y2+3*\z2) circle (1.4pt);
\draw (\x1+\y1+3*\z1, \x2+\y2+3*\z2) circle (1.4pt);

\filldraw[white] (\x1+3*\y1+3*\z1, \x2+3*\y2+3*\z2) circle (1.4pt);
\draw (\x1+3*\y1+3*\z1, \x2+3*\y2+3*\z2) circle (1.4pt);

\filldraw[white] (3*\x1+ 3*\y1+ 3*\z1, 3*\x2+ 3*\y2+ 3*\z2) circle (1.4pt);
\draw (3*\x1+ 3*\y1+ 3*\z1, 3*\x2+ 3*\y2+ 3*\z2) circle (1.4pt);

\node[left] at (1*\x1+1*\y1+1*\z1, 1*\x2+1*\y2+1*\z2) {\footnotesize $(a,a,a)$};
\node[left] at (1*\x1+1*\y1+3*\z1, 1*\x2+1*\y2+3*\z2) {\footnotesize $(a,a,b)$};
\node[above] at (1*\x1+3*\y1+3*\z1, 1*\x2+3*\y2+3*\z2) {\footnotesize $(a,b,b)$};
\node[right] at (3*\x1+3*\y1+3*\z1, 3*\x2+3*\y2+3*\z2) {\footnotesize $(b,b,b)$};
\node[below] at (4*\x1,4*\x2) {\footnotesize $x_1$};
\node[above] at (4*\y1,4*\y2) {\footnotesize $x_2$};
\node[left] at (4*\z1,4*\z2) {\footnotesize $x_3$};
\end{scope}
\end{tikzpicture}
\end{minipage}

\bigskip

\noindent
For $m=3$ and $k=2$, we get the following $2$-chain:

\bigskip
\noindent
\begin{minipage}{5.6cm}
\begin{tikzpicture}
\tikzmath{\x1 = 7/8; \x2 =-1/4; \y1 = 3/4; \y2 =3/8; \z1= 0; \z2=15/16; } 
\begin{scope}[scale=.8]

\draw[-latex] (0,0) -- (4*\x1, 4*\x2);
\draw[-latex] (0,0) -- (4*\y1, 4*\y2);
\draw[-latex] (0,0) -- (4*\z1, 4*\z2);

\draw[line width = .2, opacity =.2]
(1*\x1+0*\y1,1*\x2+0*\y2) --++ (0*\x1+3*\y1,0*\x2+3*\y2)
(2*\x1+0*\y1,2*\x2+0*\y2) --++ (0*\x1+3*\y1,0*\x2+3*\y2)
(3*\x1+0*\y1,3*\x2+0*\y2) --++ (0*\x1+3*\y1,0*\x2+3*\y2);

\draw[line width = .2, opacity =.2]
(0*\x1+1*\y1,0*\x2+1*\y2) --++ (3*\x1+0*\y1,3*\x2+0*\y2)
(0*\x1+2*\y1,0*\x2+2*\y2) --++ (3*\x1+0*\y1,3*\x2+0*\y2)
(0*\x1+3*\y1,0*\x2+3*\y2) --++ (3*\x1+0*\y1,3*\x2+0*\y2);

\draw[thin, dotted] (\x1+\y1, \x2+\y2) --++ (\z1,\z2);

\draw[thin, dotted] (2*\x1+2*\y1, 2*\x2+2*\y2) --++ (2*\z1,2*\z2);

\draw[thin, dotted]  (3*\x1+3*\y1, 3*\x2+ 3*\y2) --++ (3*\z1,3*\z2);

\draw (\x1+\y1+2*\z1, \x2+\y2+2*\z2) --++ (1*\y1, 1*\y2) --++ (1*\x1, 1*\x2) --++ (\z1,\z2) --++ (\y1,\y2) (\x1+2*\y1+2*\z1, \x2+2*\y2+2*\z2) --++ (\z1,\z2) --++ (\x1,\x2);

\draw (\x1+\y1+2*\z1, \x2+\y2+2*\z2) --++ (\z1, \z2) --++ (2*\y1, 2*\y2) --++ (\x1, \x2);

\draw[thin, dotted] (\x1+\y1+\z1, \x2+\y2+\z2) --++ (\z1, \z2);

\draw[thin, dotted] (2*\x1+3*\y1+3*\z1, 2*\x2+3*\y2+3*\z2) --++ (\x1, \x2);

\fill[blue, opacity=.15] (1*\x1+1*\y1+2*\z1,1*\x2+1*\y2+2*\z2) --++ (\y1,\y2) --++ (\z1,\z2) --++(-\y1,-\y2);

\fill[blue, opacity=.23] (1*\x1+2*\y1+2*\z1,1*\x2+2*\y2+2*\z2) --++ (\x1,\x2) --++ (\z1,\z2) --++(-\x1,-\x2);

\fill[blue, opacity=.31] (2*\x1+2*\y1+3*\z1,2*\x2+2*\y2+3*\z2) --++ (\y1,\y2) --++ (-\x1,-\x2) --++(-\y1,-\y2);

\filldraw[white] (\x1+\y1+\z1, \x2+\y2+\z2) circle (1.4pt);
\draw (\x1+\y1+\z1, \x2+\y2+\z2) circle (1.4pt);

\filldraw[white] (2*\x1+ 2*\y1+ 2*\z1, 2*\x2+ 2*\y2+ 2*\z2) circle (1.4pt);
\draw (2*\x1+ 2*\y1+ 2*\z1, 2*\x2+ 2*\y2+ 2*\z2) circle (1.4pt);

\filldraw[white] (3*\x1+ 3*\y1+ 3*\z1, 3*\x2+ 3*\y2+ 3*\z2) circle (1.4pt);
\draw (3*\x1+ 3*\y1+ 3*\z1, 3*\x2+ 3*\y2+ 3*\z2) circle (1.4pt);

\node[left] at (1*\x1+1*\y1+1*\z1, 1*\x2+1*\y2+1*\z2) {\footnotesize $(a,a,a)$};

\node[right] at (2*\x1+2*\y1+2*\z1, 2*\x2+2*\y2+2*\z2) {\footnotesize $(b,b,b)$};

\node[right] at (3*\x1+3*\y1+3*\z1, 3*\x2+3*\y2+3*\z2) {\footnotesize $(c,c,c)$};

\node[below] at (4*\x1,4*\x2) {\footnotesize $x_1$};

\node[above] at (4*\y1,4*\y2) {\footnotesize $x_2$};

\node[left] at (4*\z1,4*\z2) {\footnotesize $x_3$};

\end{scope}
\end{tikzpicture}
\end{minipage}
\hfill
\begin{minipage}{7cm}
  \[\begin{aligned}
  & \stc_2^3(a, b,c) \\
  & = \stc^{2}_1(a, b)\times [b,c] + \stc^{2}_2(a, b,c) \times \{c\}\\
  & = \{a\}\times [a,b] \times [b,c] \\
  & \quad \quad + [a,b] \times \{b\} \times [b,c]\\
  & \quad\quad\quad\quad + [a,b] \times [b,c] \times \{c\},
  \end{aligned}\]
\end{minipage}

\begin{remark*}
We will see that stair convex chains behave like simplices in several occasions (e.g. Proposition~\ref{p:cubical-simplices} and Lemma~\ref{l:K_generic_grid_minor}). 
\end{remark*}

\subsection{Non-recursive formulation}
\label{subsec:nonrecursive}
From the examples above, we glimpse a general pattern for ``unwrapping''
the recursive definition of $\stc^m_k$. We extend the definition
of a stair convex chain to the case $m=k=0$ by setting, for any integer
$a$ and any chain $\sigma$, $\stc^0_0(a) \times \sigma = \sigma \times
\stc^0_0(a) = \sigma$. Note that $\stc^0_0(a)$ acts as the identity 
with respect to $\times$.

\begin{lemma}\label{l:unwrapped}
For every $m \ge k \ge 1$ and any $1 \le a_1 < a_2 < \ldots <
a_{k+1} \le n$ we have
\[\stc^m_k(a_1, \dots, a_{k+1}) = \sum_{\stackrel{t_1, t_2, \ldots, t_{k+1} 
\in \NN_0}{t_1+t_2+\ldots + t_{k+1} = m-k}}
\stc^{t_1}_0(a_1) \times \prod_{i=1}^{k} \pth{ [a_{i}, a_{i+1}]
\times \stc^{t_{i+1}}_0(a_{i+1})}.\]
\end{lemma}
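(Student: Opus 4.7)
The plan is to prove the formula by induction on $m$, using the recursive definition of $\stc^m_k$ and a small re-indexing trick. The right-hand side is naturally indexed by compositions $t_1 + t_2 + \cdots + t_{k+1} = m-k$ into $k+1$ non-negative parts, and each term is a product of $k+1$ (possibly degenerate) diagonal $0$-cells interleaved with the $k$ fixed 1-chains $[a_i,a_{i+1}]$. The recursion splits a stair convex chain in dimension $m$ into two pieces: one that ``grows'' by appending a $1$-chain $[a_k,a_{k+1}]$, and one that ``grows'' by appending a vertex $\{a_{k+1}\}$. These two moves should correspond respectively to the compositions with $t_{k+1}=0$ and to those with $t_{k+1}\ge 1$.

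First I would settle the base case $m = k$, where $m-k = 0$ forces the unique composition $t_1 = \cdots = t_{k+1} = 0$, so the claimed RHS collapses to $[a_1,a_2] \times \cdots \times [a_k,a_{k+1}]$, matching the direct case $\mathbf{(0<k=m)}$ of the definition. For the inductive step, assuming $m > k \ge 1$, I would apply the recursive clause
\[ \stc^m_k(a_1,\ldots,a_{k+1}) = \stc^{m-1}_{k-1}(a_1,\ldots,a_k)\times[a_k,a_{k+1}] + \stc^{m-1}_k(a_1,\ldots,a_{k+1})\times\{a_{k+1}\}\]
and use the induction hypothesis on each of the two summands (reading the $k-1=0$ instance directly off the base case $\mathbf{(k=0)}$ when $k=1$). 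The first summand becomes a sum over compositions $s_1+\cdots+s_k = m-k$ of
\[ \stc^{s_1}_0(a_1) \times \prod_{i=1}^{k-1}\bigl([a_i,a_{i+1}] \times \stc^{s_{i+1}}_0(a_{i+1})\bigr) \times [a_k,a_{k+1}],\]
which, setting $s_{k+1} := 0$ (so that the factor $\stc^{0}_0(a_{k+1})$ is absorbed by our convention), is exactly the part of the target sum where $s_{k+1}=0$.

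For the second summand, IH yields a sum over $t_1+\cdots+t_{k+1} = m-1-k$ of terms ending in $\stc^{t_{k+1}}_0(a_{k+1})\times\{a_{k+1}\}$. The key identity I would use is $\stc^{t}_0(a)\times\{a\} = \stc^{t+1}_0(a)$, which follows immediately from the definition $\stc^{t}_0(a) = (a,\ldots,a)$. Setting $s_i := t_i$ for $i \le k$ and $s_{k+1} := t_{k+1}+1 \ge 1$, the second summand matches exactly the part of the target sum where $s_{k+1}\ge 1$. Adding the two pieces recovers the full sum over all compositions of $m-k$ into $k+1$ non-negative parts, completing the induction. The only delicate step is the bookkeeping in this change of variables and the convention handling $\stc^0_0$ as the identity under $\times$; once those are set up cleanly the verification is mechanical.
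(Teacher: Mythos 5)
Your proof is correct and takes essentially the same route as the paper: the paper also splits the index set of compositions by whether $t_{k+1}=0$ or $t_{k+1}\geq 1$ and identifies each part (via re-indexing) with the two summands of the recursive clause, the only cosmetic difference being that the paper phrases this as "the right-hand side $s^m_k$ satisfies all four defining clauses of $\stc^m_k$" rather than as an explicit induction on $m$.
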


\begin{proof}
We define $s^m_0(a) \coloneqq \stc^m_0(a)$, and for $m \geq k\geq 1$ 
we denote by $s^m_k(a_1, \dots, a_{k+1})$ the right-hand term 
of the identity to prove. 
Note that $s^m_k$ immediately satisfies the first two relations 
that define $\stc^m_k$:
\renewcommand{\arraystretch}{1.4}
\begin{center}
\begin{tabular}{lrcl}
$\mathbf{(m> k =0)}$ & $s^m_0(a)$ & $=$ & 
$\overbrace{(a,\dots, a)}^{m\text{-fold}}$\\
$\mathbf{(m= k > 0)}$ & $s^m_k(a_1, \dots, a_{m+1})$ & $=$ & $[a_1,a_2] 
\times [a_2,a_3] \times \ldots \times [a_m,a_{m+1}]$\\
\end{tabular}
\end{center}
\noindent
It therefore suffices to prove that $s^m_k$ also satisfies the third relation:
\renewcommand{\arraystretch}{1.4}
\begin{center}
\begin{tabular}{lrcl}
$\mathbf{(m > k > 0)}$ & $ s_k^m(a_1, \ldots, a_{k+1})$ & $=$ & 
$s^{m-1}_{k-1}(a_1, \ldots, a_k) \times [a_k,a_{k+1}]$\\ 
&&& \quad $+\  s^{m-1}_k(a_1, \ldots, a_{k+1}) \times \{a_{k+1}\}$\\
\end{tabular}
\end{center}
For $m \ge k \ge 1$ let us define
\[I(m,k) \coloneqq \{(t_1, t_2, \ldots, t_{k+1}) \in {\NN_0}^{k+1}
\colon t_1+t_2+\ldots + t_{k+1} = m-k\},\]
  and note that these sets of vectors are pairwise disjoint. For
  $(t_1, \ldots, t_{k+1}) \in I(m,k)$, let us define
  \[ \phi(t_1, \ldots, t_{k+1}) \coloneqq \stc^{t_1}_0(a_1) \times
  \prod_{i=1}^{k} \pth{ [a_{i}, a_{i+1}] \times
    \stc^{t_{i+1}}_0(a_{i+1})}.\]
  In this notation, we have 
  \[ s^m_k(a_1, \dots, a_{k+1}) = \sum_{(t_1, t_2, \ldots, t_{k+1}) \in I(m,k)} \phi(t_1, \ldots, t_{k+1}).\]
  Note that for $(t_1, t_2, \ldots, t_{k+1}) \in I(m,k)$, we have
\[ \phi(t_1, \ldots, t_{k+1}) = 
\left\{\begin{array}{ll}\phi(t_1, \ldots, t_{k}) \times [a_k,a_{k+1}] & 
\text{ if } t_{k+1}=0,\\ 
\phi(t_1, \ldots, t_{k+1}-1) \times \{a_{k+1}\} & 
\text{otherwise}. \end{array} \right.
\]
The desired identity now follows by splitting $I(m,k)$ into
\[\begin{aligned}
I_0(m,k) \coloneqq & \{(t_1, t_2, \ldots, t_{k+1}) \in I(m,k) \colon t_{k+1}=0\}, &
\text{and}\\
I_>(m,k) \coloneqq&  I(m,k) \setminus I_0(m,k),
\end{aligned}\]
and observing that $(t_1, t_2, \ldots, t_{k+1}) \mapsto (t_1, t_2,
\ldots, t_{k+1}-1)$ defines a bijection between $I_>(m,k)$ and $I(m-1,k)$,
and that $(t_1, t_2, \ldots, t_{k+1}) \mapsto (t_1, t_2, \ldots, t_{k})$
defines a bijection between $I_0(m,k)$ and $I(m-1,k-1)$.
\end{proof}

\begin{corollary}\label{c:hyperplane}
Let $k < m$. An axis-parallel hyperplane 
$\Pi_j(a) = \{(x_1, \dots, x_m)\in \mathbb{R}^m : x_j=a\}$ contains a $k$-dimensional
face of the support of $\stc^m_k(a_1, \dots, a_{k+1})$ if and only if $a \in
\{a_1, \dots, a_{k+1}\}$.
\end{corollary}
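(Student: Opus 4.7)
The plan is to use Lemma~\ref{l:unwrapped} to write $\stc^m_k(a_1, \ldots, a_{k+1})$ explicitly as a sum of products and then read off which $0$-cells can appear in each tensor slot. After applying the lemma, each summand $\phi(t_1, \ldots, t_{k+1})$ has the block form
\[
\underbrace{\{a_1\} \times \cdots \times \{a_1\}}_{t_1} \times [a_1,a_2] \times \underbrace{\{a_2\} \times \cdots \times \{a_2\}}_{t_2} \times [a_2,a_3] \times \cdots \times [a_k,a_{k+1}] \times \underbrace{\{a_{k+1}\} \times \cdots \times \{a_{k+1}\}}_{t_{k+1}},
\]
so the factors occupy $m$ consecutive positions, exactly $k$ of which are the $1$-chains $[a_i,a_{i+1}]$, and all remaining positions are $0$-cells of the form $\{a_i\}$.

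Next, I would expand each $1$-chain $[a_i,a_{i+1}]$ as $\sum_{c=a_i}^{a_{i+1}-1}[c,c+1]$, thereby writing $\stc^m_k(a_1, \dots, a_{k+1})$ as a $\ZZ_2$-sum of $k$-cells. Crucially, each such $k$-cell has $0$-cell factors only in the block positions coming from the $\stc^{t_i}_0(a_i)$ terms, and these $0$-cells are always $\{a_i\}$ for some $i \in \{1,\dots,k+1\}$; the $1$-cell factors, by contrast, are non-degenerate unit intervals and are not contained in any axis-aligned hyperplane.

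Now for the hyperplane statement: a $k$-cell $\sigma = \tau_1 \times \cdots \times \tau_m$ of $G[n]^m$ lies in $\{x_j = a\}$ if and only if its $j$-th factor $\tau_j$ is the $0$-cell $\{a\}$. If such a $\sigma$ lies in the support of $\stc^m_k(a_1,\dots,a_{k+1})$, then $\sigma$ appears with coefficient~$1$ in the expansion above, and in particular it occurs as a summand of some $\phi(t_1,\dots,t_{k+1})$ after expanding its $1$-chains. By the previous paragraph, the $0$-cell $\tau_j$ must equal $\{a_i\}$ for some $i$, giving $a = a_i \in \{a_1,\dots,a_{k+1}\}$.

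There is essentially no obstacle here; the only mild subtlety is that the support is taken after $\ZZ_2$-cancellation, which could in principle remove cells, but cancellations can only shrink the support relative to the union of summands, so the argument above still forces $a$ into $\{a_1,\dots,a_{k+1}\}$. The whole proof reduces to reading off Lemma~\ref{l:unwrapped}.
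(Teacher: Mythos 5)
Your proof is correct and follows essentially the same route as the paper's: invoke Lemma~\ref{l:unwrapped}, observe that every $0$-cell factor in a summand $\phi(t_1,\dots,t_{k+1})$ is some $\{a_i\}$, and conclude that a $k$-cell in the support can lie in $x_j=a$ only if $a=a_i$ for some $i$. You are somewhat more explicit than the paper (spelling out the block structure, expanding the $1$-chains into unit $1$-cells, and addressing possible $\ZZ_2$-cancellation, which in fact cannot occur here since distinct tuples $(t_1,\dots,t_{k+1})$ place their $1$-cells in distinct sets of coordinate slots), but the underlying argument is the same.
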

\begin{proof}
By  Lemma~\ref{l:unwrapped} we have
  \[\stc^m_k(a_1, \dots, a_{k+1}) = \sum_{\stackrel{t_1, t_2, \ldots, t_{k+1} \in \NN_0}{t_1+t_2+\ldots + t_{k+1} = m-k}}
  \stc^{t_1}_0(a_1) \times \prod_{i=1}^{k} \pth{ [a_{i}, a_{i+1}]
    \times \stc^{t_{i+1}}_0(a_{i+1})}.\]
 Note that the support of $\stc^m_k(a_1, \dots, a_{k+1})$ is the union of the supports of the summands of the right-hand side. The support of a summand is contained in $\Pi_j(a)$ if and only if for some $j$ we have $t_j \neq 0$ and~$a_{j} = a$.
\end{proof}

\subsection{Stair convex chains and the boundary operator}
\label{s:boundary}

{We start by proving that stair convex chains behave like $k$-dimensional simplices with respect to the boundary operator on grid complexes. Figure~\ref{fig:boundaries} illustrates this phenomenon in $2$ dimensions.} To formalize this claim, let us define 
\[(a_1, \dots,
\widehat{a_i}, \dots, a_{k+1}) \coloneqq (a_1, \dots, a_{i-1}, a_{i+1},
\dots, a_{k+1}),\] 
that is,~$\widehat{\hspace{1ex}}$~denotes the coordinate to be
omitted. (Recall that all homology in this paper has coefficients in~$\ZZ_2$.)

\begin{figure}[ht]
 \centering
\begin{tikzpicture}[scale=1.1]

\begin{scope}
\draw[-latex] (0,0) -- (4,0);
\draw[-latex] (0,0) -- (0,4);
\draw[dotted] (1,0) -- (1,2) (2,0) -- (2,2) (0,2) -- (1,2) (0,3) -- (1,3);
\fill[blue, opacity=.15] (1,2) -- (2,2) --(2,3) --(1,3) ;
\draw[line width=.43mm, blue!50!black] (1,2) -- (2,2) --(2,3) --(1,3) --cycle;
\node[below] at (1,-.1) {\small $a$};
\node[below] at (2,-.02) {\small $b$};
\node[left] at (0,2) {\small $b$};
\node[left] at (0,3) {\small $c$};
\node[right] at (2.1,2.5) {\small $\stc^2_2(a,b,c)$};
\end{scope}

\begin{scope}[xshift = 7cm]
\draw[-latex] (0,0) -- (4,0);
\draw[-latex] (0,0) -- (0,4);
\fill[blue, opacity=.15] (1,2) -- (2,2) --(2,3) --(1,3) ;
\draw[line width=.43mm, blue!70!black] (1-.025,1) -- (1-.025,3+.025) --(3,3+.025);
\draw[line width=.43mm, green!50!black] (1+.025,1) -- (1+.025,2) --(2,2);
\draw[line width=.43mm, red!70!black] (2,2) -- (2,3-.025) --(3,3-.025);
\filldraw[white] (1,1) circle (1.4pt);
\draw (1,1) circle (1.4pt);
\filldraw[white] (2,2) circle (1.4pt);
\draw (2,2) circle (1.4pt);
\filldraw[white] (3,3) circle (1.4pt);
\draw (3,3) circle (1.4pt);
\node[below] at (1,1) {\small $(a,a)$};
\node[right] at (2-.04,2-.1) {\small $(b,b)$};
\node[right] at (3,3) {\small $(c,c)$};
\node[right] at (2,2.5) {\color{red!70!black} \footnotesize $\stc^2_1(b,c)$};
\node[right] at (1,1.5) {\color{green!50!black} \footnotesize $\stc^2_1(a,b)$};
\node[above] at (1,3) {\color{blue!70!black} \footnotesize $\stc^2_1(a,c)$};
//\end{scope}

\end{tikzpicture}
 \caption{On the left: An illustration of the 2-chain
   $\stc^2_2(a,b,c)$ with highlighted boundary. On the right: An
   illustration of the boundary of $\stc^2_2(a,b,c)$ decomposed into
   the sum of $\stc^2_1(a,b), \stc^2_1(b,c)$ and $\stc^2_1(a,c)$,
   respectively. Note that both $\{a\}\times [a,b]$ and
   $[b,c]\times \{c\}$ cancel out since we work with $\mathbb
   Z_2$ coefficients}
 \label{fig:boundaries}
\end{figure}

\begin{proposition}\label{p:cubical-simplices}
  For integers $m\geq k \geq 1$ and any sequence $a_1 < a_2 < \ldots <
  a_m$ of elements from $[n]$ we have
  $\displaystyle\partial\stc^m_k(a_1, \dots, a_{k+1}) =
  {\sum_{i=1}^{k+1}}\: \stc^m_{k-1}(a_1, \dots, \widehat{a_i}, \dots,
  a_{k+1})$.
\end{proposition}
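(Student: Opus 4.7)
My plan is to prove the identity by induction on $m$. The base case $m = k = 1$ is immediate: $\stc^1_1(a_1, a_2) = [a_1, a_2]$ has boundary $\{a_1\} + \{a_2\} = \stc^1_0(a_1) + \stc^1_0(a_2)$, matching the right-hand side.

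For the inductive step with $m \ge 2$, I first observe that if one adopts the convention $\stc^{m-1}_m = 0$ (consistent with the case $k > m$ of the definition), the recursion
\[
\stc^m_k(a_1, \ldots, a_{k+1}) = \stc^{m-1}_{k-1}(a_1, \ldots, a_k) \times [a_k, a_{k+1}] + \stc^{m-1}_k(a_1, \ldots, a_{k+1}) \times \{a_{k+1}\}
\]
holds uniformly for every $1 \le k \le m$, so the ``generic'' case $k < m$ and the ``top'' case $k = m$ can be treated together. Applying $\partial$ via the Leibniz rule and the inductive hypothesis on $\partial \stc^{m-1}_{k-1}(a_1, \ldots, a_k)$ and $\partial \stc^{m-1}_k(a_1, \ldots, a_{k+1})$ (the latter being trivially $0$ when $k = m$, the former when $k = 1$), I will split the left-hand side into summands of three flavours: chains multiplied by $[a_k, a_{k+1}]$, by $\{a_k\}$, or by $\{a_{k+1}\}$. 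I then expand the right-hand side analogously by applying the recursion to each $\stc^m_{k-1}(a_1, \ldots, \widehat{a_i}, \ldots, a_{k+1})$, distinguishing three cases according to whether the last two entries of the surviving subsequence are $(a_k, a_{k+1})$ (for $i \le k-1$), $(a_{k-1}, a_{k+1})$ (for $i = k$), or $(a_{k-1}, a_k)$ (for $i = k+1$).

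The heart of the argument is the $\ZZ_2$ identity $[a_{k-1}, a_{k+1}] = [a_{k-1}, a_k] + [a_k, a_{k+1}]$, which I will use to decompose the anomalous interval appearing in the $i = k$ expansion. Combined with the $i = k+1$ contribution, the two resulting copies of $\stc^{m-1}_{k-2}(a_1, \ldots, a_{k-1}) \times [a_{k-1}, a_k]$ cancel over $\ZZ_2$, and the surviving copy $\stc^{m-1}_{k-2}(a_1, \ldots, a_{k-1}) \times [a_k, a_{k+1}]$ supplies exactly the missing $i = k$ term of the $[a_k, a_{k+1}]$-family on the left. A parallel $\ZZ_2$ cancellation (for $k < m$, between the middle Leibniz $\{a_{k+1}\}$-piece and the $i = k+1$ term produced by the inductive expansion) matches up the $\{a_{k+1}\}$-summands; in the borderline case $k = m$, the same reconciliation reduces to the auxiliary identity $\sum_{i=1}^{m+1} \stc^{m-1}_{m-1}(a_1, \ldots, \widehat{a_i}, \ldots, a_{m+1}) = 0$, which follows by applying the same $\ZZ_2$ interval splitting to the explicit product form $\stc^{m-1}_{m-1}(b_1, \ldots, b_m) = [b_1,b_2] \times \cdots \times [b_{m-1}, b_m]$ and telescoping.

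The main obstacle will be the bookkeeping of these matches and cancellations, particularly the asymmetric bifurcation at the $i = k$ boundary case where the recursion inserts the ``wrong'' interval $[a_{k-1}, a_{k+1}]$; once this is resolved through the $\ZZ_2$ identity, the remaining terms align directly by definition, and the induction closes. The edge cases $k = 1$ and $k = m$ require no separate machinery beyond the conventions $\partial \stc^{m-1}_0 = 0$ and $\stc^{m-1}_m = 0$ noted above, which simply make some of the sums vacuous.
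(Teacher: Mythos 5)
Your proposal is correct and follows essentially the same route as the paper's proof: induction on $m$, the Leibniz rule applied to the recursive definition, the $\ZZ_2$ interval identity $[a_{k-1}, a_{k+1}] = [a_{k-1}, a_k] + [a_k, a_{k+1}]$ to reconcile the $i=k$ and $i=k+1$ terms, and the auxiliary identity $\sum_{i=1}^{m+1} \stc^{m-1}_{m-1}(a_1, \dots, \widehat{a_i}, \dots, a_{m+1}) = 0$ (the paper's Lemma~\ref{l:boundary}, proved by exactly the telescoping induction you sketch). Your observation that the convention $\stc^{m-1}_m = 0$ lets the recursion, and hence the entire inductive step, run uniformly for $1 \le k \le m$ is a modest organizational streamlining over the paper, which treats $k=m$ and $k<m$ as explicitly separate cases, but the underlying algebra is identical.
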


The proof is a somewhat lengthy, but straight-forward, 
calculation that takes up most of this section. 
We set up an induction on $m$ by using the recursive
definition of $\stc^m_k$ (for $k<m$) or by applying the product rule
after singling out the factor $[a_m,a_{m+1}]$ (for $k=m$). One
important trick is to handle the factors $[a_{i-1},a_{i+1}]$ arising
from $\stc^m_{k-1}(a_1, \dots, \widehat{a_i}, \dots, a_{k+1})$ using
the identity $[a_{i-1},a_{i+1}] = [a_{i-1},a_{i}] + [a_{i},a_{i+1}]$
between $1$-chains. Our first step is to establish a simpler identity.

\begin{lemma}\label{l:boundary}
For any $m \geq 1$ and $1 \le a_1 < a_2 < \ldots < a_{m+2} \leq n$ we have
\begin{equation} \label{eq:d+2}
  \sum_{i=1}^{m+2}\stc_m^m(a_1, \dots, \widehat{a_i},\dots, a_{m+2}) = 0.
\end{equation}
\end{lemma}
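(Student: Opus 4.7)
The plan is a direct bookkeeping calculation: each term $\stc^m_m(a_1, \dots, \widehat{a_i}, \dots, a_{m+2})$ is, by the definition of $\stc^m_m$, an $m$-fold product of $1$-cells of the form $[b_j, b_{j+1}]$ whose endpoints run through the list $(a_1, \dots, \widehat{a_i}, \dots, a_{m+2})$. My goal is to expand every such product using the splitting identity $[a_{i-1}, a_{i+1}] = [a_{i-1}, a_i] + [a_i, a_{i+1}]$ and then check that every resulting summand appears exactly twice, so the total vanishes over $\ZZ_2$.

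First I would introduce the relevant building blocks. For $j \in \{1, \dots, m+1\}$, let
\[
P_j \eqdef [a_1, a_2] \times [a_2, a_3] \times \cdots \times [a_{m+1}, a_{m+2}] \quad\text{with the factor } [a_j, a_{j+1}] \text{ deleted}.
\]
By inspection, $\stc^m_m(\widehat{a_1}, a_2, \dots, a_{m+2}) = P_1$ and $\stc^m_m(a_1, \dots, a_{m+1}, \widehat{a_{m+2}}) = P_{m+1}$. For $2 \le i \le m+1$, the term $\stc^m_m(a_1, \dots, \widehat{a_i}, \dots, a_{m+2})$ has exactly one ``non-consecutive'' factor, namely $[a_{i-1}, a_{i+1}]$ sitting at position $i-1$, while every other factor is of the form $[a_k, a_{k+1}]$ with consecutive endpoints. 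Applying $[a_{i-1}, a_{i+1}] = [a_{i-1}, a_i] + [a_i, a_{i+1}]$ and distributing the product expresses this term as
\[
\stc^m_m(a_1, \dots, \widehat{a_i}, \dots, a_{m+2}) = P_i + P_{i-1},
\]
since the first choice recovers the full consecutive sequence with $[a_i, a_{i+1}]$ omitted, while the second recovers it with $[a_{i-1}, a_i]$ omitted.

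Summing over all indices,
\[
\sum_{i=1}^{m+2} \stc^m_m(a_1, \dots, \widehat{a_i}, \dots, a_{m+2}) = P_1 + P_{m+1} + \sum_{i=2}^{m+1}\pth{P_i + P_{i-1}},
\]
in which each $P_j$ for $j \in \{1, \dots, m+1\}$ appears exactly twice: $P_1$ once from the $i=1$ term and once from the $i=2$ right-summand; $P_{m+1}$ once from the $i=m+2$ term and once from the $i=m+1$ left-summand; and for $2 \le j \le m$, once from the $i=j$ left-summand and once from the $i=j+1$ right-summand. Since the homology is taken with $\ZZ_2$ coefficients, the sum vanishes.

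The main obstacle, if any, is purely notational: handling the boundary indices $i=1$ and $i=m+2$ uniformly with the interior splitting, and keeping track of which position the ``skipped'' factor occupies after the index shift. No analytic input beyond the product/distributive conventions already set up in Section~\ref{grid section} is needed.
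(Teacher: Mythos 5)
Your proof is correct, and it takes a genuinely different route than the paper. The paper proves Lemma~\ref{l:boundary} by induction on $m$: it isolates the last factor of each $\stc^m_m$ term using the recursive definition, applies $[a_m,a_{m+2}]+[a_m,a_{m+1}]=[a_{m+1},a_{m+2}]$ to two of the $m+2$ summands, and reduces to the statement in dimension $m-1$. You instead give a direct, non-inductive telescoping argument: you expand every term into the building blocks $P_1,\dots,P_{m+1}$ using the same splitting identity $[a_{i-1},a_{i+1}]=[a_{i-1},a_i]+[a_i,a_{i+1}]$, and observe that each $P_j$ occurs exactly twice. Both hinges are the same (the $1$-chain splitting identity and $\ZZ_2$ cancellation), but your version makes the global cancellation explicit rather than pushing it down a dimension, which is arguably more transparent. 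One small point worth making explicit if you write this up: when you delete the $j$th factor from the $(m+1)$-fold product to form $P_j$, the remaining $m$ factors are reindexed to positions $1,\dots,m$ so that $P_j$ is honestly an $m$-cell of $G[n]^m$; this reindexing is exactly what makes your identifications $\stc^m_m(a_1,\dots,\widehat{a_i},\dots,a_{m+2})=P_i+P_{i-1}$ go through, and you do track it correctly (the split factor sits at position $i-1$, matching where the gap appears in $P_i$ and $P_{i-1}$).
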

\begin{proof}
For $m=1$, this is a reformulation of $[a_1,a_3]= [a_1,a_2] +
[a_2,a_3]$ which holds by definition whenever $a_1, a_2$ and $a_3$ are distinct. 
For $m \ge 2$, we have
\[\begin{array}{ccl}
\sum\limits_{i=1}^{m+2} \stc_m^m(a_1, \ldots, \widehat{a_i},\ldots, a_{m+2}) & = &
\pth{{\sum\limits_{i=1}^{m}}\  \stc_{m-1}^{m-1} (a_1, \ldots, \widehat{a_i}, \ldots, 
a_{m+1})  \times [a_{m+1}, a_{m+2}] } \\[1ex]
&&  \qquad +\stc^{m-1}_{m-1}(a_1, \dots, a_m )\times [a_{m}, a_{m+2}] \\[1ex]
&& \qquad + \stc^{m-1}_{m-1}(a_1, \dots, a_m )\times [a_{m}, a_{m+1}]\\[2ex]
&=& \pth{ {\sum\limits_{i=1}^{m+1}}\stc_{m-1}^{m-1} 
(a_1, \ldots, \widehat{a_i}, \ldots, a_{m+1}) } \times [a_{m+1}, a_{m+2}],
\end{array}\]
since $[a_{m}, a_{m+2}]+[a_{m}, a_{m+1}]= [a_{m+1}, a_{m+2}]$ (here we
use that the $a_i$ are distinct). The statement therefore
follows by induction on $m$.
\end{proof}

Let us return to the proof of the claim that stair convex chains behave
as simplices for the boundary operator on $G[n]^m$.

\begin{proof}[Proof of Proposition~\ref{p:cubical-simplices}]
  The case $m=1$ follows directly from the definition of boundary
  operator.

\subsubsection*{Case $m=k\geq 2$}\hfill\par
Here, we proceed by induction on $m$, and assume that the statement
holds for $m-1$:
  \[\partial \stc^{m-1}_{m-1}(a_1, \dots, a_{m}) = 
  \sum_{i=1}^{m} \stc^{m-1}_{m-2}(a_1, \dots, \widehat{a_i},
  \dots, a_{m}).\]
From the definition of $\partial$ we get 
  \[\begin{aligned}
  & \ \partial \stc^m_m(a_1, \dots, a_{m+1})\\
  = & \ \  \partial \pth{\stc^{m-1}_{m-1}(a_1, \dots, a_m)\times [a_m, a_{m+1}]}\\
  = & \ \underbrace{\pth{\partial \stc^{m-1}_{m-1}(a_1, \dots, a_m)} 
  \times [a_m, a_{m+1}]}_\alpha + \underbrace{\stc^{m-1}_{m-1}(a_1, \dots, a_m)
  \times \pth{\{a_m\} + \{a_{m+1}\}}}_\beta
  \end{aligned}
  \]
We then use the induction hypothesis to rewrite $\alpha$ as
  \[ \alpha =  \sum_{i=1}^{m} \underbrace{\stc^{m-1}_{m-2}(a_1, \dots, \widehat{a_i},
    \dots, a_{m}) \times [a_m, a_{m+1}]}_{\alpha_i}.\]
  and use Lemma~\ref{l:boundary} to partially expand $\beta$ into
  \[ \beta = \underbrace{\stc^{m-1}_{m-1}(a_1, \dots, a_m) \times \{a_{m}\}}_{\beta_0} + \sum_{i=1}^m\underbrace{\stc^{m-1}_{m-1}(a_1, \dots,
    \widehat{a_i}, \dots, a_{m+1}) \times \{a_{m+1}\}}_{\beta_i}\]
  For $i \le m-1$, we have $\alpha_i + \beta_i = \stc^{m}_{m-1}(a_1,
  \dots, \widehat{a_i}, \dots, a_{m+1})$ by the recursive definition
  of~$\stc^{m}_{m-1}$. For the same reason, for any $a \in [n]$,
  \begin{equation}\label{eq:proof boundary stc}
    \begin{aligned}
      \stc^{m}_{m-1}(a_1, \dots, a_{m-1},a) = & \ 
      \stc^{m-1}_{m-2}(a_1, \dots, a_{m-1}) \times [a_{m-1},a] \\
          & \ + \stc^{m-1}_{m-1}(a_1, \dots, a_{m-1},a) \times \{a\}.
    \end{aligned}
  \end{equation}
  Since the $a_i$ are pairwise distinct, we have $[a_m,a_{m+1}] =
  [a_{m-1},a_m] + [a_{m-1},a_{m+1}]$. 
  Using Identity~\eqref{eq:proof boundary stc} once with $a=a_m$ 
  and once with $a=a_{m+1}$, we obtain
  \[ \alpha_m = \stc^{m}_{m-1}(a_1, \dots,a_m) + \beta_0 + 
  \stc^{m}_{m-1}(a_1, \dots, a_{m-1},a_{m+1}) + \beta_m.\]
  Altogether,
  \[ \partial \stc^m_m(a_1, \dots,
  a_{m+1}) = \alpha+\beta = \sum_{i=1}^{m+1} \stc^{m}_{m-1}(a_1, \dots,
  \widehat{a_i}, \dots, a_{m+1})\]
  as claimed.

\subsubsection*{General case}\hfill\par

We now prove the general case by induction on $m$.  
So assume the statement holds for $m-1$ and consider $1 \le k \le m$. 
We already handled the case $k=m$, so let us consider the case $k<m$, 
for which we can use the recursive definition of $\stc^m_k$:
  \[\stc^m_k(a_1, \ldots, a_{k+1}) = \stc^{m-1}_{k-1}(a_1, \ldots,
  a_k) \times [a_k,a_{k+1}]+\ \stc^{m-1}_k(a_1, \ldots, a_{k+1})
  \times \{a_{k+1}\}.\]
  We thus have
  \[\begin{aligned}
  \partial \stc^m_k(a_1, \ldots, a_{k+1})  = & 
  \pth{\partial \stc^{m-1}_{k-1}(a_1, \ldots, a_k)} \times [a_k,a_{k+1}]\\
  & + \stc^{m-1}_{k-1}(a_1, \ldots, a_k) \times \pth{\{a_k\} + \{a_{k+1}\}}\\
  & + \pth{\partial \stc^{m-1}_k(a_1, \ldots, a_{k+1})} \times \{a_{k+1}\}
  \end{aligned}\]
  Using the induction hypothesis for $\stc^{m-1}_{k-1}$ and
  $\stc^{m-1}_{k}$, we obtain
  \[\begin{aligned}
  \partial \stc^m_k(a_1, \ldots, a_{k+1}) = & \ \sum_{i=1}^{k} 
  \stc^{m-1}_{k-2}(a_1, \dots, \widehat{a_i}, \dots,
  a_{k})\times [a_k,a_{k+1}] \\
  & + \sum_{i=1}^{k+1} \stc^{m-1}_{k-1}(a_1, \dots, \widehat{a_i}, \dots,
  a_{k+1}) \times \{a_{k+1}\}\\
  & + \stc^{m-1}_{k-1}(a_1, \ldots, a_k) \times \pth{\{a_k\} + \{a_{k+1}\}}\\
  = & \ \sum_{i=1}^{k} \underbrace{\stc^{m-1}_{k-2}(a_1, \dots, \widehat{a_i}, \dots,
  a_{k})\times [a_k,a_{k+1}]}_{\alpha_i} \\
  & + \sum_{i=1}^{k} \underbrace{\stc^{m-1}_{k-1}(a_1, \dots, \widehat{a_i}, \dots,
  a_{k+1}) \times \{a_{k+1}\}}_{\beta_i}\\
  & + \underbrace{\stc^{m-1}_{k-1}(a_1, \ldots, a_k) \times \{a_k\}}_{\beta_0}\\
  \end{aligned}\]
  For $i \le k-1$, we recognize the recursive definition of~$\stc^{m}_{k-1}$ in
  \[ \alpha_i + \beta_i = \stc^{m}_{k-1}(a_1, \dots, \widehat{a_i}, \dots,
  a_{k+1}).\]
  Since the elements $a_1, \ldots, a_{k+1}$ are pairwise distinct, 
  $[a_k,a_{k+1}] = [a_{k-1},a_k] + [a_{k-1},a_{k+1}]$ and we can split
  $\alpha_k$ and use the recursive definition of~$\stc^{m}_{k-1}$ on each part:
  \[ \begin{aligned}
    \alpha_k = & \ \stc^{m-1}_{k-2}(a_1, \dots, a_{k-1}) \times 
    [a_{k-1},a_{k+1}] + \stc^{m-1}_{k-2}(a_1, \dots, a_{k-1})\times 
    [a_{k-1},a_{k}]\\
    = & \ \pth{\stc^{m}_{k-1}(a_1, \dots, a_{k-1},a_{k+1}) + \beta_k} +  
    \pth{\stc^{m}_{k-1}(a_1, \dots, a_{k-1},a_{k}) + \beta_0}.
  \end{aligned}\]
  Altogether,
  \[ \stc^m_k(a_1, \ldots, a_{k+1})  = 
  \sum_{i=1}^{k+1}   \stc^{m}_{k-1}(a_1, \dots, \widehat{a_i}, \dots, a_{k+1}),\]
  as claimed.
\end{proof}

Finally, we are now in position to prove the ``Picasso Lemma''.

\begin{proof}[Proof of Lemma \ref{l:K_generic_grid_minor}]
Label the vertices of $K$ as $v_1, v_2, \dots, v_n$. 
For every simplex $\{v_{i_1}, \dots, v_{i_{k+1}}\} \in K$ 
with $1\leq i_1 < i_2 <\cdots < i_{k+1}\leq n$, we define:
\[g(\{v_{i_1}, \dots, v_{i_{k+1}}\}) \;  \coloneqq \; \stc^m_k(i_1, \dots, i_{k+1}). \]
We extend $g$ linearly into a map 
$g_\bullet \colon  C_\bullet(K) \to C_\bullet(G[n]^m)$. 
Proposition~\ref{p:cubical-simplices} guarantees that $g_\bullet$ 
commutes with the boundary operator, making it a valid chain map. 
Furthermore, since every vertex $v_a\in V(K)$ is mapped to a 
single vertex $\{a\}\times \cdots \times \{a\}\in G[n]^m$, the map is nontrivial. 

To show that $g_\bullet$ is generic, let $\sigma$ and $\tau$ be a pair of 
disjoint simplices in $K$, and consider arbitrary cells 
$\sigma' \in \supp(g_\bullet(\sigma))$ and $\tau' \in \supp(g_\bullet(\tau))$. 
Since $m > \mu(K)$ (implies $\dim \sigma + \dim \tau < m$), 
it follows that $\dim \sigma' + \dim \tau' < m$. 
By Observation~\ref{o:hyperplane}, there is at least one coordinate, $x_i$, 
that is constant for both cells. 
However, they cannot share the same constant value in this coordinate 
(i.e., $x_i = a$ is impossible for both), because Corollary~\ref{c:hyperplane} 
would imply that both $\sigma$ and $\tau$ contain the vertex $v_a$, 
contradicting their disjointness. 
Thus, the pair $\{g_\bullet(\sigma), g_\bullet(\tau)\}$ is generic.
\end{proof}

\begin{remark*} The reader may note (as can be formalized using \cite[Lemma 1.4]{stairConv}) that mapping simplices to stair convex chains generated by points on the main diagonal of the grid complex closely resembles the canonical geometric realization of a simplicial complex via the moment curve. \end{remark*}

\section{Wrapping up}
\label{s:wrapup}

We now have all the ingredients to prove Theorems~\ref{t:weak colors} and~\ref{t:basic_propogation}. We start with our weak colorful Helly theorem, which we restate here for convenience.

\begin{customthm}{\ref{t:weak colors}}[Weak colorful Helly theorem]
For any finite simplicial complex $K$ and integers $b\geq 1$ and $m> \mu(K)$, there exists an integer $t = t(b,K,m)$ with the following property: If $\FF = \FF_1 \sqcup \cdots \sqcup \FF_m$ is an $m$-colored $(K,b)$-free cover where each color class has size~$t$ and every colorful subfamily has nonempty intersection, then $\FF$ contains some $2m-\mu(K)$ members with nonempty intersection.
\end{customthm}

\begin{proof}
Let $n$ be the number of vertices in $K$ and let $d$ be the dimension of $K$.
We will show that the theorem holds with $t \leq t(b,d,m,n)$ as given by Lemma~\ref{l:colored constrained}.

To this end, let $\FF$ be an $m$-colored $(K,b)$-free cover where 
each color class has size~$t = t(b,d,m,n)$ and 
every colorful subfamily has nonempty intersection. 
Then $\FF$ satisfies the conditions of Lemma~\ref{l:colored constrained}, and so
there exists a nontrivial chain map 
\[f_{\bullet} : C_{\bullet}( Y ) \to C_{\bullet}(\univ)\]
that is constrained by~$\FF$, 
where $Y$ denotes the $d$-skeleton of $G[n]^m$. 
This means there is a subgrid $\Gamma : V(G[n]^m) \to V(\grid_\FF)$ such that 
$\supp f_\bullet(\sigma)\subset 
\bigcap \psi\pth{\Gamma_{\bullet}(\sigma)}$ for every $\sigma \in Y$.

Since $m > \mu(K)$ the chain map $f_\bullet$ also satisfies the conditions of 
Corollary~\ref{t:cubical_and_excluded} (where $Y$ still denotes the $d$-skeleton of $G[n]^m$). 
Therefore we can find two disjoint cells $\sigma$ and $\tau$ in $Y$ 
which satisfy:
\begin{enumerate}[(i)]
\item[(i)] $\sigma$ and $\tau$ are not contained in a common axis-parallel hyperplane.
\item[(ii)] The supports of $f_\bullet(\sigma)$ and $f_\bullet(\tau)$ overlap. 
\end{enumerate}
Since $f_\bullet$ is constrained by $\FF$, 
condition~(ii) implies that there is a point contained in every member of the union
$\psi\pth{\Gamma_\bullet(\sigma)} \cup  \psi\pth{{\Gamma_\bullet(\tau)}}$. 
From~(i) and Claim \ref{c:no trick}~(iii) it follows that the colorful families
$\psi\pth{\Gamma_\bullet(\sigma)}$ and $\psi\pth{{\Gamma_\bullet(\tau)}}$ are disjoint. Finally, we compute the size of this union:
\begin{align*}
|\psi\big(\Gamma_\bullet(\sigma)\big) \cup
\psi\big(\Gamma_\bullet(\tau) \big)| &  =  |\psi\big( \Gamma_\bullet(\sigma) \big)| + |\psi \big(\Gamma_\bullet(\tau)\big)|  \\
  & =  (m-\dim \sigma) + (m-\dim \tau)   \ge 2m-\mu(K).
\end{align*}
(Note that the last inequality uses that $\sigma$ and $\tau$ are disjoint.)
  \qedhere
\end{proof}

Last, we spell out the use of supersaturation outlined in Section~\ref{subsec:strategy} to prove stepping-up theorem using our weak colorful Helly theorem. 

Let us recall some (standard) terminology on hypergraphs. An \emph{$m$-uniform hypergraph} is a pair $H=(V,E)$ where $V=V(H)$ is a finite set of vertices and $E=E(H)\subset \binom{V}{m}$ is the edge set. An $m$-uniform hypergraph is
\emph{$m$-partite} if the vertex set can be partitioned into disjoint sets
(parts) $V(H) = V_1 \cup \cdots \cup V_m$ such that every
edge contains exactly one vertex from each $V_i$. Given integers
$m\geq 2$ and $t \geq 1$, we let $K^m(t)$ denote the complete
$m$-partite $m$-uniform hypergraph with parts $V_1, \dots, V_m$
where $|V_i|=t$. That is, the edge set of $K^m(t)$ consists of all
$m$-tuples of $V_1 \cup \cdots \cup V_m$ that contain exactly one
element from each~$V_i$. 

A hypergraph $H$ \emph{contains} a hypergraph $H'$ if there is an
injection $f: V(H') \to V(H)$ such that for every $e' \in E(H')$,
$f(e') \in E(H)$. (In particular, we do \emph{not} require that $H'$ is
an \emph{induced} sub-hypergraph of $H$.) We use the following \emph{supersaturation}
theorem of Erd\H{o}s and Simonovits:

\begin{theorem*}[{Erd{\H os}-Simonovits~\cite[Corollary 2]{Erdos-Simonovits}}]\label{erdos simonovits}
  For any positive integers $m$ and $t$ and any $\varepsilon>0$ there
  exists $\rho = \rho(\varepsilon, m, t)>0$ such that any $m$-uniform
  hypergraph $H = (V,E)$ with $|E|\geq \varepsilon \binom{|V|}{m}$
  contains at least $\rho |V|^{mt}$ copies of
  $K^m(t)$.
\end{theorem*}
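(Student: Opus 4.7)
The plan is to deduce the supersaturation statement from the corresponding single-copy existence theorem (the hypergraph extension by Erd\H{o}s of the K\H{o}v\'ari--S\'os--Tur\'an theorem) via a standard averaging over random $s$-element subsets of $V(H)$. I will invoke as a black box the classical existence result: for every $m, t$ there exist constants $c = c(m,t) > 0$ and $n_1 = n_1(m,t)$ such that any $m$-uniform hypergraph on $n \ge n_1$ vertices with at least $c \cdot n^{m - 1/t^{m-1}}$ edges contains \emph{one} copy of $K^m(t)$. This step is the substantive one and is proved by a direct double counting of pairs (edge, $t$-subset of each vertex class), but I treat it as given.

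First, I choose a threshold size $s = s(\varepsilon, m, t)$ large enough so that $\tfrac{\varepsilon}{2}\binom{s}{m} \ge c \cdot s^{m - 1/t^{m-1}}$ and $s \ge n_1$; such an $s$ exists because $\binom{s}{m}$ grows like $s^m$ while $s^{m - 1/t^{m-1}}$ grows strictly slower. By the cited existence theorem, every $m$-uniform hypergraph on exactly $s$ vertices with at least $\tfrac{\varepsilon}{2}\binom{s}{m}$ edges contains at least one copy of $K^m(t)$.

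Next I run the averaging step. For a uniformly random $s$-subset $U \subseteq V$, each edge $e \in E$ satisfies $\Pr[e \subseteq U] = \binom{s}{m}/\binom{n}{m}$, and hence $\mathbb{E}[e(H[U])] = |E| \cdot \binom{s}{m}/\binom{n}{m} \ge \varepsilon \binom{s}{m}$. Applying Markov's inequality to the nonnegative variable $\binom{s}{m} - e(H[U])$, a fraction of at least $\varepsilon/2$ of all $s$-subsets $U$ are \emph{rich}, meaning $e(H[U]) \ge \tfrac{\varepsilon}{2}\binom{s}{m}$; and each such rich subset contains at least one copy of $K^m(t)$ by the previous paragraph.

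Finally I count incidences $(U, K)$ where $K$ is a copy of $K^m(t)$ contained in a rich $U$. The number of such pairs is at least $\tfrac{\varepsilon}{2}\binom{n}{s}$, while every fixed copy $K$ in $H$ lies in exactly $\binom{n - mt}{s - mt}$ subsets $U$. Dividing yields at least
\[
\frac{(\varepsilon/2)\binom{n}{s}}{\binom{n - mt}{s - mt}} \;=\; \frac{\varepsilon}{2} \cdot \frac{\binom{n}{mt}}{\binom{s}{mt}}
\]
distinct copies of $K^m(t)$ in $H$; since $\binom{s}{mt}$ depends only on $\varepsilon, m, t$ while $\binom{n}{mt} \ge n^{mt}/(mt)!$ for $n \ge mt$, this gives the claimed bound with $\rho = \varepsilon / (2(mt)! \binom{s}{mt})$. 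The main obstacle is the one-copy existence theorem itself, whose proof is a genuinely nontrivial counting argument; the averaging and multiplicity steps are routine manipulations of binomial coefficients once it is in hand.
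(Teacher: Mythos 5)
The paper does not prove this statement: it is imported as a black box, citing Erd\H{o}s and Simonovits directly. So there is no proof of record to compare against. That said, your argument is the standard supersaturation-from-extremal-number reduction (random subsets plus averaging), and it is essentially correct, and indeed essentially the argument one finds in the literature. The black-box input you use (the hypergraph K\H{o}v\'ari--S\'os--Tur\'an/Erd\H{o}s bound $\mathrm{ex}(n,K^m(t)) = O(n^{m-1/t^{m-1}})$) is exactly the right one; the choice of $s$, the expectation $\mathbb{E}[e(H[U])] \ge \varepsilon\binom{s}{m}$, the Markov step giving a $\ge \varepsilon/2$ fraction of rich $s$-subsets, and the double counting of incidences $(U,K)$ with the identity $\binom{n}{s}/\binom{n-mt}{s-mt} = \binom{n}{mt}/\binom{s}{mt}$ are all fine.

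Two small points. First, $\binom{n}{mt} \ge n^{mt}/(mt)!$ is false (take $n=mt$); what you want is, e.g., $\binom{n}{mt} \ge (n/mt)^{mt}$, which still gives a valid $\rho$ depending only on $\varepsilon, m, t$. Second, as stated the theorem cannot hold for all $|V|$: when $|V|$ is too small, a hypergraph meeting the density condition may have no copy of $K^m(t)$ at all while $\rho|V|^{mt} > 0$. Your proof accordingly only delivers the bound for $|V|$ above some threshold depending on $\varepsilon, m, t$; this matches how the paper invokes the result (``for any sufficiently large set system''), so it is not a defect of your argument, but it is worth saying explicitly.
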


For convenience, we restate our final theorem.

\begin{customthm}{\ref{t:basic_propogation}}[Stepping-up theorem]
Fix a simplicial complex $K$, a value $\delta \in (0,1]$, and integers $b\geq 1$ and $m > \mu(K)$. If $\mathcal{F}$  is a sufficiently large $(K,b)$-free cover such that $\pi_m(\mathcal{F})\geq \delta \binom{ | \mathcal{F} | }{m}$, then $\pi_{m+1}(\FF)\geq \gamma\binom{ | \mathcal{F} |}{m+1}$, where $\gamma > 0$ is a constant that depends only on $\delta$, $b$, $m$ and $K$. 
\end{customthm}

\begin{proof}
Fix a simplicial complex $K$ and integers $b\geq 1$ and $m > \mu(K)$. 
Let $t=t(b,K,m)$ be the constant from Theorem \ref{t:weak colors}. 
Consider some $(K,b)$-free cover $\FF$. 

For a subfamily $\FF' \subseteq \FF$, let $H[\FF']$ be the $m$-uniform hypergraph 
whose vertices are the members of $\FF'$ and whose edges are the 
$m$-tuples of $\FF'$ with nonempty intersection. 
By assumption, our hypergraph $H[\FF]$ contains at least  
$\delta\binom{|\FF|}{m}$ edges. 
By the Erd{\H os}--Simonovits theorem, 
for some constant $\rho>0$ depending only on $m$, $t$, and $\delta$, 
there are at  least $\rho\binom{|\FF|}{mt}$ distinct $mt$-element 
subfamilies $\FF'$ of $\FF$ such that $H[\FF']$ contains a copy of $K^m(t)$. 

Our choice of $t$ ensures that Theorem~\ref{t:weak colors} 
applies to every such subfamily $\FF'$, 
and therefore each $\FF'$ contributes 
some $2m-\mu(K)\ge m+1$ members with non-empty intersection. 
Each $(m+1)$-element subset of $\FF$ with non-empty intersection 
is contained in $\binom{|\FF|-(m+1)}{mt-(m+1)}$ distinct 
$(mt)$-tuples $\FF'$. There are  therefore at least
  \[\frac{\rho\binom{|\FF|}{mt}}{\binom{|\FF|-(m+1)}{mt-(m+1)}} = 
  \frac{\rho}{\binom{mt}{m+1}}\binom{|\FF|}{m+1}\]
$(m+1)$-tuples of $\FF$ with nonempty intersection. In other words,  
$\pi_{m+1}(\FF)$ is at least $\delta' \coloneqq \rho/\binom{mt}{m+1}$, 
where $\rho$ depends only on $m$, $t$, and $\delta$, 
that is on $m$,  $b$, $K$ and $\delta$. That concludes the proof.
\end{proof}

\section{Discussion and outlook}\label{s:outlook}

In conclusion, let us comment on some related problems and future works.

\subsubsection*{Topological generalizations of Helly's theorem}\hfill\par

Relaxing the convexity assumption in Helly's theorem is a classical question that goes back to the \emph{topological Helly theorem} \cite{helly1930, deb1970}, which asserts that Helly's theorem holds for finite good covers. The fractional Helly theorem and the $(p,q)$-theorem also hold for finite good covers~\cite{AKMM}. 

These results were, in turn, extended to families in which intersections have a bounded number of connected components, all contractible~\cite{matouvsek1997helly,acyclic,KM2008}, then to families in which intersections have bounded Betti numbers~\cite{hb17, rb20}. At their core, these extensions rely either on variants of the Nerve Theorem and the Vietoris-Begle Theorem~\cite{acyclic,KM2008} or on variants of the Van Kampen-Flores Theorem~\cite{matouvsek1997helly,hb17,rb20} (see also the survey~\cite{tophelly}). The latter works readily generalize to the setting of complexes with a forbidden homological minor, with the Van Kampen-Flores Theorem being one specific forbidden homological minor for $\RR^d$.

In particular, our $(p,q)$-theorem~\ref{t:pq for Kbfree} implies that for every $p\geq q \geq d+1$ and $b\geq 1$, the assertion of the $(p,q)$-theorem holds for any finite family of sets in $\RR^d$ such that the intersection of every subfamily has $i$th reduced Betti number less than $b$ for $0 \le i < \lceil d/2\rceil$. Note that the constant number of points given by the $(p,q)$-theorem in this case depends not only on $p$, $q$, and $d$, but also on $b$.

A natural question is whether the $(p,q)$-theorem generalizes to families of sets in manifolds. For surfaces, this is the case as we can improve~\cite[Theorem~2.6]{rb20}. Specifically, our $(p,q)$-theorem~\ref{t:pq for Kbfree} implies that for every $p\geq q \geq 3$ and $b\geq 1$, the assertion of the $(p,q)$-theorem holds for any finite family of sets on a compact, $2$-dimensional real manifold, such that the intersection of every subfamily has at most $b$ path-connected components. (Here the constant number of points depends not only on $p$, $q$, $d$ and $b$, but also on the surface considered.) We conjecture that it extends to higher-dimensional manifolds as well.

\begin{conjecture}\label{c:fbd}
    For every $d$-dimensional manifold $M$, there exists a simplicial complex $K_M$ with $\mu(K_M) = d$ such that for any triangulation $\univ$ of $M$, we have $K_M\not\prec_H \univ$.
\end{conjecture}

\noindent
We remark that some version of Conjecture~\ref{c:fbd} was very recently settled in the affirmative by Avvakumov, Bin and Goaoc~\cite{ABG}.

\subsubsection*{Upper-bound theorems}\hfill\par

In fractional Helly theorems, the main parameter one usually tries to optimize is the fractional Helly number. In that respect, our Theorem~\ref{t:fh} is very satisfying: the bound is independent of $b$ and is sharp for at least some forbidden homological minors. The next parameter that one may try to improve is the dependency between the size of the large intersecting family (the $\beta$) and the proportion of intersecting few-tuples (the $\alpha$). 

In the convex case, the original fractional Helly theorem of Katchalski and Liu~\cite{kl79} gave $\beta \ge \frac{\alpha}{d+1}$. They also showed that one may assume that $\beta(\alpha) \to 1$ when $\alpha \to 1$. Their proof uses the observation, which they dubbed the \emph{stepping-up lemma}, that for any $0 < \alpha \le 1$, any $d < k < \ell$ and any finite family $\FF$ of convex sets in $\RR^d$, 
\begin{equation}\label{eq:stepup convex}
  \pi_{k}(\FF) \ge \alpha\binom{ | \mathcal{F} |}{k} \quad \implies \quad  \pi_\ell(\FF) \ge \pth{1-\pth{1-\alpha}\binom{\ell-1}{k-1}}\binom{| \mathcal{F} |}\ell.
\end{equation}
In particular, if {$\alpha > 1-1/\binom{\ell-1}{k-1}$, then a positive proportion of the $\ell$-element subfamilies of $\FF$ have nonempty intersection.} Our stepping up theorem~\ref{t:basic_propogation} asserts that this propagation of positive densities holds more generally for $(K,b)$-free covers. 

For convex sets, the optimal dependency of $\beta$ on $\alpha$ was established by Kalai to $\beta \ge 1-(1-\alpha)^{\frac1{d+1}}$. Kalai's proof is based on the \emph{upper bound theorem} that he~\cite{kalai1984} and Eckhoff~\cite{eckhoff1985upper} established independently. The upper bound theorem asserts that for any family $\FF$ of $n$ convex sets in $\RR^d$, for any $k$ such that $d < k \le d+r$,
\begin{equation}\label{eq:ubt}
{\pi_{k}(\FF) > \sum_{i=0}^d \binom{n-r}{i}\binom{r}{k-i} \quad \implies \quad \pi_{d+r+1}(\FF) > 0,}
\end{equation}
This was recently extended to more general set systems in $\RR^2$ and in surfaces~\cite[Theorem~2.2 and~2.3]{KP}. Our fractional Helly theorem shows some form of upper bound theorem for $(K,b)$-free covers. We did not try to extract it as we believe it is quantitatively rather weak.

\begin{question}
 What is the best possible upper bound theorem for $(K,b)$-free covers?
\end{question}

\subsubsection*{Collapsibility and Lerayness}\hfill\par

The known proofs of the upper bound theorem~\eqref{eq:ubt} are typically more general than the geometric setting, and deal with certain properties of nerves of families of convex sets. The more elementary proofs apply to \emph{$d$-collapsible} complexes~\cite{eckhoff1985upper,kalai1984,alon1985simple}, that is complexes that can be reduced by discrete homotopy moves (called {\em collapses}) to a $d$-dimensional complex~\cite[Lemma~1]{wegner1975d}. The more general proof, also due to Kalai (a presentation can be found in Hell's PhD thesis~\cite[$\mathsection 5.2$]{hellthesis}), applies to {\em $d$-Leray} complexes, that is complexes in which all induced subcomplexes have vanishing homology in dimension $d$ and above. Theorem~\ref{t:basic_propogation} reveals that nerves of $(K,b)$-free covers enjoy some of the consequences of bounded Lerayness. More generally, we conjecture:

\begin{conjecture}
{For any simplicial complex $K$ and positive integer $b$, the nerve of any $(K,b)$-free cover is $L$-Leray, where $L$ depends only on $K$ and $b$.}
\end{conjecture}

\noindent
We note that $L$ must be at least $b(\mu(K)+2)-1$~\cite[Example 2]{hb17}. A first step in this direction was done by Holmsen et al.~\cite{holmsen2019nerves}, who considered the special case when $\FF = \{G_i\}$ is a family of connected subgraphs of a graph $G$ such that any nonempty intersection of members of $\FF$ is also connected. They conjectured that for any $r\geq 1$, if $K_{r+2}$ is not a {minor} of $G$, then the nerve of $\FF$ is $r$-Leray, and verified this for $r\leq 3$. It should be noted that in \cite{holmsen2019nerves} they deal with \emph{graph minors} and not homological minors.

\subsubsection*{Homological VC dimension}\hfill\par

Deeper connections between discrete geometry and topological combinatorics were suggested by Kalai and Meshulam in a program to develop a theory of \emph{homological VC dimension}. For a positive integer $h$ and a family $\FF$ of sets in $\mathbb{R}^d$, let us call the function
\[ \shat{h}{\FF}: \left\{ \begin{array}{rcl} \NN & \to & \NN \cup \{\infty\} \\ k &
    \mapsto & \sup \{ \tbeta_i\pth{\cap\GG} \colon \GG \subseteq \FF, 1 \le |\GG| \le k, 0 \le
    i < h\} \end{array}\right.\]
the ($h$th) \emph{homological shatter function} of $\FF$. The combination of two conjectures of Kalai and Meshulam suggests that families of open sets in $\RR^d$ with polynomial homological shatter function should satisfy a fractional Helly theorem:

\begin{conjecture}[{Following~\cite[Conjectures~6
        and~7]{kalai_conjectures}}]\label{c:combined conjecture}
  For any integer $0 \le m \le d$ and any constant $C>0$, there exists a
  function $\beta:(0,1) \to (0,1)$ such that the following holds. For
  any $\alpha >0$ and any sufficiently large {finite family $\FF$ of open sets} in
  $\RR^d$ with 
  $\shat{d}{\FF}(k) \le Ck^m$, if
  $\pi_{d+1}(\FF) \ge \alpha\binom{|\FF|}{d+1}$ then some $\beta(\alpha)|\FF|$ members
  of $\FF$ have a point in common.
\end{conjecture}

\noindent
A combination of Conjectures~6 and~7 from~\cite{kalai_conjectures} also appeared in~\cite[Conjecture~17]{Kalai-blog}. Here we took upon ourselves to dissociate the dimension $d$ of the space and the degree~$m$ of the polynomial bounding the homological shatter function. We give in to the temptation to generalize this conjecture to \emph{$K$-free covers}, that is to finite families  $\FF$ of (not necessarily induced) subcomplexes of a simplicial complex $\univ$ such that $K \not\prec_H \univ$. We propose the following more general conjecture:

\begin{conjecture}\label{c:extended conjecture}
  For any simplicial complex $K$, integer $t \ge 0$, and constant $C>0$, there exists a function $\beta:(0,1) \to (0,1)$ such that the following holds. For any $\alpha >0$ and any sufficiently large $K$-free cover $\FF$ with $\shat{\dim K}{\FF}(k) \le Ck^t$, if $\pi_{\mu(K)+1}(\FF) \ge \alpha\binom{|\FF|}{\mu(K)+1}$, then some $\beta(\alpha)|\FF|$ members of $\FF$ have a point in common.
\end{conjecture}

\noindent
In other words, we conjecture a fractional Helly theorem for $K$-free covers whose $(\dim K)th$ homological shatter function is bounded by a polynomial of degree $t$. Our Theorem~\ref{t:fh}, by taking $b > C$, resolves the $t=0$ case of this conjecture. 

The technique we developed to prove our stepping-up theorem~\ref{t:basic_propogation} actually shows that if Conjecture~\ref{c:extended conjecture} holds with some fractional Helly number $h$, then it holds with the fractional Helly number $\mu(K)+1$ as claimed. Indeed our construction of constrained chain map does not use the full power of the assumption that the colored family $\FF$ is $(K,b)$-free. Specifically, to obtain the conclusion of Lemma~\ref{l:colored constrained}, it suffices that the $m$-colored family $\FF$ of sub-complexes of~$\univ$ satisfies the following "sliding window" assumption:

\begin{quote}
{\em The $j$th reduced Betti number $\tilde{\beta}_j(\bigcap_{S \in \GG}S)$ is strictly less than $b$ for all $0\leq j < d$ and all colorful subfamilies $\GG \subseteq \FF$ with $|\GG|=m-j-1$.}
\end{quote}

\noindent
This change does not affect the bound on $t(b,d,m,n)$. This in turn leads to a stepping-up theorem with weaker assumptions:

\begin{theorem} \label{t:propogation_window}
Fix a simplicial complex $K$, a value $\delta \in (0,1]$, and integers $b\geq 1$ and $m > \mu(K)$.  If $\FF$ is a $K$-free cover such that 
\begin{enumerate}[(i)]
    \item[(i)] $\tilde{\beta}_j(\bigcap_{S\in \GG} S) < b$, for all $0\leq j < \dim K$ and $\GG \subseteq \FF$ with $|\GG|=m-j-1$, and
    \item[(ii)] $\pi_m(\FF) \geq \delta\binom{|\FF|}{m}$, 
\end{enumerate}
then $\pi_{m+1}(\FF) \geq \gamma\binom{|\FF|}{m+1}$, where $\gamma$ is a constant that depends only on $\delta$, $b$, $m$, and $K$. 
\end{theorem}

\noindent
Now, suppose the fractional Helly number of $\FF$ is bounded by some number $h$. Starting with the assumption that a positive fraction of the  $(\mu(K)+1)$-tuples of $\FF$ are intersecting, we can apply Theorem~\ref{t:propogation_window} with $m = \mu(K)+1, \ldots, h$ (and in each case $b = Cm^t$) to eventually get that a positive fraction of the  $h$-tuples of $\FF$ are intersecting, and then apply that fractional Helly theorem.

\bibliographystyle{plain}

\end{document}